\newcommand{\schummer}{Schummer~\cite{Sch00} }
\newcommand{\take}[1]{\colorbox{emphcol}{$#1$}}
\newcommand{\block}[1]{
\begin{center}
   {
      \fbox{
         \begin{minipage}[t]{0.92\textwidth}
           #1
         \end{minipage}
      }
   }
\end{center}
}
\newcommand{\stpath}[4]{
	\node[circle, fill=gray!50] (s) at (0,1) {$s$};
	\node[circle, fill=gray!50] (t) at (4,1) {$t$};
	\node[circle, fill=gray!50] (u) at (2,2) {\phantom{$u$} };
	\node[circle, fill=gray!50] (d) at (2,0) {\phantom{$d$}};
	\draw[thick] (s) -- (u) node[pos=.5,sloped,above] {$#1$} -- (t) node[pos=.5,sloped,above] {$#2$} -- (d) node[pos=.5,sloped,below] {$#4$} -- (s) node[pos=.5,sloped,below] {$#3$};
}
\newcommand{\upperpath}{\draw[line width=4pt] (s) -- (u) -- (t);}
\newcommand{\lowerpath}{\draw[line width=4pt] (s) -- (d) -- (t);}
\newcommand{\stpathtriangle}[3]{
	\node[circle, fill=gray!50] (s) at (0,1) {$s$};
	\node[circle, fill=gray!50] (t) at (4,1) {$t$};
	\node[circle, fill=gray!50] (u) at (2,2) {\phantom{$u$} };
	\draw[thick] (s) -- (u) node[pos=.5,sloped,above] {$#1$} -- (t) node[pos=.5,sloped,above] {$#2$} -- (s) node[pos=.5,sloped,below] {$#3$};
}
\newcommand{\upperpathtriangle}{\draw[line width=4pt] (s) -- (u) -- (t);}
\newcommand{\lowerpathtriangle}{\draw[line width=4pt] (s) -- (t);}
\newcommand{\stpathup}[4]{
	\begin{tikzpicture}[scale=.8]
	\stpath{#1}{#2}{#3}{#4}
	\upperpath
	\end{tikzpicture}
}
\newcommand{\stpathdown}[4]{
	\begin{tikzpicture}[scale=.8]
	\stpath{#1}{#2}{#3}{#4}
	\lowerpath
	\end{tikzpicture}
}
\newcommand{\stpathtriangleup}[3]{
	\begin{tikzpicture}[scale=.8]
	\stpathtriangle{#1}{#2}{#3}
	\upperpathtriangle
	\end{tikzpicture}
}
\newcommand{\stpathtriangledown}[3]{
	\begin{tikzpicture}[scale=.8]
	\stpathtriangle{#1}{#2}{#3}
	\lowerpathtriangle
	\end{tikzpicture}
}
\newtheorem{theorem}{Theorem}
\newtheorem{corollary}[theorem]{Corollary}
\newtheorem{lemma}[theorem]{Lemma}
\newtheorem{example}[theorem]{Example}
\newtheorem{definition}[theorem]{Definition}
\newtheorem{nclaim}[theorem]{Claim} 
\newtheorem{fact}[theorem]{Fact}
\newtheorem{obs}[theorem]{Observation}
\newtheorem*{remark}{Remark}
\definecolor{emphcol}{gray}{.75}
\newcommand{\takeem}[1]{\colorbox{emphcol}}
\newcommand{\linearmechanism}[1]{$#1$-linear mechanism}
\newcommand{\influence}[1]{i\textrm{-}influence(A_{#1},\theta)}
\newcommand{\partialpay}[1]{i\textrm{-}influence(p_{#1},\theta)}
\newcommand{\sbribeproof}{strongly bribeproof}
\newcommand{\argmin}{\operatornamewithlimits{arg\ min}}
\newcommand{\SC}{\operatorname{sum}}
\renewcommand{\Re}{\mathbb{R}}
\title{\bf Bribeproof  mechanisms for two-values domains}
\author[1]{Mat\'u\v{s} Mihal\'ak}
\author[2]{Paolo Penna}
\author[2]{Peter Widmayer}
\affil[1]{Department of Knowledge Engineering, Maastricht University, The Netherlands}
\affil[2]{Department of Computer Science, ETH Zurich, Switzerland}
\begin{document}
	
\maketitle
\begin{abstract}
\schummer introduced the concept of \emph{bribeproof} mechanism which, in a context where monetary transfer between agents is possible, requires that manipulations through bribes are ruled out.
Unfortunately, in many domains, the only bribeproof mechanisms are the trivial ones which return a \emph{fixed outcome}.

This work presents one of the few  constructions of non-trivial bribeproof mechanisms for this setting. 
Though the suggested construction applies to rather restricted  domains, the results obtained are tight: for several natural problems, the method yields the only possible bribeproof mechanism and no such mechanism is possible on more general domains.
\end{abstract}

\section{Introduction}
Strategyproof mechanisms guarantee that the agents never find it convenient to misreport their types, that is, truth-telling is a dominant strategy. Such mechanisms play a key role to cope with selfish behavior, and they received a lot of attention also when considering  protocols for optimally allocating resources that necessarily involve selfish entities \cite{NisRon99}. 
One of the critical issues with strategyproof mechanisms is that agents can still  manipulate the mechanism, and improve their utilities, by \emph{bribing} one another:
\begin{quote}
	An agent can offer money to another for misreporting her type and in this way the utility of both improves.
\end{quote}
The famous second-price auction provides a clear example of such an issue. If two agents are willing to pay $10$ and $9$ for an item, the one bidding $10$ wins and pays $9$. However, if before the auction starts the winner offers some money to the other agent for bidding a low value (say $1$), then both agents would be better off (now the winner pays only $1$ and the other agent gets some money).

The concept of \emph{bribeproof} mechanism \cite{Sch00}  strengthens strategyproofness by requiring that bribing another agent is also not beneficial.
The appeal of this notion is that it does not consider unreasonably large coalitions.\footnote{The notion of \emph{coalitional strategyproofness} requires the mechanism to be immune to manipulations by any group of agents. As already observed in \cite{Sch00,Wak13}, this notion turns out to be too restrictive as it rules out all but a few unreasonable mechanisms. Moreover, large coalitions would require all members to coordinate their actions.} Despite bribeproofness is apparently adding only  a minimal condition, this has a tremendous impact on what the mechanisms can do in general:
\begin{itemize}
	\item The class of strategyproof mechanisms is extremely rich and, among others, it includes VCG mechanisms which optimize the social welfare;
	\item In contrast, the class of bribeproof mechanisms consists of only \emph{trivial} mechanisms which output a \emph{fixed outcome}  \cite{Sch00,Miz03}.
\end{itemize}
That is to say, while strategyproofness by itself is not an obstacle to optimization, the only way to get bribeproofness is to ignore the agents types, which clashes with most optimization criteria. One example of such VCG mechanisms is for the path auction problem \cite{NisRon99} where we want to select the shortest path between two nodes of a given network, every edge is owned by an agent, and the cost of the edges are private. Selecting the shortest path means that we want the solution minimizing the sum of all agents' costs, that is, to optimize the social welfare. Clearly, any trivial mechanism which returns a fixed path  has no guarantee to find the shortest path.

\subsection{Our contribution}
Because the impossibility results on bribeproof mechanisms hold for unrestricted  or for ``sufficiently rich'' domains \cite{Sch00,Miz03}, we are interested in designing bribeproof mechanisms for \emph{restricted domains}. Specifically, we present a novel construction of bribeproof mechanisms for the following class of a \emph{two-values} problems. Every feasible solution corresponds to some \emph{amount of work} allocated to each agent, and every agent has a \emph{private cost} per unit of work which is either $L$ (low) or $H$ (high).\footnote{Throughout this work we adopt the terminology used by \cite{ArcTar01} in the context of procurement auctions, though these domains have been investigated earlier in the context of allocating identical goods, as well as for certain restricted combinatorial auctions. All the results apply to these problems as well 
	(see Appendix~\ref{sec:restricted CA}).
	}
Typically 
the amount of work allocated to the agents cannot be arbitrary, but it is rather determined by the ``combinatorial structure'' of the problem under consideration. For instance, in the path auction problem \cite{NisRon99}, the mechanism must select a path in a graph, and each agent owns one edge of the graph (see  Figure~\ref{fig:two-examples-path-auctions}). Selecting a path means allocating one unit of work to each agent in the path, and no work to all other agents.

\begin{figure}
	\centering
	\subfloat[][]{%
		\begin{tikzpicture}[scale=.8]
		\node[circle, fill=gray!50] (s) at (0,1) {$s$};
		\node[circle, fill=gray!50] (t) at (4,1) {$t$};
		\draw [-,thick] (s) .. controls (2,1.7) ..
		(t) node[pos=.5,sloped,above] {$\theta_1$} ;
		\draw [-,thick] (s) .. controls (2,0.3) ..
		(t) node[pos=.5,sloped,below] {$\theta_2$} ;
		\end{tikzpicture}
		\label{fig:parallel-links}}
	\hspace{2cm}
	\subfloat[][]{%
		\begin{tikzpicture}[scale=.8]
		\stpath{\theta_1}{\theta_2}{\theta_3}{\theta_4}
		\end{tikzpicture}
		\label{fig:sp-diamond}
	}
	\caption{Two instances of the path auction problem.}
	\label{fig:two-examples-path-auctions}
\end{figure}

\noindent
In a nutshell our results can be summarized as follows:
\begin{itemize}
	\item An extremely simple construction yields bribeproof mechanisms if the underlying algorithm satisfies certain monotonicity conditions (Section~\ref{sec:class-mechanisms}). 
	\item One application of the above result, is a class of bribeproof mechanisms optimizing the social welfare for  every \emph{binary} allocation problem, that is, whenever each agent is either selected or not selected (Section~\ref{sec:binary}). 
	\item These mechanisms actually \emph{characterize} the whole class of bribeproof mechanisms for certain problems, including the path auction one, and the boundary conditions for which such mechanisms exist (Section~\ref{sec:characterizations}).
	\item The positive result is more general as it can be applied to non-binary problems and to other optimization criteria (Section~\ref{sec:non-binary}). 
\end{itemize}\newcommand{\money}{\mathcal{M}}
More in detail, our mechanisms simply provide all agents the \emph{same} amount of money $\money$ for each unit of work that they get allocated (Definition~\ref{def:linear-mechanism}). Such mechanisms are bribeproof if certain monotonicity conditions hold (Theorem~\ref{th:wokload:bribeproof}).
Roughly speaking, these conditions relate the ``influence'' that an agent has on her own allocation  to the influence she has on the \emph{others'} allocation. In particular, by taking the special case $\money=\frac{L+H}{2}$ in our construction leads to the following natural sufficient condition (Corollary~\ref{cor:wokload:bribeproof:1/2}):
\begin{quote}
\emph{Bounded influence:} No agent can change the allocation of \emph{another} agent by more than the change caused to \emph{her own} allocation.
\end{quote}
For the class of \emph{binary allocations}, where the allocated work is $zero$ or $one$, 
this condition is nothing but \emph{non-bossiness}: no agent can change the allocation of the others, without changing her own allocation (Theorem~\ref{th:sbribeproof:equivalence:sbribeproof}). The main positive result here is that every problem in which one wants to minimize the weighted \emph{sum} of all agents' costs admits an exact strongly bribeproof mechanism (Theorem~\ref{th:binary-utilitarian}). 
Interestingly, our general construction provides both characterizations of bribeproof mechanisms as well as the boundary conditions for which such mechanisms exist in several problems:
\begin{itemize}
	\item For the path auction problem, our mechanism with $\money=\frac{L+H}{2}$ is essentially the only possible, and no mechanism exist on slightly more general domains (with three values, or heterogeneous two values), nor collusion-proof mechanisms for coalitions of three or more agents.
	\item For the $k$-items procurement auction, the mechanism with $\money=M$ is bribeproof on \emph{three values} domains $L$ (low), $M$(medium), $H$(high). This is the only mechanism for $k=1$ and no mechanism for \emph{four values} domains exist. 
\end{itemize}
We then turn our attention to problems with different objective function and non-binary allocations. Specifically, we consider minimizing the \emph{maximum} cost among the agents (note that this is different from welfare maximization which would minimize the \emph{sum} of all agents' costs). In the scheduling terminology, we aim at minimizing the \emph{makespan} on related machines \cite{ArcTar01}. In the fractional version, when each job can be divided among the machines, we get an exact bribeproof mechanism (Theorem~\ref{th:fractional-makespan}) since the problem is equivalent to allocating a single job (in a fractional way) and the bounded influence condition holds. On the contrary, when jobs cannot be divided \cite{ArcTar01}, we show that our method cannot give exact or even approximate bribeproof mechanisms. The existence of other mechanisms for this and other problems is an interesting open question. More in general, it would be interesting to obtain approximate mechanisms when the domain does not allow for exact ones.

\subsection{Related work}\label{sec:previous}
 \schummer introduced the notion of bribeproofness and proved that, on certain domains, the only bribeproof mechanisms are the \emph{trivial} mechanisms which return a \emph{fixed outcome}; \cite{Miz03} proved the same but under weaker assumptions. In simpler domains, bribeproof (or even collusion-proof) mechanisms can be obtained via \emph{take-it-or-leave-it}  prices \cite{GolHar05,GolVen13}: these mechanisms fix a price for each agent, who then  wins a copy of the item if bidding above this price, independently of what happens to the other agents.
 Note that our mechanisms are different from these mechanisms since in our setting we cannot treat  agents separately.

 Though strategyproofness is much less stringent and quite well understood, restricting the domain is also very common, as unrestricted domains are often unrealistic and impose unnecessary limitations (see e.g. \cite{Rob79,LavMuaNis03,DobNis15}).  In multi-dimensional
domains, minimizing the \emph{makespan} or \emph{min-max fairness} is not possible using strategyproof mechanisms  \cite{NisRon99,Gam07,KouVid07,MuaSch07,LavSwa09,AshItaDobLav12}, while for one-parameter domains  optimal solutions are possible \cite{ArcTar01,MuaSch07} also in polynomial time \cite{ChrKov08}. Our domains are at the intersection of one-parameter domains in \cite{ArcTar01} and the two-values domains in \cite{LavSwa09}, and they also appear in study of revenue of take-it-or-leave-it identical items auctions \cite{GolVen13}.  One-parameter domains have been  studied by \cite{Mye81} who characterized strategyproofness and obtained optimal-revenue mechanisms for selling a single item.

The strong limitations imposed by bribeproofness lead to the study of weaker or variants of this notion. \emph{Group strategyproofness}  assumes that the members of the coalitions can coordinate their reports but cannot exchange compensations (see e.g. \cite{Mou99,PouVid12,Muk12,Jua13}). The restriction to coalitions of size two is called \emph{pairwise strategyproofness} \cite{serizawa2006pairwise}, and it corresponds to strong bribeproofness when compensations between agents are not allowed.  
The class of \emph{deferred acceptance} mechanisms \cite{MilSeg14} satisfies (weakly) group strategyproofness\footnote{This condition relaxes group strategyproofness, by requiring that no coalition could deviate from truth-telling in a way that makes
	\emph{all} of its members strictly better off.}, at the price of significantly worse social welfare even in rather simple settings \cite{DutGkaRou14}.
 Mechanisms with \emph{verification}  \cite{PenVen12} are based on the assumption that it is possible to partially verify the agents types after the solution is computed. \emph{Collusive dominant-strategy truthfulness}  \cite{CheMic12} is based on the idea that the mechanism asks the agents to report also their coalitions, and it provides better performance for selling identical items. In the so-called \emph{single-peaked} domains, agents receive a variable amount of a divisible item, and they can bribe each other by transferring part of the item (with no money involved).  
  Interestingly enough, \cite{Wak13} characterizes bribeproofness in terms of a \emph{bounded impact} condition which is very similar to our bounded influence, despite the two settings being not equivalent.  
Finally, while \cite{SchGEB00} shows that bribeproofness is closely related to Pareto efficiency together with strategyproofness, 
\cite{ohseto2000strategy} proved that the latter two requirements cannot be achieved simultaneously in a setting involving finite ``small'' domains.

\subsection{Preliminaries}\label{sec:preliminaries}
There is a set $N=\{1,2,\ldots,n\}$ of $n \geq 2$ agents and a set $\mathcal{A}\subseteq \Re_+^n$ of feasible allocations, where each allocation $a\in \mathcal{A}$ is an $n$-dimensional vector $(a_1,\ldots,a_n)$ with $a_i$ being the amount of
work allocated to agent $i$. For each agent $i$, her cost for an allocation $a$ is
equal to 
\[
  a_i \cdot \theta_i,
\]
where $\theta_i\in \Re$ is some private number called the \emph{type} of this agent (her cost for a unit of work). 
Every type $\theta_i$ belongs to a publicly-known set $\Theta_i$ which is the domain of agent
$i$, and the agent can misreport her type $\theta_i$  to any $\hat \theta_i\in \Theta_i$. The cross-product $\Theta := \Theta_1\times\cdots\times \Theta_n$ is the types domain representing the possible type vectors that can be reported by the agents. 

A mechanism is a pair $(A,p)$ where $A:\Theta \rightarrow \mathcal{A}$ is an algorithm and $p:\Theta \rightarrow \Re^n$ is a suitable payment function. For any type vector $\hat \theta \in \Theta$  reported by the agents,  each agent $i$ receives $p_i(\hat \theta)$ units of money and $A_i(\hat \theta)$ units of work. 
 A mechanism $(A,p)$ is \textbf{bribeproof} if for all $\theta \in \Theta$,  all $i$ and $j$, and
   all $\hat \theta_i \in \Theta_i$ 
  \begin{align}
   & p_i(\theta) -  A_i(\theta)\cdot \theta_i &+ & & p_j(\theta) -  A_j(\theta)\cdot \theta_j & &\geq   \label{eq:bribeproof:breibee} \\ 
  & p_i(\hat \theta) -  A_i(\hat \theta)\cdot \theta_i &+ & & p_j(\hat \theta) -  A_j(\hat \theta)\cdot \theta_j & \nonumber
  \end{align}
where 
$
 \hat \theta = (\hat \theta_i,\theta_{-i}):=(\theta_1,\ldots,\theta_{i-1},\hat \theta_i, \theta_{i+1},\ldots,\theta_n) 
$
denotes the vector obtained by replacing the $i^{th}$ entry of $\theta$ with $\hat  \theta_i$. 
  
Inequality \eqref{eq:bribeproof:breibee} says that no
agent $j$ can bribe another agent $i$ with $b$ units of money to misreport her type so that  they both improve. 
By taking $i=j$ in the definition above, we obtain the
(weaker) 
notion of \textbf{strategyproof} mechanism, that is,  
$ p_i(\theta) -  A_i(\theta)\cdot \theta_i  \geq  p_i(\hat \theta_i,\theta_{-i}) -  A_i(\hat \theta_i,\theta_{-i})\cdot \theta_i,$ 
for all $\theta\in \Theta$, for all $i$, and for all $\hat \theta_i \in \Theta_i$.
Strong bribeproofness requires that no two agents can improve even if they \emph{jointly} misreporting their types (see \cite[page 184]{Sch00}). Let $(\hat \theta_i,\hat \theta_j,\theta_{-ij})$ denote the vector obtained by replacing the $i^{th}$ and the $j^{th}$ entry of $\theta$ with $\hat  \theta_i$ and $\hat \theta_j$, respectively.  A mechanism $(A,p)$ is \textbf{\sbribeproof} if inequality \eqref{eq:bribeproof:breibee} holds also for all 
$
\hat \theta = (\hat \theta_i,\hat \theta_j,\theta_{-ij})
$,
with  $\theta_i\in \Theta_i$, $\theta_j \in \Theta_j$, and $\theta\in \Theta$. 

A domain $\Theta$ is a \textbf{two-values domain} if there exist two constants $L$ and $H$ with $L<H$ such that 
 $
  \Theta_i = \{L,H\}
$
for all $i\in N$. More generally, for any ordered sequence of reals $w_1 < w_2 < \cdots < w_k$, we denote by $\Theta^{(w_1,w_2,\ldots,w_k)}$ the \textbf{$\mathbf{k}$-values} domain $\Theta$ such that
$
 \Theta_i = \{w_1,w_2,\ldots,w_k\}
$
for all $i\in N$. 
 We say that a mechanism is (strongly) bribeproof over a $k$-values domain if 
the corresponding condition \eqref{eq:bribeproof:breibee} holds for $\Theta$ being a $k$-values domain. 

\begin{example}[path auction and perfectly divisible good]\label{exa:path-auction}\label{exa:perfectly-divisible}
	In the \emph{path auction problem} instance in Figure~\ref{fig:two-examples-path-auctions}, the two feasible  allocations are  
	$(1,1,0,0)$ for the ``upper path'' and $(0,0,1,1)$ for the ``lower path''. The problem of allocating a single \emph{perfectly divisible} good among the agents corresponds to the set of feasible allocations consists of all vectors $a=(a_1,\ldots,a_n)$ such that 
	$
	a_i \geq 0 $ and $\sum_{i=1}^n a_i = 1$.
\end{example}

In a bribeproof mechanism the payments must depend only on the allocation:
\begin{fact}\label{fac:adj:pay}
	 We say that two type vectors $\theta'$ and $\theta''$ are $A$-equivalent 
	 if they differ in exactly one agent's type
	 and algorithm $A$ returns the same allocation in both cases. That is, 
	 $
	 	\theta''=(\theta''_i, \theta'_{-i})$ and $A(\theta'')=A(\theta').
	 $
  In a bribeproof mechanism $(A,p)$ the payment for two $A$-equivalent type vectors must be the
  same.
\end{fact}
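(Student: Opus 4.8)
The plan is to prove Fact~\ref{fac:adj:pay} by a direct application of the bribeproofness inequality \eqref{eq:bribeproof:breibee}, used in both directions. Suppose $\theta'$ and $\theta''$ are $A$-equivalent, so that $\theta''=(\theta''_i,\theta'_{-i})$ for some agent $i$, and $A(\theta')=A(\theta'')$. The goal is to show $p_j(\theta')=p_j(\theta'')$ for every agent $j$.

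First I would fix any agent $j\neq i$ and apply the bribeproof condition at the true type vector $\theta'$ with the deviating agent being $i$ (who reports $\theta''_i$ instead of $\theta'_i$), bribed by~$j$. Since $A(\theta')=A(\theta'')$, the work terms $A_i(\cdot)\cdot\theta'_i$ and $A_j(\cdot)\cdot\theta'_j$ are identical on the two sides of \eqref{eq:bribeproof:breibee}, so after cancelation the inequality reduces to $p_i(\theta')+p_j(\theta')\ge p_i(\theta'')+p_j(\theta'')$. Next I would apply bribeproofness again, this time at the true type vector $\theta''$, with agent $i$ deviating to report $\theta'_i$ (note $\theta'=(\theta'_i,\theta''_{-i})$ since $\theta'$ and $\theta''$ differ only in coordinate $i$), bribed by the same $j$; by the same cancelation this yields $p_i(\theta'')+p_j(\theta'')\ge p_i(\theta')+p_j(\theta')$. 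Combining the two inequalities gives $p_i(\theta')+p_j(\theta')=p_i(\theta'')+p_j(\theta'')$.

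To finish, I would extract the individual equalities from these summed ones. Taking $i=j$ in the argument above (i.e.\ using plain strategyproofness in both directions) already gives $p_i(\theta')=p_i(\theta'')$ directly, since then the sum contains only the term $p_i$. Substituting this back into the identity $p_i(\theta')+p_j(\theta')=p_i(\theta'')+p_j(\theta'')$ immediately yields $p_j(\theta')=p_j(\theta'')$ for every $j\neq i$ as well, which is the claim.

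I do not expect a serious obstacle here; the only point requiring a little care is making sure the two applications of \eqref{eq:bribeproof:breibee} are legitimate, i.e.\ that $\theta''_i\in\Theta_i$ and $\theta'_i\in\Theta_i$ (true by hypothesis, since both are reports from agent $i$'s domain) and that the roles of ``true type vector'' and ``misreport'' are correctly assigned in each direction — in particular that $\theta'$ and $\theta''$ are each obtainable from the other by a single-coordinate change, which is exactly the definition of $A$-equivalence. The cancelation of the allocation-dependent terms is the whole content of the argument and hinges only on $A(\theta')=A(\theta'')$.
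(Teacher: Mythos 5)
Your proof is correct and uses essentially the same idea as the paper: since $A(\theta')=A(\theta'')$, the work terms in \eqref{eq:bribeproof:breibee} cancel and bribeproofness applied in both directions forces the payments to agree. The paper states this as a one-line contradiction; your version is a slightly more careful direct write-up (handling $p_i$ via strategyproofness first, then $p_j$ for $j\neq i$), but there is no substantive difference in approach.
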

\begin{proof}
Suppose by way of contradiction that $p(\theta')\neq p(\theta'')$ and, without loss of generality, that  $p_j(\theta')>p_j(\theta'')$ for some agent $j$. Since $A(\theta')=A(\theta'')$, this violates bribeproofness \eqref{eq:bribeproof:breibee}. 
\end{proof}

\section{A class of bribeproof mechanisms}\label{sec:class-mechanisms}
The idea to obtain bribeproof mechanisms is to pay each agent
the same fixed amount for each unit of work she gets allocated.

\begin{definition}[linear mechanism]\label{def:linear-mechanism}
  A mechanism $(A,p)$ is a \linearmechanism{\lambda} if every agent $i$ receives a fixed payment $f_i$ plus $\lambda L + (1-\lambda)H$
   units of money for each unit of allocated work, where $\lambda \in [0,1]$. That is,
  \[
    p_i(\theta) = A_i(\theta) \cdot q^{(\lambda)} + f_i\ \ \ \ \mbox{ where } \ \ \ \ q^{(\lambda)} =\lambda L + (1-\lambda)H
  \]
  for all $i$ and for all $\theta \in \Theta$.
\end{definition}

\begin{remark}
Note that, in Definition~\ref{def:linear-mechanism},  we limit ourself to $q \in [L,H]$ because otherwise the mechanism would not be even strategyproof in general. Moreover, the constants $f_i$ can be used to rescale the payments without affecting bribeproofness. For instance, one can set each $f_i$ so that truthfully reporting agents are guaranteed a nonnegative utility, i.e., the mechanism satisfies voluntary participation or individual rationality.   
\end{remark}

In the following we define
\[
 \influence{k}:= A_k(L,\theta_{-i}) - A_k(H,\theta_{-i}).
\]
The  monotonicity condition for
strategyproofness \cite{ArcTar01,Mye81}  requires that the allocation of each agent is weakly decreasing
in her reported cost, that is, for all $i\in N$ and for all $\theta\in \Theta$
\begin{align}
 \label{eq:mon}
 \influence{i}&\geq 0 & \mbox{(monotonicity)}&.
\end{align}
We next show that a stronger condition suffices for bribeproofness.

\begin{theorem}\label{th:wokload:bribeproof}
  The \linearmechanism{\lambda} is bribeproof for a two-values domain if and only if algorithm $A$ satisfies the following conditions: for all $\theta\in \Theta^{(L,H)}$ and for all $i\in N$ condition \eqref{eq:mon} holds and, for all $\ell\in N$ with $\theta_\ell=L$ and for all $h\in N$ with $\theta_h=H$, 
  \begin{align}
   (1-\lambda)&\cdot\influence{i} & \geq & &(\lambda - 1)&\cdot\influence{\ell}, \label{eq:pmon:LL}\\ 
   (1-\lambda)&\cdot\influence{i} & \geq &  &\lambda\phantom{-1}&\cdot\influence{h}, \label{eq:pmon:LH}\\
   \lambda\phantom{-1}&\cdot\influence{i}     & \geq & &(1- \lambda)&\cdot\influence{\ell}, \label{eq:pmon:HL}\\
   \lambda\phantom{-1}&\cdot\influence{i}     & \geq & &-\lambda\phantom{-1}    &\cdot\influence{h}. \label{eq:pmon:HH}
  \end{align}
\end{theorem}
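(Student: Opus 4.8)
The plan is to unfold the bribeproofness definition \eqref{eq:bribeproof:breibee} for the \linearmechanism{\lambda} and reduce it to the four inequalities. The first observation is that for a linear mechanism the payment to agent $k$ is $A_k(\theta)\cdot q^{(\lambda)} + f_k$, so in the utility $p_k(\theta) - A_k(\theta)\cdot\theta_k$ the constant $f_k$ cancels on both sides of \eqref{eq:bribeproof:breibee}. What remains, after cancelling $f_i$ and $f_j$ and bringing everything to one side, is that bribeproofness is equivalent to
\[
 \bigl(A_i(\hat\theta)-A_i(\theta)\bigr)\cdot\bigl(q^{(\lambda)}-\theta_i\bigr)
 + \bigl(A_j(\hat\theta)-A_j(\theta)\bigr)\cdot\bigl(q^{(\lambda)}-\theta_j\bigr) \le 0
\]
for every $\theta$, every $i,j$, and every $\hat\theta_i\in\{L,H\}$, where $\hat\theta=(\hat\theta_i,\theta_{-i})$. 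Notice that the briber $j$ can be any agent; taking $j=i$ recovers the strategyproofness constraint, which is exactly \eqref{eq:mon} after noting $q^{(\lambda)}-L=(1-\lambda)(H-L)\ge 0$ and $q^{(\lambda)}-H=-\lambda(H-L)\le 0$ and that $A_i$ is monotone. So the real content is the case $j\neq i$.

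Next I would enumerate the cases by the reported type $\theta_i$ of the bribed agent and the type $\theta_j$ of the briber. If $\theta_i=\theta_j$ then changing $\hat\theta_i$ across $\{L,H\}$ in one direction is covered by the reverse direction, so without loss of generality we may take $\hat\theta_i\neq\theta_i$ and, writing $\delta:=H-L>0$, each of the two differences $A_k(\hat\theta)-A_k(\theta)$ equals $\pm\,\influence{k}$ depending on whether we move from $H$ to $L$ (giving $+\influence{k}$) or from $L$ to $H$ (giving $-\influence{k}$). Substituting $q^{(\lambda)}-L=(1-\lambda)\delta$ and $q^{(\lambda)}-H=-\lambda\delta$ and dividing through by $\delta>0$, the displayed inequality becomes exactly one of \eqref{eq:pmon:LL}--\eqref{eq:pmon:HH}: the case $(\theta_i,\theta_j)=(L,L)$ gives \eqref{eq:pmon:LL}, $(L,H)$ gives \eqref{eq:pmon:LH}, $(H,L)$ gives \eqref{eq:pmon:HL}, and $(H,H)$ gives \eqref{eq:pmon:HH}. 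Conversely, each of the four conditions, together with \eqref{eq:mon}, reconstitutes the bribeproofness inequality in the corresponding case, and the two ``trivial'' directions ($\hat\theta_i=\theta_i$ and the $H\to L$ versus $L\to H$ symmetry) are handled by the fact that $\influence{i}\ge 0$ makes the one-agent term nonpositive and by reindexing. One should also double-check the degenerate-sounding case where the bribed agent truthfully reports and the briber is distinct: there $\hat\theta=\theta$ and the inequality is $0\le 0$, so nothing new arises; and the case where $j\ne i$ but we move $\hat\theta_i$ to the value $\theta_i$ already held is likewise vacuous.

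The main obstacle, such as it is, is bookkeeping: keeping straight the sign of $q^{(\lambda)}-\theta_i$ versus $q^{(\lambda)}-\theta_j$ in each of the four $(\theta_i,\theta_j)$ combinations, and making sure that when $\theta_i=L$ and we consider $\hat\theta_i=H$ we get the inequality with the signs matching \eqref{eq:pmon:LL} or \eqref{eq:pmon:LH} rather than their negations — in particular that $i$-influence of $i$ appears with a nonnegative coefficient on the correct side. A clean way to avoid errors is to fix the convention that the bribed agent always lies \emph{in the direction away from her reported type}, treat the opposite direction by swapping the roles of $\theta$ and $\hat\theta$, and tabulate the four resulting inequalities side by side; after dividing by $\delta$ they line up termwise with \eqref{eq:pmon:LL}--\eqref{eq:pmon:HH}. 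Finally I would remark that Fact~\ref{fac:adj:pay} is implicitly used only to justify that restricting attention to linear mechanisms with payments depending on the allocation is without loss for the ``only if'' direction's setup, though here the statement is about a fixed linear mechanism so that subtlety does not bite.
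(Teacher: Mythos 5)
Your proposal is correct and follows essentially the same route as the paper's proof: it handles $i=j$ as the strategyproofness/monotonicity case and then, for $i\neq j$, substitutes the $\lambda$-linear utilities, enumerates the four $(\theta_i,\theta_j)$ combinations, and divides by $H-L$ to obtain \eqref{eq:pmon:LL}--\eqref{eq:pmon:HH}. Writing the condition in difference form $\bigl(A_i(\hat\theta)-A_i(\theta)\bigr)\bigl(q^{(\lambda)}-\theta_i\bigr)+\bigl(A_j(\hat\theta)-A_j(\theta)\bigr)\bigl(q^{(\lambda)}-\theta_j\bigr)\le 0$ is only a cosmetic repackaging of the paper's case analysis.
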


\begin{proof}
	It is easy to see that condition \eqref{eq:bribeproof:breibee} for $i=j$ is equivalent to the monotonicity condition \eqref{eq:mon}.\footnote{This case corresponds to the notion of strategyproofness and the equivalence to monotonicity is proved in \cite{Mye81,ArcTar01} for the more general class of one-parameter domains.} We thus consider the case $i\neq j$ and show that \eqref{eq:bribeproof:breibee} is equivalent to the conditions \eqref{eq:pmon:LL}-\eqref{eq:pmon:HH} above. 
	By definition of \linearmechanism{\lambda} the utility of a generic  agent $k$ is 
	\begin{align}
		p_k(\hat \theta) - A_k(\hat \theta) \cdot \theta_k  &=  f_k + A_k(\hat \theta) \cdot (\lambda L + (1-\lambda)H - \theta_k) \nonumber 
		\intertext{which can be rewritten as follows using $\theta_k\in \{L,H\}$:}  
		p_k(\hat \theta) - A_k(\hat \theta) \cdot \theta_k  &= f_k + A_k(\hat \theta)\cdot (H-L) \cdot
		\begin{cases}
			1-\lambda & \mbox{if  $\theta_k=L$}; \\
			\phantom{1} -\lambda  & \mbox{if  $\theta_k=H$}.
		\end{cases}
		\label{eq:linear-mechanism:utility}
	\end{align}
	We rewrite \eqref{eq:bribeproof:breibee} according to the above utility function for each of the four possible cases of $\theta_i$ and $\theta_j$:
	\begin{description}
		\item[($\theta_i=L$ and $\theta_j=L$)] This case corresponds to 
		\[
		(1-\lambda)A_i(L,\theta_{-i}) + (1-\lambda)A_j(L,\theta_{-i}) \geq  (1-\lambda)A_i(H,\theta_{-i}) + (1-\lambda)A_j(H,\theta_{-i})
		\]
		which is equivalent to  \eqref{eq:pmon:LL}.
		\item[($\theta_i=L$ and $ \theta_j=H$)] This case corresponds to 
		\[
		(1-\lambda)A_i(L,\theta_{-i}) + (-\lambda)A_j(L,\theta_{-i}) \geq  (1-\lambda)A_i(H,\theta_{-i}) + (-\lambda)A_j(H,\theta_{-i})
		\]
		which is equivalent to  \eqref{eq:pmon:LH}.
		\item[($\theta_i=H$ and $ \theta_j=L$)] This case corresponds to 
		\[
		(-\lambda)A_i(H,\theta_{-i}) + (1-\lambda)A_j(H,\theta_{-i}) \geq  (-\lambda)A_i(L,\theta_{-i}) + (1-\lambda)A_j(L,\theta_{-i})
		\]
		which is equivalent to  \eqref{eq:pmon:HL}.
		\item[($\theta_i=H$ and $ \theta_j=H$)] This case corresponds to 
		\[
		(-\lambda)A_i(H,\theta_{-i}) + (-\lambda)A_j(H,\theta_{-i}) \geq  (-\lambda)A_i(L,\theta_{-i}) + (-\lambda)A_j(L,\theta_{-i})
		\]
		which is equivalent to  \eqref{eq:pmon:HH}.
	\end{description}
	This completes the proof.
\end{proof}

A simple corollary 
of the previous theorem is that the following natural condition implies bribeproofness when setting $\lambda=1/2$.
\begin{definition}[bounded influence]
 An algorithm $A$ satisfies bounded influence if, for all $\theta\in \Theta$ and for all $i,j\in N$, the following condition holds:
 \begin{equation}
  \label{eq:bounded-influence}
  \influence{i} \geq |\influence{j}|.
 \end{equation}
\end{definition}

\begin{corollary}\label{cor:wokload:bribeproof:1/2}
 The \linearmechanism{\left(\frac{1}{2}\right)}  is bribeproof for two-values domains if and only if its algorithm $A$ satisfies bounded influence.
\end{corollary}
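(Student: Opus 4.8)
The plan is to derive this directly from Theorem~\ref{th:wokload:bribeproof} by specializing to $\lambda = 1/2$ and checking that the resulting system of inequalities is exactly the bounded influence condition \eqref{eq:bounded-influence}. First I would substitute $\lambda = 1-\lambda = 1/2$ into the four inequalities \eqref{eq:pmon:LL}--\eqref{eq:pmon:HH}; after cancelling the common positive factor $1/2$ they become, respectively, $\influence{i} \geq -\influence{\ell}$, $\influence{i} \geq \influence{h}$, $\influence{i} \geq \influence{\ell}$, and $\influence{i} \geq -\influence{h}$, where, as in the theorem, $\ell$ ranges over the agents with $\theta_\ell = L$ and $h$ over the agents with $\theta_h = H$, and all of these are required for every $i \in N$ and every $\theta \in \Theta^{(L,H)}$. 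Monotonicity \eqref{eq:mon} is also part of the theorem's hypothesis, so I would carry it along as well.

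Next I would pair the inequalities up. For a fixed high agent $h$, the specialized forms of \eqref{eq:pmon:LH} and \eqref{eq:pmon:HH} together assert $\influence{i} \geq \influence{h}$ and $\influence{i} \geq -\influence{h}$, i.e. $\influence{i} \geq |\influence{h}|$. Symmetrically, for a fixed low agent $\ell$, the specialized forms of \eqref{eq:pmon:LL} and \eqref{eq:pmon:HL} together give $\influence{i} \geq |\influence{\ell}|$. Since on a two-values domain, for the fixed profile $\theta$, every agent $j \in N$ is either a low agent or a high agent, the two families of inequalities taken together are precisely the statement that $\influence{i} \geq |\influence{j}|$ holds for all $i,j \in N$ and all $\theta \in \Theta^{(L,H)}$, which is \eqref{eq:bounded-influence}. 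Conversely, bounded influence trivially implies each of the four specialized inequalities, and (taking $j = i$) it also implies monotonicity \eqref{eq:mon}, since $\influence{i} \geq |\influence{i}|$ forces $\influence{i} \geq 0$. Hence the hypotheses of Theorem~\ref{th:wokload:bribeproof} with $\lambda = 1/2$ hold if and only if $A$ satisfies bounded influence, and the corollary follows.

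There is essentially no hard step here: all the real content sits in Theorem~\ref{th:wokload:bribeproof}. The only point that needs a little care is the bookkeeping with quantifiers --- one must notice that the theorem's four conditions, though indexed separately by low agents $\ell$ and high agents $h$, jointly range over \emph{all} agents $j$ once the choice $\lambda = 1/2$ symmetrizes the $+$ and $-$ coefficients on the right-hand sides, and that the monotonicity requirement \eqref{eq:mon} is not an extra hypothesis but is subsumed by \eqref{eq:bounded-influence}.
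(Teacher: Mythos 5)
Your proof is correct and is exactly the derivation the paper intends: the corollary is stated as an immediate specialization of Theorem~\ref{th:wokload:bribeproof} to $\lambda=\frac{1}{2}$, and your pairing of the four inequalities into absolute-value bounds, the observation that every agent is either a low or a high agent, and the remark that monotonicity is subsumed by bounded influence with $j=i$ fill in precisely the bookkeeping the paper leaves implicit.
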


\section{Binary allocations}\label{sec:binary}
In this section we apply our results to the case of \emph{binary allocations}, that is, the problems in which each agent is allocated either an amount equal $0$ or $1$   (the path auction and the $k$-item procurement auction are two examples).

We first observe that bounded influence boils down to the following natural condition called \emph{non-bossiness} (no agent can change the allocation of another agent without changing her own allocation), and for our construction this condition is equivalent to (strong) bribeproofness.

\begin{definition}[non-bossiness]\label{def:stable}
  An algorithm $A$ satisfies non-bossiness if, for all $i$ and for all $\theta$, the following implication holds:
  if $\influence{i}=0$ then $\influence{j}=0$ for all $j$.
\end{definition}

\begin{theorem}\label{th:sbribeproof:equivalence}
 For binary allocations and two-values domains, the following statements are equivalent:
 \begin{enumerate}
  \item The \linearmechanism{\left(\frac{1}{2}\right)} $(A,p)$ is bribeproof. \label{th:sbribeproof:equivalence:bribeproof}
  \item Algorithm $A$ satisfies monotonicity  and non-bossiness. \label{th:sbribeproof:equivalence:non-bossiness}
  \item The \linearmechanism{\left(\frac{1}{2}\right)} $(A,p)$ is \sbribeproof. \label{th:sbribeproof:equivalence:sbribeproof}
 \end{enumerate}
 \end{theorem}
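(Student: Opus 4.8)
The plan is to prove $\ref{th:sbribeproof:equivalence:sbribeproof}\Rightarrow\ref{th:sbribeproof:equivalence:bribeproof}$, the equivalence $\ref{th:sbribeproof:equivalence:bribeproof}\Leftrightarrow\ref{th:sbribeproof:equivalence:non-bossiness}$, and $\ref{th:sbribeproof:equivalence:non-bossiness}\Rightarrow\ref{th:sbribeproof:equivalence:sbribeproof}$. The implication $\ref{th:sbribeproof:equivalence:sbribeproof}\Rightarrow\ref{th:sbribeproof:equivalence:bribeproof}$ is immediate: strong bribeproofness is inequality~\eqref{eq:bribeproof:breibee} required for the larger family of joint misreports $\hat\theta=(\hat\theta_i,\hat\theta_j,\theta_{-ij})$, and specializing to $\hat\theta_j=\theta_j$ recovers plain bribeproofness.

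For $\ref{th:sbribeproof:equivalence:bribeproof}\Leftrightarrow\ref{th:sbribeproof:equivalence:non-bossiness}$ I would go through Corollary~\ref{cor:wokload:bribeproof:1/2}, which says that statement~\ref{th:sbribeproof:equivalence:bribeproof} is equivalent to $A$ satisfying bounded influence~\eqref{eq:bounded-influence}; it then suffices to observe that, for binary allocations over a two-values domain, bounded influence coincides with monotonicity~\eqref{eq:mon} plus non-bossiness. Indeed, since $A_k(\cdot)\in\{0,1\}$, each $\influence{k}$ lies in $\{-1,0,1\}$; taking $j=i$ in~\eqref{eq:bounded-influence} gives $\influence{i}\ge|\influence{i}|$, i.e.\ monotonicity, and conversely, once $\influence{i}\in\{0,1\}$, condition~\eqref{eq:bounded-influence} holds automatically when $\influence{i}=1$ (because $|\influence{j}|\le 1$) and reduces to ``$\influence{j}=0$ for all $j$'' when $\influence{i}=0$, which is precisely non-bossiness (Definition~\ref{def:stable}).

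The substantive part is $\ref{th:sbribeproof:equivalence:non-bossiness}\Rightarrow\ref{th:sbribeproof:equivalence:sbribeproof}$; by the previous paragraph I may assume $A$ satisfies bounded influence. First I would rewrite strong bribeproofness via the utility identity~\eqref{eq:linear-mechanism:utility} with $\lambda=1/2$: dividing by the positive constant $(H-L)/2$ and cancelling $f_i+f_j$, the inequality for a pair $i,j$ at $\theta$ against $\hat\theta=(\hat\theta_i,\hat\theta_j,\theta_{-ij})$ becomes
\[
  \sigma_iA_i(\theta)+\sigma_jA_j(\theta)\ \ge\ \sigma_iA_i(\hat\theta)+\sigma_jA_j(\hat\theta),
\]
where $\sigma_k:=+1$ if $\theta_k=L$ and $\sigma_k:=-1$ if $\theta_k=H$ (these signs depend only on the fixed true types). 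I would then route the joint misreport through the intermediate profile $(\hat\theta_i,\theta_j,\theta_{-ij})$, obtaining two successive single-agent misreports, and show that $Q(\cdot):=\sigma_iA_i(\cdot)+\sigma_jA_j(\cdot)$ does not increase along either step. In one such step the deviating agent $d$ either does not move ($Q$ unchanged) or flips between $L$ and $H$; in the latter case $A_d$ and the other agent's allocation move by $\pm$(the influence of $d$'s report on $A_d$, resp.\ on that other allocation) at the current profile, the term $\sigma_d\cdot(\text{change of }A_d)$ is nonpositive by monotonicity, and the magnitude of the change of the other allocation is at most $|\text{change of }A_d|$ by bounded influence; adding the two contributions yields a net change of $Q$ that is nonpositive. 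Chaining the two steps gives $Q(\hat\theta)\le Q(\theta)$, as required, and since the single-step estimate is uniform no case split on the pattern $(\theta_i,\theta_j)$ of true types is needed (deviations in which only one of the two agents actually moves are just the degenerate case of one trivial step).

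I expect the only delicate point to be the bookkeeping in the last step: for each single-agent step one must track simultaneously the fixed signs $\sigma_i,\sigma_j$, the direction of that agent's flip (forced by its true type in a two-values domain), and the profile at which bounded influence is invoked --- the second step uses~\eqref{eq:bounded-influence} at the already-perturbed profile $(\hat\theta_i,\theta_j,\theta_{-ij})$, not at $\theta$. Everything else is a routine combination of Corollary~\ref{cor:wokload:bribeproof:1/2} with the observation that binary allocations confine all influences to $\{-1,0,1\}$.
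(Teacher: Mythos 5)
Your proposal is correct, and the implications $\ref{th:sbribeproof:equivalence:sbribeproof}\Rightarrow\ref{th:sbribeproof:equivalence:bribeproof}$ and $\ref{th:sbribeproof:equivalence:bribeproof}\Leftrightarrow\ref{th:sbribeproof:equivalence:non-bossiness}$ coincide with the paper's (same reduction of bounded influence to monotonicity plus non-bossiness when every $\influence{k}$ lies in $\{-1,0,1\}$). Where you genuinely diverge is the key step $\ref{th:sbribeproof:equivalence:non-bossiness}\Rightarrow\ref{th:sbribeproof:equivalence:sbribeproof}$. The paper first isolates two aggregate necessary conditions for strong bribeproofness, namely \eqref{eq:sbribeproof:sufficient:LL-HH} and \eqref{eq:sbribeproof:sufficient:LH-HL} of Lemma~\ref{le:sbribeproof:necessary}, proves in Theorem~\ref{th:sbribeproof:linear} that together with bounded influence they are also sufficient for the \linearmechanism{\left(\frac{1}{2}\right)}, and then verifies them under monotonicity and non-bossiness by an explicit case analysis on the binary two-agent restriction $B(x,y)$ (chains of implications such as $B(L,L)=(0,0)\Rightarrow B(H,L)=(0,0)\Rightarrow B(H,H)=(0,0)$). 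You instead telescope the joint deviation through the intermediate profile $(\hat\theta_i,\theta_j,\theta_{-ij})$ and show that each single-agent step weakly decreases the pair's normalized joint utility $\sigma_iA_i+\sigma_jA_j$, using monotonicity for the deviator's own term and bounded influence (at the current, possibly already-perturbed, profile) for the other term; your bookkeeping of signs and of where \eqref{eq:bounded-influence} is invoked checks out. Your route is self-contained (it bypasses Lemma~\ref{le:sbribeproof:necessary} and Theorem~\ref{th:sbribeproof:linear}), and since it nowhere uses that allocations are $0$/$1$ it actually proves the stronger fact that bounded influence at every profile already implies strong bribeproofness of the \linearmechanism{\left(\frac{1}{2}\right)} for arbitrary allocations --- in particular, conditions \eqref{eq:sbribeproof:sufficient:LL-HH} and \eqref{eq:sbribeproof:sufficient:LH-HL} follow from bounded influence by the same two-step chaining. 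What the paper's factorization buys in exchange is a reusable exact characterization (Theorem~\ref{th:sbribeproof:linear}) that it invokes again elsewhere, e.g.\ in the proof of Theorem~\ref{th:fractional-makespan}.
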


Note that, in general, there exist bribeproof mechanisms whose algorithm does not satisfy non-bossiness.
\begin{example}[non-bossiness is not necessary]\label{exe:non-boossy-not-necessary} For two agents, consider the algorithm which selects the one having a strictly smaller type, if that is the case, and otherwise no agent is selected:\footnote{This example applies to the problem of allocating a single item considered in \cite{MisQua13} where not allocating the item is also an option.}
	\begin{align*}
		A(L,H)&= (1,0), & A(H,L)&=(0,1) & \mbox{and}& &  A(L,L)&=(0,0)=A(H,H).
	\end{align*}
	Note that this algorithm does not satisfies non-bossiness. We next show that the \linearmechanism{1} is \sbribeproof. Its payments are
	\begin{align*}
		p(L,H)&= (L,0), & p(H,L)&=(0,L) & \mbox{and}& &  p(L,L)&=(0,0)=p(H,H),
	\end{align*}
	and it is easy to check that a truthful report yields utility $0$ to both agents, while any misreport can only give the same or a negative utility.
\end{example}

\subsection{Utilitarian (min-sum) problems}
The main application of Theorem~\ref{th:sbribeproof:equivalence} is a general construction of exact mechanisms for utilitarian problems (see e.g. \cite{NisRon99}), that is, for minimizing the (weighted) \emph{sum} of all agents' costs.

\begin{definition}[weighted social cost minimization]
 An algorithm $A$ minimizes the weighted social cost if there exist  nonnegative constants $\{\alpha_i\}_{i\in N}$   and arbitrary constants $\{\beta_a\}_{a \in \mathcal A}$ such that, for all $\theta\in \Theta$, it holds that
\[
 A(\theta) \in \argmin_{a\in \mathcal{A}} \{\SC(a,\theta)\},
\]
where $\SC()$ is defined as 
 $\SC(a,\theta) := \left(\sum_{i\in N} \alpha_i  a_i   \theta_i\right) + \beta_a.$
\end{definition}
Obviously, every mechanism minimizing the sum of all agents' costs correspond to the case $\alpha_i=1$ and $\beta_a=0$. Welfare maximization problems correspond to the case in which agents have valuations instead of costs.

\begin{definition}[consistent ties]
 An algorithm $A$ minimizes the weighted social cost breaking ties consistently if there exists 
a total order $\preceq$ over the set $\mathcal{A}$ of feasible allocations such that, for all $\theta\in \Theta$ and for all $a'\in \mathcal{A}$,  the following implication holds:
if $\SC(A(\theta),\theta)=\SC(a',\theta)$, then  $A(\theta) \preceq a'.$
\end{definition}

\begin{theorem}\label{th:binary-utilitarian}
 For binary allocation problems over two-values domains, if algorithm $A$ minimizes the weighted social cost  breaking ties consistently, then the corresponding \linearmechanism{\left(\frac{1}{2}\right)} is \sbribeproof.
\end{theorem}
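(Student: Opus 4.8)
The plan is to invoke Theorem~\ref{th:sbribeproof:equivalence}: it suffices to show that an algorithm $A$ minimizing the weighted social cost with consistent tie-breaking satisfies both \emph{monotonicity} \eqref{eq:mon} and \emph{non-bossiness} (Definition~\ref{def:stable}). Monotonicity is the standard observation that lowering agent $i$'s reported cost can only make allocations that use agent $i$ weakly more attractive: if $A(H,\theta_{-i})$ uses agent $i$ (i.e.\ $A_i(H,\theta_{-i})=1$), then that allocation has even smaller social cost at $(L,\theta_{-i})$, and because $\alpha_i\ge 0$ no allocation that was worse becomes strictly better relative to it, so with consistent ties $A_i(L,\theta_{-i})=1$ as well; hence $\influence{i}\in\{0,1\}\ge 0$. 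So the real content is non-bossiness.

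For non-bossiness, fix $i$ and $\theta$, and suppose $\influence{i}=0$, i.e.\ $A_i(L,\theta_{-i})=A_i(H,\theta_{-i})=:c\in\{0,1\}$. Write $a^L:=A(L,\theta_{-i})$ and $a^H:=A(H,\theta_{-i})$; I want to show $a^L=a^H$. The key point is that $\SC(a,\theta)$ differs between the type profiles $(L,\theta_{-i})$ and $(H,\theta_{-i})$ only in the single term $\alpha_i a_i\theta_i$, and that term equals $\alpha_i c\cdot\theta_i$ for \emph{both} $a^L$ and $a^H$ since both have $i$-th coordinate $c$. Therefore, for any allocation $a$ with $a_i=c$,
\begin{align*}
\SC(a,(L,\theta_{-i}))-\SC(a,(H,\theta_{-i})) &= \alpha_i c (L-H),
\end{align*}
a constant independent of $a$. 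In particular $a^L$ and $a^H$ are compared to each other identically under both profiles. Now $a^L$ is the $\preceq$-minimal minimizer of $\SC(\cdot,(L,\theta_{-i}))$; since $a^H$ has $a^H_i=c$, the displayed identity (applied to $a=a^H$ and to $a=a^L$) gives $\SC(a^H,(L,\theta_{-i}))=\SC(a^H,(H,\theta_{-i}))+\alpha_i c(L-H) \le \SC(a^L,(H,\theta_{-i}))+\alpha_i c(L-H)=\SC(a^L,(L,\theta_{-i}))$, so $a^H$ is also a minimizer at $(L,\theta_{-i})$, whence $a^L\preceq a^H$. By the symmetric argument at $(H,\theta_{-i})$ we get $a^H\preceq a^L$. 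Since $\preceq$ is a total order, $a^L=a^H$, so $\influence{j}=0$ for every $j$, which is non-bossiness.

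Having established monotonicity and non-bossiness, Theorem~\ref{th:sbribeproof:equivalence} (equivalence of items~\ref{th:sbribeproof:equivalence:non-bossiness} and~\ref{th:sbribeproof:equivalence:sbribeproof}) immediately yields that the \linearmechanism{\left(\frac12\right)} built on $A$ is \sbribeproof, completing the proof.

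The step I expect to be the main obstacle is the non-bossiness argument, and specifically making precise why the tie-breaking order lets us conclude $a^L=a^H$ rather than merely ``$a^L$ and $a^H$ have the same social cost.'' The subtlety is that two distinct allocations can tie in social cost at both profiles, so monotonicity of the $i$-th coordinate alone does not pin down the full vector; one genuinely needs that the \emph{same} total order $\preceq$ breaks ties at $(L,\theta_{-i})$ and at $(H,\theta_{-i})$, together with the fact that the set of minimizers (restricted to allocations sharing $i$'s coordinate value $c$) is the same for both profiles — which is exactly what the constant-shift identity above buys us. A secondary point to handle carefully is that $a^H$ is a valid competitor at profile $(L,\theta_{-i})$ only because $a^H_i=c$ matches $a^L_i$; if agent $i$'s allocation had changed, the shift would not be the same constant and the argument would break, which is precisely why the hypothesis $\influence{i}=0$ is essential.
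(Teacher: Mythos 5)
Your proposal is correct and follows essentially the same route as the paper: reduce via Theorem~\ref{th:sbribeproof:equivalence} to showing monotonicity \eqref{eq:mon} and non-bossiness, and derive both from the optimality of $A$ at the two profiles $(L,\theta_{-i})$ and $(H,\theta_{-i})$ together with the consistent tie-breaking order (the paper packages your ``constant-shift'' observation as the pair of inequalities \eqref{eq:utilitarian:SC-L}--\eqref{eq:utilitarian:SC-H}, which it sums for monotonicity and turns into two applications of $\preceq$ for non-bossiness). The subtlety you flag --- that one needs the \emph{same} total order at both profiles to upgrade ``equal social cost'' to ``equal allocation'' --- is exactly the point of the paper's implication \eqref{eq:utilitarian:SC-non-bossiness}.
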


\begin{proof}
 We show that  every algorithm $A$ which minimizes the weighted social cost  breaking ties consistently, satisfies non-bossiness and monotonicity \eqref{eq:mon}. The theorem then follows from Theorem~\ref{th:sbribeproof:equivalence}.

 It is convenient to rewrite the weighted social cost into two parts, the contribution of
a fixed agent $i$ and the rest:
\begin{align*}
 \alpha_i a_i \theta_i &+  \SC_{-i}(a, \theta_{-i}) & \mbox{ where }& & \SC_{-i}(a, \theta_{-i})&:=
  (\sum_{j\in N \setminus  \{i\}} \alpha_j a_j  \theta_j)+ \beta_a.
\end{align*}
For ease of notation, also let $\theta^L:= (L,\theta_{-i})$ and $\theta^H:=(H,\theta_{-i})$. Observe that since $A$ minimizes the weighted social cost we have the following:
\begin{align}
 \SC(A(\theta^L),\theta^L)& &=&  &L \alpha_i A_i(\theta^L)  + \SC_{-i}(A(\theta^L),\theta_{-i}) & &\leq & \label{eq:utilitarian:SC-L} \\ 
 \SC(A(\theta^H),\theta^L)& &=&  &L \alpha_i A_i(\theta^H)  + \SC_{-i}(A(\theta^H),\theta_{-i}),  & &  \nonumber \text{ and }\\
 \SC(A(\theta^H), \theta^H)& &=& &H \alpha_i A_i(\theta^H)  + \SC_{-i}(A(\theta^H),\theta_{-i}) & &\leq \label{eq:utilitarian:SC-H} \\ \nonumber
 \SC(A(\theta^L), \theta^H)& &=& &H \alpha_i A_i(\theta^L) + \SC_{-i}(A(\theta^L),\theta_{-i}) .& & 
 \end{align}
First, we show the following implication:
\begin{align}\label{eq:utilitarian:SC-non-bossiness}
\alpha_iA_i(\theta^H)&=\alpha_iA_i(\theta^L) & \Rightarrow&  & A(\theta^L)&=A(\theta^H).
\end{align}
The left-hand side implies that both inequalities \eqref{eq:utilitarian:SC-L} and \eqref{eq:utilitarian:SC-H} hold with ``$=$''. Since ties are broken consistently, we have $A(\theta^L) \preceq A(\theta^H)$ by \eqref{eq:utilitarian:SC-L} and $A(\theta^H) \preceq A(\theta^L)$ by \eqref{eq:utilitarian:SC-H}, thus implying $A(\theta^L)=A(\theta^H)$.

Now observe that \eqref{eq:utilitarian:SC-non-bossiness} implies that $A$ satisfies non-bossiness, and thus it only remains to prove that the monotonicity condition holds. By summing inequalities \eqref{eq:utilitarian:SC-L} and \eqref{eq:utilitarian:SC-H} we obtain $\alpha_i(H-L) A_i(\theta^H) \leq \alpha_i(H-L) A_i(\theta^L)$. From this the inequality
\[
A_i(\theta^H) \leq A_i(\theta^L)
\]
follows immediately for the case $\alpha_i>0$, while for $\alpha_i=0$ it follows by \eqref{eq:utilitarian:SC-non-bossiness}. By definition, the inequality above is equivalent to the monotonicity condition \eqref{eq:mon}.  
\end{proof}

The mechanism for the path auction problem consists in paying each agent in the chosen path an amount equal to $\money=\frac{L+H}{2}$, where the algorithm breaks ties between paths in a fixed order. Similar mechanisms can be obtained for other utilitarian problems like minimum spanning tree \cite{NisRon99} or for the  $k$-item procurement auction (see Section~\ref{sec:characterizations} for the latter).

\section{Characterizations for two problems}\label{sec:characterizations}
In this section we show that the  \linearmechanism{\left(\frac{1}{2}\right)} is the only bribeproof mechanism  for the path auction on general networks (this result applies also to combinatorial auctions with known single minded bidders -- see Appendix~\ref{sec:restricted CA} for details). We then obtain analogous characterizations for the $k$-item procurement auction in terms of our \linearmechanism{\lambda}s. 

\subsection{Path auction mechanisms}

As for the path auction problem, we actually prove a stronger result saying that the \linearmechanism{\left(\frac{1}{2}\right)} is the only bribeproof for 
the simple network in Figure~\ref{fig:sp-diamond} on the following generalization of two-values domains:

\begin{definition}
The path auction with $\epsilon$-perturbed domain ($\epsilon \geq 0$) is the path auction problem restricted to the network in Figure~\ref{fig:sp-diamond} in which the agents domain are as follows: 
$\Theta_1 =\Theta_2=\{L-\epsilon,H+\epsilon\}$ and $\Theta_3 =\Theta_4=\{L,H\}.$
\end{definition}
Clearly the two-values domain corresponds to setting $\epsilon=0$.

\begin{theorem}\label{th:sp:generalized-domain}
A mechanism which is bribeproof for the path auction with $\epsilon$-perturbed domain  must be 
a \linearmechanism{\left(\frac{1}{2}\right)}.
\end{theorem}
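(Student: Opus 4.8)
The plan is to show that bribeproofness on the $\epsilon$-perturbed domain forces the payments to be linear with $\lambda=1/2$. The network in Figure~\ref{fig:sp-diamond} has exactly two feasible allocations: the upper path $a^U=(1,1,0,0)$ (agents $1,2$ work) and the lower path $a^L=(0,0,1,1)$ (agents $3,4$ work). So on every type profile, the algorithm outputs one of these two, and by Fact~\ref{fac:adj:pay} the payment of each agent is a function of which path is selected: agent $i$ gets some amount $p_i^U$ if the upper path is chosen and $p_i^L$ if the lower one is. The goal is to pin down these eight numbers (up to the additive constants $f_i$) and show $p_i^U-p_i^L$ equals the work difference times $q^{(1/2)}=\tfrac{L+H}{2}$ — except that on the perturbed part (agents $1,2$) the relevant "midpoint" is $\tfrac{(L-\epsilon)+(H+\epsilon)}{2}=\tfrac{L+H}{2}$ as well, so the target is the same constant $\tfrac{L+H}{2}$ for all four agents.

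First I would identify type profiles on which the algorithm is forced to switch between the two paths — e.g. when agents $1,2$ both report the low value and $3,4$ both report high, the upper path has strictly smaller cost, so any reasonable (monotone/strategyproof) algorithm must pick it; symmetrically for the reverse. Actually I should not assume the algorithm is fixed: instead, I'd argue that bribeproofness alone forces the algorithm to be "responsive" enough that both outcomes occur, and more importantly that it forces the payment gaps. The standard trick: take a profile $\theta$ where the upper path is chosen and a neighboring profile $\hat\theta=(\hat\theta_i,\theta_{-i})$ (one agent deviates) where the lower path is chosen; writing bribeproofness \eqref{eq:bribeproof:breibee} for the pair consisting of the deviating agent $i$ and a suitable second agent $j$, and also the reverse inequality obtained by swapping the roles of $\theta$ and $\hat\theta$, I get a two-sided bound that squeezes $p_i^U-p_i^L$ (and the companion gap for $j$) to an exact value. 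Running this for all four agents and using both "directions" of deviation (low-to-high and high-to-low) produces a system of equalities on the payment gaps whose only solution is the $\lambda=1/2$ linear rule.

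The key computation is the i=j strategyproofness constraint combined with the cross-agent bribeproofness constraint. For a single agent $i$ among $\{1,2\}$ whose deviation flips the path, strategyproofness forces $p_i^U - p_i^L$ to lie between $L-\epsilon$ and $H+\epsilon$ (the agent's two possible true types times the work gap of $1$). Bribeproofness with a partner $j\in\{3,4\}$ — who gains work when $i$'s deviation flips to the lower path — forces a complementary inequality relating $p_j^L-p_j^U$ to the same window; combining the constraint from $i$ bribing $j$ with the constraint from $j$ bribing $i$, plus the two worlds of $i$'s true type, collapses the window to the single point $\tfrac{(L-\epsilon)+(H+\epsilon)}{2}$. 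The same argument applied with a partner $j\in\{1,2\}\setminus\{i\}$ pins down that gap too, and doing it for $i\in\{3,4\}$ pins down the remaining two. Then one checks the resulting payment rule is exactly the \linearmechanism{(1/2)} (the fixed constants $f_i$ absorbing any remaining freedom).

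The main obstacle I anticipate is handling the \emph{algorithm} rather than assuming it: I need to establish that a bribeproof mechanism on this domain genuinely uses both outcomes and that on enough adjacent profiles the outcome flips, so that the squeezing argument has profiles to work with. If the algorithm were constant (always upper, say), it would be trivially bribeproof but would not be a \linearmechanism{(1/2)} in a meaningful sense — so the theorem statement must be implicitly ruling that out, presumably because on this particular network with these domains, bribeproofness (together with, say, the implicit assumption that the mechanism is not the trivial fixed-outcome one, or that voluntary participation holds) forces responsiveness. I would resolve this by first proving that if $p_i^U-p_i^L \notin [L-\epsilon,H+\epsilon]$ for agent $1$ then strategyproofness is violated on the profile where $1$'s true type makes deviation profitable, which already requires the outcome to flip; and conversely if it never flips the payments can be rescaled to the linear form trivially. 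So the bulk of the work — and the place to be careful — is the case analysis showing the flip must occur and then the four-agent squeezing system has a unique solution.
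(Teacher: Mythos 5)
Your skeleton matches the paper's: use Fact~\ref{fac:adj:pay} to reduce the payments to per-path constants $q_i$ (the bonus agent $i$ receives when her path is selected), then pin each $q_i$ to $\frac{L+H}{2}$ via two-sided bribeproofness constraints at profiles where one agent's report flips the selected path. But there is a genuine gap at the heart of your ``squeezing'' step. A single flip profile yields only a \emph{window}, not a point: if agent $1$'s report flips the path at $\theta_{-1}$, the two possible true types give $q_1+q_2 \geq (L-\epsilon)+\theta_2$ and $q_1+q_2\leq (H+\epsilon)+\theta_2$ (coalition $\{1,2\}$), and analogously $q_1-q_3 \in [(L-\epsilon)-\theta_3,\ (H+\epsilon)-\theta_3]$ (coalition $\{1,3\}$) --- each a window of width $H-L+2\epsilon$. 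Collapsing to a point requires intersecting windows coming from \emph{different} flip profiles (different values of $\theta_2$ or $\theta_3$), and whether those flip profiles exist depends on how the mechanism breaks ties at profiles where both paths cost $L+H$, such as $(L-\epsilon,H+\epsilon,L,H)$. For instance, to get $q_3+q_4\geq L+H$ you need a profile where agent $3$'s deviation flips the path while $\theta_4=H$; the only candidates have the competing path also costing $L+H$, and if that tie is broken in favor of the upper path no such flip profile exists, so your squeeze has nothing to work with. This is exactly what the paper flags as the technically involved part: its proof of the intermediate identity $q_1+q_2=L+H=q_3+q_4$ (Claim~\ref{cla:sp:characterization:sum-payments:heterogeneous}) is a case analysis over the two possible tie-breaking rules, each branch requiring several further coalitional deviations (including secondary arguments for the boundary case $\epsilon=0$). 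Your proposal does not acknowledge that the needed profiles may fail to exist under some tie-breaking, so the ``four-agent squeezing system'' is not yet a proof.

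Separately, the obstacle you devote your last paragraph to --- ruling out a constant algorithm --- is not the real one. The mechanism is required to output a minimum-cost $s$--$t$ path, so it is forced to switch between the two paths on profiles with a unique optimum (e.g.\ $(L-\epsilon,L-\epsilon,H,H)$ versus $(H+\epsilon,H+\epsilon,L,L)$); the only freedom is tie-breaking, and no extra ``non-triviality'' or participation assumption is needed. Also note that your appeal to Fact~\ref{fac:adj:pay} to conclude that payments depend only on the selected path requires the chain argument of the paper's Claim~\ref{cl:sp:proof:payments-solution-based}: the fact applies only to profiles differing in one coordinate, so one must connect an arbitrary profile to a canonical one by single-coordinate moves that provably preserve the selected path (via monotonicity). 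That step is easily recoverable; the tie-breaking case analysis is the substantive missing content.
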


\begin{proof}[Main Ideas]
	We first show that, no matter how the mechanism breaks ties, the payments must depend only on which path is selected (using Fact~\ref{fac:adj:pay}). 
	This means that the payments are of the form
	\[
	p_i(\theta) = f_i + 
	\begin{cases}
	q_i & \mbox{if $i$ is selected for types $\theta$,} \\
	0  & \mbox{otherwise.}
	\end{cases}
	\]    
	In order to conclude that the mechanism must be a \linearmechanism{\left(\frac{1}{2}\right)} it is enough to prove that  $q_i=\frac{L+H}{2}$ for all $i$. This is the technically involved part, because we have to consider the possible tie breaking rules. At an intermediate step, we show that $q_1+q_2=L+H=q_3+q_4$, for otherwise there exists a coalition which violates bribeproofness. 
\end{proof}

By taking $\epsilon=0$ we obtain a characterization for this problem:

\begin{corollary}\label{cor:sp:characterization}
 The \linearmechanism{\left(\frac{1}{2}\right)} is the only bribeproof mechanism for the path auction on general networks.
\end{corollary}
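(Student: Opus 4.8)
The plan is to split the statement into two directions. The easy direction is already delivered by our earlier results: Theorem~\ref{th:binary-utilitarian} (with $\alpha_i=1$, $\beta_a=0$, ties broken by a fixed total order on paths) shows that the $\left(\frac12\right)$-linear mechanism with $\mathcal{M}=\frac{L+H}{2}$ is strongly bribeproof, hence bribeproof, for the path auction on any network, since the shortest-path algorithm minimizes the social cost over the binary allocation set $\mathcal A$. So it remains to prove the converse: \emph{every} bribeproof mechanism for the path auction on general networks is a $\left(\frac12\right)$-linear mechanism.

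For the converse I would first reduce to the network of Figure~\ref{fig:sp-diamond}. Given an arbitrary network $G$ and a bribeproof mechanism $(A,p)$ on $G$, fix any two internally vertex-disjoint $s$--$t$ paths $P$ and $P'$ in $G$ (the interesting case; if the network admits no such pair the claim degenerates) and freeze the costs of all edges outside $P\cup P'$ at $H$ and the costs of edges inside $P,P'$ other than a chosen representative on each to $L$, so that effectively only four "free" agents remain, arranged exactly as in Figure~\ref{fig:sp-diamond} after contracting degree-two chains. A bribeproof mechanism on $G$, restricted to these type profiles, is a bribeproof mechanism on the diamond network over the two-values domain; Theorem~\ref{th:sp:generalized-domain} (with $\epsilon=0$) then forces it to be a $\left(\frac12\right)$-linear mechanism on those four agents, i.e. each selected representative edge is paid $f_i+\frac{L+H}{2}$. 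Letting the choice of representatives and of the pair $P,P'$ vary, every edge of $G$ that lies on some $s$--$t$ path gets pinned to per-unit payment $\frac{L+H}{2}$; edges on no $s$--$t$ path are never allocated and their payment is irrelevant (absorbed into $f_i$). By Fact~\ref{fac:adj:pay} the payment of each agent depends only on whether her edge is selected, so $(A,p)$ is globally the $\left(\frac12\right)$-linear mechanism.

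The main obstacle is the bookkeeping in this reduction: one must check that freezing the outside edges at $H$ and the inside non-representative edges at $L$ genuinely leaves the restricted mechanism well-defined and bribeproof on the smaller instance, and that the contraction of degree-two chains (each carrying a single agent) really produces the diamond of Figure~\ref{fig:sp-diamond} rather than some larger series-parallel gadget — in the latter case one would instead invoke a version of the diamond argument repeatedly along the series-parallel decomposition. The cleanest route, which I would actually write up, is: observe that for any single agent $i$ whose edge lies on an $s$--$t$ path, one can choose an alternate disjoint path and a type profile in which exactly the four-agent diamond structure is exposed with $i$ as one of its edges, so Theorem~\ref{th:sp:generalized-domain} applies directly and yields $q_i=\frac{L+H}{2}$; since $i$ was arbitrary this finishes the proof. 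The subtlety is merely verifying such a profile exists for every path-edge $i$, which follows from choosing, in $G$, a second $s$--$t$ path edge-disjoint from the one through $i$ whenever possible, and otherwise noting $i$ is a cut edge and is forced into every selected path so that monotonicity plus bribeproofness already pin its payment.
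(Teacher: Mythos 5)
Your first paragraph, together with the invocation of Theorem~\ref{th:sp:generalized-domain} at $\epsilon=0$, is already the entire proof the paper intends: a mechanism that is bribeproof for the path auction on general networks must in particular be bribeproof on the network of Figure~\ref{fig:sp-diamond}, where Theorem~\ref{th:sp:generalized-domain} with $\epsilon=0$ forces it to be a \linearmechanism{\left(\frac{1}{2}\right)}; the converse inclusion is Theorem~\ref{th:binary-utilitarian}. The corollary is a statement about mechanisms required to handle the whole class of networks, with the diamond as the witness instance, so no reduction from an arbitrary network $G$ to the diamond is needed.

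The bulk of your write-up instead tries to establish the stronger per-network claim that on \emph{every} network the only bribeproof mechanism is the \linearmechanism{\left(\frac{1}{2}\right)}, and that claim is false. On the two-parallel-links network of Figure~\ref{fig:parallel-links}, a direct check of the conditions of Theorem~\ref{th:wokload:bribeproof} (with the min-cost edge selected and ties broken for agent $1$) shows that the \linearmechanism{\lambda} is bribeproof for \emph{every} $\lambda\in[0,1]$; consistently, Theorem~\ref{th:one-job:bribeproof} gives bribeproof mechanisms on that graph paying $M\neq \frac{L+H}{2}$ even over three-values domains. Your reduction breaks down exactly where it must: (i) freezing the edges outside $P\cup P'$ at $H$ does not confine the algorithm on $G$ to the two chosen paths, so the restricted mechanism is not a mechanism over the diamond's two-allocation feasible set and Theorem~\ref{th:sp:generalized-domain} does not apply to it; (ii) contracting a chain of several agents to one ``representative'' changes the agent structure rather than reproducing the four-agent instance of Figure~\ref{fig:sp-diamond}; (iii) for a cut edge the allocation is identically $1$, so Fact~\ref{fac:adj:pay} makes the payment constant and \emph{nothing} pins its per-unit value $q_i$ (any $q_i$ can be absorbed into $f_i$). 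The fix is simply to delete the reduction and state the corollary as a specialization: cite Theorem~\ref{th:sp:generalized-domain} with $\epsilon=0$ for uniqueness and Theorem~\ref{th:binary-utilitarian} for existence.
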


Since in these instances of path-auction problem \linearmechanism{\left(\frac{1}{2}\right)}  are \emph{not} bribeproof on three-values domains, we obtain the following result.

\begin{theorem}\label{th:sp:no-three-vals}
	There is no bribeproof mechanism for the path auction problem on general networks and for 
	three-values domains.
\end{theorem}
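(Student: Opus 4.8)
The plan is to derive Theorem~\ref{th:sp:no-three-vals} as an immediate consequence of Corollary~\ref{cor:sp:characterization} (or directly of Theorem~\ref{th:sp:generalized-domain}) together with a direct verification that no \linearmechanism{\left(\frac{1}{2}\right)} survives on a three-values domain. The argument is a short ``pincer'': on one side, any bribeproof mechanism for the path auction restricted to the network of Figure~\ref{fig:sp-diamond} must, by Theorem~\ref{th:sp:generalized-domain} specialized to $\epsilon=0$, be a \linearmechanism{\left(\frac{1}{2}\right)}; on the other side, I would show that this particular mechanism fails bribeproofness as soon as a third value is available. Since the network of Figure~\ref{fig:sp-diamond} embeds in a general network, the impossibility for this small instance suffices.

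First I would fix a three-values domain $\Theta^{(w_1,w_2,w_3)}$ with $w_1<w_2<w_3$ (it is enough to treat one such triple, e.g.\ the case where $\{w_1,w_3\}=\{L,H\}$ and $w_2=\frac{L+H}{2}$, or indeed any $w_1<w_2<w_3$ with $w_2\ne\frac{w_1+w_3}{2}$ playing the role of the ``middle'' value). I would observe that on the two-element sub-domain $\{w_1,w_3\}$ the earlier characterization forces any bribeproof mechanism to pay each selected edge exactly $q=\frac{w_1+w_3}{2}$. Then I would exhibit an explicit profile on the diamond network using the third value $w_2$ and a pair of agents $i$ (on one path) and $j$ (on the other) such that, when $i$ switches her report, the selected path changes, agent $j$'s allocation changes by one unit, agent $i$'s allocation changes by one unit, and the two utility changes do not cancel because the per-unit payment $\frac{w_1+w_3}{2}$ is calibrated to the two extreme values, not to $w_2$. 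Concretely: with types so that one path has cost built from $w_2$-type edges and the other from extreme-type edges, a bribe from the winning-path agent to a losing-path agent strictly improves both, violating \eqref{eq:bribeproof:breibee}.

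The main obstacle I anticipate is bookkeeping rather than conceptual: I must make sure the chosen three-value profile actually arises within the mechanism's tie-breaking behavior and that the manipulation is feasible given which edges belong to which path in Figure~\ref{fig:sp-diamond}. In particular I need a profile where the algorithm $A$ really does flip the selected path when agent $i$ changes her declared value from $w_3$ to $w_2$ (or $w_2$ to $w_1$), since a linear mechanism is still bribeproof on any sub-profile where nobody's report change alters any allocation. This forces me to pick the costs on the four edges so that the comparison between ``upper'' and ``lower'' path cost is sensitive precisely at the manipulated coordinate. Once such a profile is in hand, the violation of \eqref{eq:bribeproof:breibee} is a one-line numerical check: the left-hand side minus the right-hand side equals $(H-L)\big/2$ minus a term involving $w_2$, which is nonzero by the choice $w_2\ne\frac{L+H}{2}$.

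Finally I would note the reduction to general networks explicitly: the network of Figure~\ref{fig:sp-diamond} appears as (an induced instance within) a general network, so a mechanism that is bribeproof for all path-auction instances on general networks would in particular be bribeproof on this instance over the three-values domain; by the above this is impossible, proving Theorem~\ref{th:sp:no-three-vals}. If one prefers to avoid invoking the full strength of Theorem~\ref{th:sp:generalized-domain}, an alternative is to argue directly: by Fact~\ref{fac:adj:pay} the payments depend only on the selected path, strategyproofness on $\{L,H\}$ pins the per-unit payment to an interval, and the same explicit three-value profile rules out every value in that interval — but using Corollary~\ref{cor:sp:characterization} is cleaner since it already delivers $q=\frac{L+H}{2}$ exactly.
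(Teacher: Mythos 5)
Your overall architecture matches the paper's: restrict to a two-values subdomain, invoke the characterization of Theorem~\ref{th:sp:generalized-domain}/Corollary~\ref{cor:sp:characterization} to pin the per-edge payment, then exhibit a profile using the third value that violates \eqref{eq:bribeproof:breibee}, and finally note that the diamond of Figure~\ref{fig:sp-diamond} embeds in a general network. However, there is a genuine gap in how you instantiate this. The theorem is universal over three-values domains (contrast it with Theorem~\ref{th:sp:no-three-vals:triangle}, which is explicitly stated ``for \emph{some} three-values domains''), so your opening remark that ``it is enough to treat one such triple'' misreads the quantifier, and your actual contradiction is conditioned on $w_2\neq\frac{w_1+w_3}{2}$: you restrict to the \emph{extreme} pair $\{w_1,w_3\}$, obtain $q=\frac{w_1+w_3}{2}$, and then need the middle value to be off the midpoint for the ``one-line numerical check'' to be nonzero. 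For the triple with $w_2=\frac{w_1+w_3}{2}$ your computation gives exactly zero, and any rescue would have to lean on tie-breaking case analysis that you have not carried out.

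The paper sidesteps this entirely by restricting to the two \emph{smallest} values $\{L,M\}$ rather than the extremes. The characterization then forces the per-edge payment to be $\frac{L+M}{2}$, so a selected path containing one $L$-type and one $H$-type agent receives $L+M$ in total against a cost of $L+H>L+M$ --- strictly negative joint surplus for \emph{every} $L<M<H$, with no midpoint exception. The profile $\theta=(L,H,L,H)$ realizes this (both paths tie at cost $L+H$, so whichever is selected contains such a pair), and the selected pair escapes by having its $L$-agent report $H$, which makes its path strictly longer than the other and deselects both, raising their joint utility from $f_i+f_j+M-H$ to $f_i+f_j$. If you adopt this choice of subdomain and this escape move, your proof goes through for all three-values domains; as written, it does not.
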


Theorem~\ref{th:sp:generalized-domain} implies that we cannot extend the positive result to coalitions of larger size, nor to \emph{heterogeneous} two-values domains in which $\theta_i\in\{L_i,H_i\}$. 

\begin{corollary}\label{cor:path-auction:no-collusion-proof}
 There is no collusion-proof mechanism for the path auction problem 
 on  general networks and two-values domains. The same remains true even if we restrict to coalitions of size three (in which two agents bribe another for misreporting her type).
\end{corollary}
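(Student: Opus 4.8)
The plan is to reduce the statement to a single, explicit three-agent manipulation by invoking the characterization already established. First I would note that a collusion-proof mechanism is in particular immune to two-agent coalitions in which one agent bribes the other, i.e.\ it is bribeproof, and it must be so on \emph{every} network. Restricting attention to the ``diamond'' network of Figure~\ref{fig:sp-diamond}, a collusion-proof mechanism for the path auction on general networks thus gives a bribeproof mechanism for the path auction on that single instance. By Corollary~\ref{cor:sp:characterization} (equivalently, Theorem~\ref{th:sp:generalized-domain} with $\epsilon=0$), such a mechanism must be a \linearmechanism{\left(\frac{1}{2}\right)}, so on the diamond its payments are $p_i(\theta)=f_i+\frac{L+H}{2}$ if $i$ lies on the selected $s$--$t$ path and $p_i(\theta)=f_i$ otherwise, for some constants $f_i$ and for \emph{some} tie-breaking rule used by the algorithm (the only two feasible allocations being the upper path $(1,1,0,0)$ and the lower path $(0,0,1,1)$). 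It then suffices to exhibit, for any such mechanism, a coalition of three agents --- two of whom bribe a third to misreport --- whose total utility strictly increases; this proves both the corollary and its refinement to size-three coalitions at once.

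For the coalition I would use the profile $\theta=(L,L,L,L)$, at which both $s$--$t$ paths cost $2L$, so the algorithm selects one of them, say the path through agents $\{a,b\}$, leaving the other path through $\{c,d\}$ (this is the only point at which the tie-breaking rule matters). Now let agents $c$ and $d$ bribe agent $a$ to report $\hat\theta_a=H$. At $(\hat\theta_a,\theta_{-a})$ the path through $\{a,b\}$ costs $L+H>2L$ while the path through $\{c,d\}$ still costs $2L$, so cost minimization \emph{forces} the algorithm to switch to the path through $\{c,d\}$, independently of the tie-breaking rule. Evaluating utilities with the true types (all equal to $L$), the additive constants $f_i$ cancel in the coalitional comparison: agent $a$ was paid $\frac{L+H}{2}$ for a unit of work worth $L$ and now gets nothing, a loss of exactly $\frac{H-L}{2}$, whereas each of $c$ and $d$ gains exactly $\frac{H-L}{2}$. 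Hence the total utility of $\{a,c,d\}$ rises by $\frac{H-L}{2}>0$, and $c$ and $d$ can transfer $\frac{H-L}{2}$ (plus an arbitrarily small extra amount) to $a$ so that all three members end up strictly better off. This contradicts collusion-proofness; since no collusion-proof mechanism for the path auction on general networks can survive already on the diamond instance, none exists, and the coalition exhibited has exactly the ``two agents bribe another'' form, which gives the stated refinement.

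The argument carries essentially no computational content once Theorem~\ref{th:sp:generalized-domain} is available, so the ``hard part'' is only a matter of robustness: the characterization pins down the per-unit payment of every agent to $\frac{L+H}{2}$ but leaves both the tie-breaking rule and the constants $f_i$ unspecified. The construction is designed to neutralize both --- the $f_i$ always cancel in a difference of coalitional utilities, and the tie at $\theta=(L,L,L,L)$ can be exploited whichever of the two paths is selected, by always taking the bribed agent on the currently-selected path and the two bribers on the alternative path, so that raising the bribed agent's edge to $H$ makes the alternative path strictly cheaper. The only genuinely routine check is that immunity to three-agent coalitions subsumes the one-misreporter-plus-two-payers scenario used here, which is immediate from the definition of collusion-proofness.
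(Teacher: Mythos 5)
Your proposal is correct and follows essentially the same route as the paper: invoke the characterization (Corollary~\ref{cor:sp:characterization}) to force the mechanism to be a \linearmechanism{\left(\frac{1}{2}\right)} on the diamond instance, then at $\theta=(L,L,L,L)$ let the two agents on the unselected path bribe one selected agent to report $H$, which switches the path and raises the coalition's total utility by exactly $\frac{H-L}{2}>0$. The paper's proof uses the identical profile and coalition (agents $2,3,4$ with agent $2$ misreporting, under the convention that the upper path is selected), so the only difference is that you handle the tie-breaking rule symmetrically rather than fixing it by assumption.
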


\begin{corollary}\label{cor:sp:no-heterogeneous}
 There is no bribeproof mechanism for the path auction problem on general networks and certain heterogeneous two-values domains.
\end{corollary}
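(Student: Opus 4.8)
The plan is to derive the corollary from Theorem~\ref{th:sp:generalized-domain} by exhibiting one concrete heterogeneous two-values domain on which the only remaining candidate mechanism is not bribeproof. Take the $\epsilon$-perturbed domain with $\epsilon>0$: it is heterogeneous because $\Theta_1=\Theta_2=\{L-\epsilon,H+\epsilon\}$ while $\Theta_3=\Theta_4=\{L,H\}$. By Theorem~\ref{th:sp:generalized-domain}, every mechanism that is bribeproof for the path auction on this domain is a \linearmechanism{\left(\frac{1}{2}\right)}; since such a mechanism, being one for the path auction, selects a shortest path under some fixed tie-breaking rule and pays each selected agent the fixed amount $q=\frac{L+H}{2}$ (plus a constant $f_i$), it suffices to show that no mechanism of this form is bribeproof when $\epsilon>0$ — a contradiction that yields the corollary. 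The key observation is that $q$ is simultaneously the midpoint of $\{L,H\}$ and of $\{L-\epsilon,H+\epsilon\}$, so by \eqref{eq:linear-mechanism:utility} a selected lower-path agent has per-unit utility $q-\theta_i\in\{\frac{H-L}{2},-\frac{H-L}{2}\}$, whereas a selected upper-path agent has the strictly wider range $q-\theta_i\in\{\frac{H-L}{2}+\epsilon,-\frac{H-L}{2}-\epsilon\}$.

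To produce the violating bribe I would work on the diamond of Figure~\ref{fig:sp-diamond} and focus on the profile $\theta^\ast=(L-\epsilon,H+\epsilon,L,H)$, at which the upper path $(1,1,0,0)$ and the lower path $(0,0,1,1)$ both cost $L+H$; let $\pi$ denote the path the algorithm selects there. If $\pi$ is the lower path, let agent $3$ misreport $H$ in place of $L$: at $(L-\epsilon,H+\epsilon,H,H)$ the upper path is the \emph{unique} shortest one, so the allocation switches from $(0,0,1,1)$ to $(1,1,0,0)$; agent $3$ (true type $L$) loses $\frac{H-L}{2}$ by being dropped and agent $1$ (true type $L-\epsilon$) gains $\frac{H-L}{2}+\epsilon$ by being added, so the joint utility of $\{1,3\}$ strictly increases by $\epsilon$, contradicting \eqref{eq:bribeproof:breibee}. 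If $\pi$ is the upper path, I start one step earlier at $\theta'=(L-\epsilon,H+\epsilon,L,L)$, where the lower path is the unique shortest one and the allocation is $(0,0,1,1)$, and let agent $4$ misreport $H$ in place of $L$; this moves the profile to $\theta^\ast$, where the allocation is $(1,1,0,0)$ by the choice of $\pi$, and again agent $4$ loses $\frac{H-L}{2}$ while agent $1$ gains $\frac{H-L}{2}+\epsilon$, so the joint utility of $\{1,4\}$ increases by $\epsilon$. In either case bribeproofness fails, completing the proof.

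The whole argument hinges on a single case split — the tie-breaking decision of the algorithm at $\theta^\ast$ — and in each branch the violating coalition is a cross-path pair: a lower-path agent squeezed out at her low type (utility swing $\frac{H-L}{2}$) together with an upper-path agent brought in at her low type (utility swing $\frac{H-L}{2}+\epsilon$). The $2\epsilon$ mismatch between these swings is exactly the surplus that makes the bribe profitable, so it is essential that the perturbed agents sit on the same path. I expect the only point needing care is to confirm that the two profiles I use really do force the selected path to flip under a single admissible unilateral misreport irrespective of every other tie-breaking choice — which holds because the two ``flipped'' profiles each have a unique shortest path, so no further tie-breaking is involved — ensuring that the reduction through Theorem~\ref{th:sp:generalized-domain} has no escape; the utility bookkeeping via \eqref{eq:linear-mechanism:utility} is then routine.
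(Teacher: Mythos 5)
Your proof is correct and follows exactly the route the paper intends: Theorem~\ref{th:sp:generalized-domain} with $\epsilon>0$ pins any bribeproof mechanism on the (heterogeneous) $\epsilon$-perturbed domain down to payments $q_i=\frac{L+H}{2}$, and your two-way case split on the tie at $(L-\epsilon,H+\epsilon,L,H)$ correctly exhibits a cross-path pair whose joint utility rises by $\epsilon$ under a single admissible misreport — precisely the verification the paper leaves implicit. No gaps.
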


\subsection{$k$-Item auction mechanisms}
We remark that on a simple network consisting of $n$ parallel edges the path auction problem is the same as the $1$-item procurement auction. 
\begin{figure}
	\block{\textbf{\linearmechanism{\lambda_M} (normalized to $f_i=0$):}
		Select the $k$ agents with smallest types, breaking ties in favor of agents with smaller index; Pay each of the selected agents an amount $M$, and non-selected agents receive no money.               
	}
	\caption{A bribeproof mechanism for $k$-item procurement auction over three-values domains $\Theta^{(L,M,H)}$.}
	\label{fig:M-compensation}
\end{figure}
For the $k$-item procurement auction over three values domains, we consider the mechanism which provides a payment equal $M$ to the selected agents (see Figure~\ref{fig:M-compensation}). Note that this is a \linearmechanism{\lambda} with $$\lambda=\lambda_M:= \frac{H-M}{H-L}$$
which is the value such that $$M=L\lambda_M+(1-\lambda_M)H.$$ 
We show that the \linearmechanism{\lambda_M} in Figure~\ref{fig:M-compensation} is bribeproof over 
three-values domains (Theorem~\ref{th:one-job:bribeproof}) and, for the case $k=1$,   only \linearmechanism{\lambda_M}s can be bribeproof (Theorem~\ref{th:parallel:three-vals:char}). 
\begin{theorem}\label{th:one-job:bribeproof}
	The  \linearmechanism{\lambda_M} is bribeproof for the $k$-item procurement auction in the case of three-values domains. 
\end{theorem}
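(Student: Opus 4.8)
The plan is to verify the bribeproofness inequality~\eqref{eq:bribeproof:breibee} directly. Theorem~\ref{th:wokload:bribeproof} is stated for two-values domains, so it does not apply here; instead we exploit the explicit structure of the algorithm in Figure~\ref{fig:M-compensation} together with the fact that $M$ is the \emph{middle} of the three reported values. First I would simplify the utilities: since the mechanism pays $M=q^{(\lambda_M)}$ to each selected agent and nothing to the others, an agent $k$ has utility $f_k+A_k(\hat\theta)(M-\theta_k)$ at a reported profile $\hat\theta$. Substituting this into~\eqref{eq:bribeproof:breibee} and cancelling the constants, the condition for a pair $i,j$ and a misreport $\hat\theta=(\hat\theta_i,\theta_{-i})$ of agent $i$ becomes
\[
 \Delta_i\,(M-\theta_i)+\Delta_j\,(M-\theta_j)\ \geq\ 0,\qquad\text{where } \Delta_\ell:=A_\ell(\theta)-A_\ell(\hat\theta).
\]

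Next I would dispose of the case $i=j$ (strategyproofness). The algorithm orders the agents by the pair (reported type, index) and selects the top $k$, so raising one's report can only weakly worsen one's rank, hence $A_i$ is nonincreasing in $\hat\theta_i$. Since $\theta_i\in\{L,M,H\}$ with $L<M<H$, the sign of $M-\theta_i$ always agrees with this monotonicity: for $\theta_i=L$ we have $M-\theta_i>0$ and $\Delta_i\geq 0$, for $\theta_i=H$ we have $M-\theta_i<0$ and $\Delta_i\leq 0$, and for $\theta_i=M$ the term vanishes; in every case $\Delta_i(M-\theta_i)\geq 0$. It is precisely here that three values are needed: with four or more values a non-extreme value could be both overstated and understated, and the mechanism would fail to be even strategyproof.

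For $i\neq j$ the heart of the argument is a combinatorial fact. When agent $i$ changes her report, the relative (type, index)-order of the remaining $n-1$ agents is untouched, so the top-$k$ set changes by at most one element: either $i$ stays on the same side of the rank-$k$ threshold and \emph{nothing} changes (so $\Delta_\ell=0$ for all $\ell$ --- this is non-bossiness); or $i$ drops out of the top $k$ and the unique agent $j^\ast$ ranked $k+1$ at $\theta$ takes her place, giving $\Delta_i=1$, $\Delta_{j^\ast}=-1$; or $i$ moves into the top $k$ and displaces the unique agent $j^\ast$ ranked $k$ at $\theta$, giving $\Delta_i=-1$, $\Delta_{j^\ast}=1$. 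If $j\neq j^\ast$ then $\Delta_j=0$ and the displayed inequality is again strategyproofness. If $j=j^\ast$: in the first nontrivial case $\mathrm{rank}_\theta(i)\leq k<k+1=\mathrm{rank}_\theta(j)$, which forces $\theta_i\leq\theta_j$, and the inequality reads $(M-\theta_i)-(M-\theta_j)=\theta_j-\theta_i\geq 0$; symmetrically, in the second case $\theta_j\leq\theta_i$ and the inequality reads $\theta_i-\theta_j\geq 0$. All cases are covered.

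I expect the only real obstacle to be making the combinatorial step watertight --- arguing, with the index tie-break handled explicitly, that inserting agent $i$ at its new position in the order leaves the top $k$ unchanged unless $i$ crosses the threshold, and in that case identifying the unique agent that moves and the comparison between its reported type and $\theta_i$. The remaining ingredients (the utility rewriting, the sign bookkeeping, and strategyproofness) are routine.
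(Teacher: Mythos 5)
Your proposal is correct and follows essentially the same route as the paper's proof: your ``top-$k$ set changes by at most one element, namely a swap between $i$ and a unique $j^\ast$'' fact is exactly the paper's replacement-monotonicity lemma \eqref{eq:replacement-monotonicity} made explicit, and both arguments then reduce the $i\neq j$ case to the rank comparison giving $\theta_i\leq\theta_j$ (resp.\ $\theta_j\leq\theta_i$) so that the joint utility change is $\theta_j-\theta_i\geq 0$ (resp.\ $\theta_i-\theta_j\geq 0$). The remaining ingredients (the utility rewriting via $M-\theta_k$ and the sign bookkeeping for strategyproofness) also match the paper's treatment.
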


Also in this problem our construction yields the only mechanism, and results cannot be extended to more complex domains.

\begin{theorem}\label{th:parallel:three-vals:char}
	The \linearmechanism{\lambda_M} is the only bribeproof mechanism for the $1$-item procurement auction with three-values domains and two agents.
\end{theorem}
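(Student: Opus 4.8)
The plan is to follow the template used for Theorem~\ref{th:sp:generalized-domain}. In a first step I would reduce an arbitrary bribeproof mechanism for this problem to one in which the payment to each agent depends only on whether that agent is selected; in a second step I would show that the per-unit bonus of such a mechanism is forced to equal $M$, i.e.\ that the mechanism is a \linearmechanism{\lambda_M}.

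For the first step, note that bribeproofness implies strategyproofness (the case $i=j$ of~\eqref{eq:bribeproof:breibee}) and hence monotonicity~\eqref{eq:mon}. Since exactly one of the two agents is selected we have $A_1(\theta)+A_2(\theta)=1$, so $A_1$ is nonincreasing in $\theta_1$ and nondecreasing in $\theta_2$. This makes the region $S_1:=\{\theta\in\{L,M,H\}^2:A(\theta)=(1,0)\}$ ``staircase closed'': from any $\theta\in S_1$ one can first decrease $\theta_1$ down to $L$ and then increase $\theta_2$ up to $H$, each single-coordinate move keeping the allocation equal to $(1,0)$. Thus every element of $S_1$ is linked to $(L,H)$ by a chain of $A$-equivalent type vectors, and symmetrically every element of $S_2$ is linked to $(H,L)$; by Fact~\ref{fac:adj:pay} the payment vector is therefore constant on $S_1$ and constant on $S_2$, which is exactly to say $p_i(\theta)=f_i+q_i\,A_i(\theta)$ for suitable constants $f_i,q_i$. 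If $A$ is constant the mechanism is trivially a \linearmechanism{\lambda_M}; otherwise $A_1(L,H)=1$ and $A_1(H,L)=0$, so $A$ genuinely flips somewhere on the grid, and it remains to prove $q_1=q_2=M$.

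For the second step I would substitute $p_i=f_i+q_iA_i$ back into~\eqref{eq:bribeproof:breibee}. Using $A_1+A_2=1$ the additive constants $f_i$ cancel, and a misreport of a single agent that changes the allocation reduces the bribeproofness inequality to a sign condition of the form $\bigl(A_1(\theta)-A_1(\hat\theta)\bigr)\bigl(q_1-q_2-\theta_1+\theta_2\bigr)\ge 0$. Reading this off each ``flip edge'' of the staircase (an adjacent pair of type vectors across which $A_1$ changes) confines $q_1-q_2$ to a half-open interval whose endpoints are differences of domain values, while the $i=j$ instances confine each $q_i$ to $[L,M]$ or to $[M,H]$ according to whether the flip in that column (resp.\ row) happens between the two smaller or the two larger values. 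The remaining task is to combine all these constraints, over the flip edges present in $A$ and over the different tie-breaking possibilities, and conclude that the only feasible value is $q_1=q_2=M$.

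I expect this last combining step to be the main obstacle, exactly as it is for Theorem~\ref{th:sp:generalized-domain}: one has to run through the finitely many shapes of the separating staircase — in particular where its corners sit relative to the middle value $M$ — and it is the only place where all three kinds of instance of~\eqref{eq:bribeproof:breibee} (the $i=j$ case, and the $i\ne j$ case with an upward and with a downward misreport) must be used simultaneously. The reduction in the first step, and the algebra turning~\eqref{eq:bribeproof:breibee} into the sign condition, are routine by comparison.
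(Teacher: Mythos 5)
Your approach is essentially the paper's own. The paper likewise first uses Fact~\ref{fac:adj:pay} to reduce the payments to two constants per agent (received when selected and when not), and then pins the selection bonus to $M$ by combining single-agent instances of \eqref{eq:bribeproof:breibee} (strategyproofness) with two-agent instances, running over the possible tie-breaking rules at $(L,L)$, $(M,M)$, $(H,H)$; this is the content of Lemmas~\ref{le:ties:same}, \ref{le:ties:L-diff-M} and \ref{le:ties:M-diff-H} and of the final case split in Section~\ref{sec:proof:th:parallel:three-vals:char}, which together establish \eqref{eq:fixed-compensation-equivalent:full-proof}, i.e.\ $q_1=q_2=M$. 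Your staircase argument for the first step and the algebraic reduction to the sign condition $\bigl(A_1(\theta)-A_1(\hat\theta)\bigr)\bigl(q_1-q_2-\theta_1+\theta_2\bigr)\ge 0$ are both correct (and, since $A_1+A_2=1$, the same condition also covers the case in which agent $2$ is the misreporter). The step you defer is exactly where the paper's proof lives, but your constraint system does close it in each of the at most eight tie-breaking patterns: whenever an agent's allocation flips both between $L$ and $M$ in one column (resp.\ row) and between $M$ and $H$ in another, the two strategyproofness intervals $[L,M]$ and $[M,H]$ already force her $q_i=M$; in the remaining patterns one agent is confined only to $[L,M]$ and the other only to $[M,H]$ (or one agent is unconstrained on one side), and a two-agent constraint evaluated at a tied type vector --- where $\theta_1=\theta_2$, so your sign condition degenerates to $q_1\ge q_2$ or $q_1\le q_2$ --- supplies the missing inequality. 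So the proposal is sound and organizes the paper's instance-by-instance checks more systematically; it simply stops short of executing the finite case analysis that constitutes the paper's three lemmas.
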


This implies the impossibility result.
\begin{corollary}\label{cor:parallel:four-vals:char}
	There is no bribeproof mechanism for the $1$-item procurement auction with two agents and four-values domains. 
\end{corollary}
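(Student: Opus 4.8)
The plan is to derive the impossibility directly from the characterization in Theorem~\ref{th:parallel:three-vals:char}, exploiting the fact that a four-values domain contains two \emph{overlapping} three-values sub-domains on which a bribeproof mechanism would be forced to pay two different amounts per unit of allocated work.

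Fix a four-values domain $\Theta^{(w_1,w_2,w_3,w_4)}$ with $w_1<w_2<w_3<w_4$, and suppose for contradiction that $(A,p)$ is a bribeproof mechanism for the $1$-item procurement auction with two agents on this domain. First I would observe that for any sub-domain $\Theta'=\Theta'_1\times\Theta'_2$ with $\Theta'_i\subseteq\{w_1,\dots,w_4\}$, the restriction of $(A,p)$ to $\Theta'$ is again a bribeproof mechanism for the same ($1$-item, two-agent) problem, since every instance of inequality~\eqref{eq:bribeproof:breibee} demanded on $\Theta'$ is already demanded on $\Theta$. Applying this to the two three-values sub-domains $\Theta^{(w_1,w_2,w_3)}$ and $\Theta^{(w_2,w_3,w_4)}$, Theorem~\ref{th:parallel:three-vals:char} forces the restriction of $(A,p)$ to $\Theta^{(w_1,w_2,w_3)}$ to be the \linearmechanism{\lambda_M} from Figure~\ref{fig:M-compensation} with $M=w_2$, and its restriction to $\Theta^{(w_2,w_3,w_4)}$ to be the one with $M=w_3$. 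By Definition~\ref{def:linear-mechanism} this means there are constants $f_1,f_2$ with $p_i(\theta)=A_i(\theta)\cdot w_2+f_i$ whenever both coordinates of $\theta$ lie in $\{w_1,w_2,w_3\}$, and constants $g_1,g_2$ with $p_i(\theta)=A_i(\theta)\cdot w_3+g_i$ whenever both coordinates of $\theta$ lie in $\{w_2,w_3,w_4\}$; moreover, in both regimes $A$ allocates the item to the agent with the strictly smaller reported type.

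Finally I would evaluate these two formulas on the overlap $\{w_2,w_3\}^2$, where both apply. For $\theta=(w_2,w_3)$ agent~$1$ has the strictly smaller type, so $A_1(\theta)=1$ and hence $p_1(w_2,w_3)=w_2+f_1=w_3+g_1$; for $\theta'=(w_3,w_2)$ agent~$1$ has the strictly larger type, so $A_1(\theta')=0$ and hence $p_1(w_3,w_2)=f_1=g_1$. Substituting $f_1=g_1$ into $w_2+f_1=w_3+g_1$ yields $w_2=w_3$, contradicting $w_2<w_3$; therefore no such bribeproof mechanism exists. The argument is essentially bookkeeping once Theorem~\ref{th:parallel:three-vals:char} is in hand, and I do not anticipate a substantial obstacle; the only points that need a little care are that bribeproofness is inherited by sub-domains, and that the two linear-mechanism descriptions of $p$ genuinely refer to the same quantities on the overlap because the behaviour of $A$ there (allocate to the cheaper agent) is pinned down identically by both restrictions.
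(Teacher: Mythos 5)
Your proof is correct, but it reaches the contradiction by a different route than the paper. The paper restricts the hypothetical mechanism to the \emph{single} three-values sub-domain $\Theta^{(L,M,H)}$ (discarding the third value $M'$), invokes Theorem~\ref{th:parallel:three-vals:char} to force the payment $f_i+M$ for the selected agent there, and then propagates this payment to type vectors involving $M'$ via chains of $A$-equivalent profiles and Fact~\ref{fac:adj:pay}; the punchline is that at $\theta=(M',M')$ the selected agent has utility $f_1+M-M'<f_1$ and profits by unilaterally reporting $H$, so the mechanism is not even \emph{strategyproof} on the full domain. You instead apply Theorem~\ref{th:parallel:three-vals:char} to the \emph{two} overlapping sub-domains $\Theta^{(w_1,w_2,w_3)}$ and $\Theta^{(w_2,w_3,w_4)}$, which force per-unit payments $w_2$ and $w_3$ respectively, and compare the two formulas at $(w_2,w_3)$ and $(w_3,w_2)$ to get $w_2=w_3$ directly. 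Your version is more economical --- it needs neither Fact~\ref{fac:adj:pay} nor any payment chaining, only the (correct) observations that bribeproofness is inherited by sub-domains and that the optimal allocation at the two off-diagonal profiles in the overlap is unambiguous --- while the paper's version yields the marginally stronger information that the obstruction is already a single-agent manipulation rather than a bribe. Both arguments carry the full weight of the corollary, and yours is sound as written.
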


\subsection{Back to path auction: Graph structure and three-values domains}
One way to restate the result on $1$-item auction, is that path auction admits a bribeproof mechanism on three-values domains when restricted to the \emph{parallel links} graph in Figure~\ref{fig:parallel-links}. The proof of Theorem~\ref{th:sp:no-three-vals} says that on the graph in Figure~\ref{fig:sp-diamond} there are no mechanisms for three-values domains. We next strengthen the result to the simple ``triangle'' graph in Figure~\ref{fig:sp:no-three-vals-sp:triangle}. Unlike Theorem~\ref{th:sp:no-three-vals}, this applies to some combination of values defining the domain. The result says that parallel links is the ``most general'' 
graph for which bribeproof mechanisms on any three-values domain exist.

\begin{figure}
	\begin{center}
		\begin{tikzpicture}
		\stpathtriangle{\theta_1}{\theta_2}{\theta_3}
		\end{tikzpicture}
	\end{center}
	\caption{A simple network for which there is no bribeproof mechanism for the path auction problem on certain three values domains.}
	\label{fig:sp:no-three-vals-sp:triangle}
\end{figure}

\begin{theorem}\label{th:sp:no-three-vals:triangle}
	There is no bribeproof mechanism for the path auction problem on the network in Figure~\ref{fig:sp:no-three-vals-sp:triangle} and for 
	some three-values domains. In particular, this holds whenever $\Theta^{(L,M,H)}$ satisfies 
	\begin{align*}
		2L &< M, &  L+H &< 2M, & \mbox{and}& & L+M &<
		H.
	\end{align*}
\end{theorem}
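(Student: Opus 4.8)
The plan is to assume, for contradiction, that a bribeproof mechanism $(A,p)$ for the path auction on the network of Figure~\ref{fig:sp:no-three-vals-sp:triangle} exists, and to show the payments are so rigid that they cannot coexist. Since the mechanism solves the path auction, $A$ always returns a minimum‑cost $s$--$t$ path, and there are only two: the ``upper'' path through $u$ (using the edges of types $\theta_1,\theta_2$, i.e.\ allocation $(1,1,0)$) and the direct edge $s$--$t$ (type $\theta_3$, allocation $(0,0,1)$). Write $U$ and $D$ for the sets of type profiles on which $A$ picks the upper, respectively the lower, path. Under the stated hypotheses (taking costs positive), the two path costs $\theta_1+\theta_2$ and $\theta_3$ are never equal on $\Theta^{(L,M,H)}$ --- $2L<M$ and $L+M<H$ rule out the only coincidences that could occur --- so $A$ is completely determined. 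In particular $(M,L,H),(L,M,H)\in U$ because $L+M<H$, while $(M,L,M)\in D$ and $(L,L,M)\in U$ because $M+L>M>2L$.

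First I would reduce the payments to three numbers. Taking $i=j$ in \eqref{eq:bribeproof:breibee} shows each agent's allocation is weakly decreasing in her own report, so $U$ is a down‑set and $D$ an up‑set of $\{L,M,H\}^3$ in the order that is the usual one on coordinates $1,2$ and the reversed one on coordinate $3$. Hence any two profiles of $U$ (resp.\ of $D$) can be joined by a chain of single‑coordinate changes that stays inside $U$ (resp.\ inside $D$): walk down to their coordinatewise meet, then back up to the target, and at every step the current profile is still below the original one, hence still in the set. Along each step Fact~\ref{fac:adj:pay} forces the payment vector to be unchanged. Therefore $p$ is constant on $U$ and constant on $D$, and after the harmless rescaling of the constants $f_i$ we may assume $p_i(\theta)=q_i\cdot[\,i\text{ is selected at }\theta\,]$ for $i=1,2,3$ and suitable reals $q_1,q_2,q_3$ (agents $1,2$ are selected exactly on $U$, agent $3$ exactly on $D$).

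Next I would extract two inequalities. \emph{Lower bound.} At $(M,L,H)\in U$ agent $1$ has true cost $M$, is selected, and has utility $q_1-M$; if she reports $H$ the upper path costs $H+L>H$, so the lower path is chosen and her utility drops to $0$. Strategyproofness gives $q_1\ge M$, and symmetrically $q_2\ge M$, so $q_1+q_2\ge 2M$. \emph{Upper bound.} At $(M,L,M)\in D$ agents $1$ and $2$ are both unselected, so the joint utility of $\{1,2\}$ is $0$; if agent $1$ reports $L$, the upper path costs $2L<M$ and is selected, and the joint utility of $\{1,2\}$ becomes $(q_1-M)+(q_2-L)$, using the \emph{true} costs $M$ and $L$. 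Applying \eqref{eq:bribeproof:breibee} to this misreport of agent $1$ with coalition partner $2$ forces $(q_1-M)+(q_2-L)\le 0$, i.e.\ $q_1+q_2\le L+M$. Combining, $2M\le q_1+q_2\le L+M$, hence $M\le L$, contradicting $L<M$; so no such mechanism exists.

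I expect the only genuine work to be the middle step --- showing the payments collapse to the three constants $q_1,q_2,q_3$, which needs Fact~\ref{fac:adj:pay} together with the connectedness of $U$ and of $D$ under single‑coordinate moves, and hence monotonicity --- plus the routine but unavoidable check that the handful of specific profiles used really are classified by $A$ as claimed for \emph{every} domain satisfying the hypotheses, including the degenerate tie‑breaking one must address if $L=0$ is allowed. The remaining hypothesis $L+H<2M$ guarantees $(M,M,H)\in D$ (since $2M>L+H$) and would be used for the analogous bribe at $(M,M,H)$, which combined with $q_1\ge M$ again yields a contradiction ($q_1\le 2M-H<M$); I would keep all three conditions in the statement, but note that the argument above already goes through from $2L<M$ and $L+M<H$ alone.
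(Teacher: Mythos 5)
Your proof is correct, but it takes a genuinely different route from the paper's. The paper fixes the two payment vectors $p^{(top)}=p(L,M,H)$ and $p^{(bot)}=p(L,M,M)$, pins down the \emph{exact} differences $p_3^{(bot)}-p_3^{(top)}=M$ and $p_2^{(top)}-p_2^{(bot)}=L$ via strategyproofness and two auxiliary bribes, and then obtains the contradiction from the single deviation $\theta=(L,M,H)\rightarrow\hat\theta=(L,M,M)$ by the coalition $\{2,3\}$, which is exactly where the hypothesis $L+H<2M$ enters. You instead normalize to per-selection bonuses $q_1,q_2,q_3$, obtain the one-sided bounds $q_1\ge M$ and $q_2\ge M$ from individual strategyproofness alone (a selected cost-$M$ agent on the upper path can always drop out by reporting $H$, which needs $L+M<H$), and $q_1+q_2\le L+M$ from a single bribe at $(M,L,M)$ where agent $1$ under-reports $L$ to get the coalition $\{1,2\}$ selected (which needs $2L<M$); the contradiction $2M\le L+M$ follows. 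Your route buys a strictly stronger statement --- the hypothesis $L+H<2M$ is never used --- and it replaces the paper's ad hoc $A$-equivalence chains by the cleaner structural observation that $U$ and $D$ are a down-set and an up-set connected under single-coordinate moves, so that Fact~\ref{fac:adj:pay} makes $p$ constant on each. The only caveats are minor and shared with the paper: one must assume $L>0$ (positive edge costs) so that comparisons such as $H+L>H$ and $M+L>M$ are strict and the relevant instances have unique shortest paths; and the potential tie at $(M,M,H)$ when $2M=H$ (excluded by $L+H<2M$ but not by your two retained hypotheses) is harmless because none of the profiles or connecting chains you actually use pass through it.
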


\section{Min-max fairness and non-binary  problems}\label{sec:non-binary}

In this section we consider problems  with \emph{min-max} fairness optimization criteria, and \emph{non-binary} allocations. Thus, the algorithm 
$A$ should satisfy 
\begin{equation}
A(\theta) \in \argmin_{a \in \mathcal{A}} \max_{i\in N} \{a_i \cdot \theta_i\}.	\label{eq:min-max}
\end{equation}
In particular we consider the problem of allocating a perfectly divisible item (Example~\ref{exa:perfectly-divisible}) according to the above min-max fairness criteria \eqref{eq:min-max}. In such allocation  all agents will get some positive amount so that all costs will be identical.

\begin{theorem}[min-max fairness]\label{th:fractional-makespan}
	There is a \sbribeproof\  \linearmechanism{\left(\frac{1}{2}\right)} satisfying min-max fairness for allocating a perfectly divisible item. 
\end{theorem}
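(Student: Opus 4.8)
The plan is to apply Corollary~\ref{cor:wokload:bribeproof:1/2}: it suffices to exhibit an algorithm $A$ for the perfectly divisible item under the min-max criterion \eqref{eq:min-max} that satisfies the bounded influence condition \eqref{eq:bounded-influence}, and then check that the resulting \linearmechanism{\left(\frac{1}{2}\right)} is in fact \sbribeproof\ (not merely bribeproof). The starting observation is that under min-max fairness with a perfectly divisible item, in the optimal allocation every agent is given a strictly positive amount and all the individual costs $a_i\theta_i$ are equalized to a common value $c(\theta)$. Indeed, if some agent had strictly larger cost we could shift an $\varepsilon$ of work away from her to a lower-cost agent and reduce the maximum; hence at the optimum $a_i\theta_i=c(\theta)$ for all $i$, and combined with $\sum_i a_i=1$ this forces $c(\theta)=\left(\sum_{j\in N}1/\theta_j\right)^{-1}$ and $a_i=c(\theta)/\theta_i$. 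So the optimal allocation is \emph{unique}, there is no tie-breaking issue, and
\[
A_i(\theta)=\frac{1/\theta_i}{\sum_{j\in N}1/\theta_j}.
\]

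Next I would compute $\influence{k}=A_k(L,\theta_{-i})-A_k(H,\theta_{-i})$ on a two-values domain and verify \eqref{eq:bounded-influence}. Write $S=\sum_{j\neq i}1/\theta_j$ for the part of the denominator not involving $i$. When agent $i$ reports $L$ the denominator is $S+1/L$; when she reports $H$ it is $S+1/H$, which is smaller since $1/H<1/L$. For agent $i$ herself,
\[
\influence{i}=\frac{1/L}{S+1/L}-\frac{1/H}{S+1/H},
\]
which is nonnegative (monotonicity) and in fact strictly positive. For any other agent $k\neq i$, $A_k=(1/\theta_k)/(S+1/\theta_i)$, so
\[
\influence{k}=\frac{1}{\theta_k}\left(\frac{1}{S+1/L}-\frac{1}{S+1/H}\right)>0,
\]
and the key inequality $\influence{i}\ge\influence{k}=|\influence{k}|$ should reduce, after clearing the common denominators $(S+1/L)(S+1/H)$, to comparing $1/L-1/H$ against $(1/\theta_k)(1/L-1/H)$, i.e.\ to $\theta_k\ge 1$ — wait, that is not scale-invariant, so the cleaner route is to sum over all $k\neq i$: since $\sum_{k} A_k(\theta)=1$ identically, we get $\sum_{k\neq i}\influence{k}=-\influence{i}$, and because every $\influence{k}$ for $k\neq i$ has the same sign (opposite to $\influence{i}$ only in aggregate — actually each is positive while $\influence{i}$ is positive too, which cannot sum to zero). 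The correct bookkeeping: $\influence{i}>0$ and each $\influence{k}$ ($k\neq i$) is negative because increasing $i$'s reported cost from $L$ to $H$ \emph{shrinks} the denominator $S+1/\theta_i$, hence \emph{increases} every $A_k$; so $A_k(L,\theta_{-i})<A_k(H,\theta_{-i})$ and $\influence{k}<0$ with $\sum_{k\neq i}|\influence{k}|=\influence{i}$. Then $|\influence{k}|\le\sum_{k'\neq i}|\influence{k'}|=\influence{i}$ gives \eqref{eq:bounded-influence} immediately, for every $i,k$ and every $\theta$ — including on the two-values domain.

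Having bounded influence, Corollary~\ref{cor:wokload:bribeproof:1/2} gives bribeproofness of the \linearmechanism{\left(\frac{1}{2}\right)}. To upgrade to \sbribeproof ness I would revisit the argument of Theorem~\ref{th:wokload:bribeproof} for a \emph{joint} deviation $(\hat\theta_i,\hat\theta_j,\theta_{-ij})$ of agents $i,j$: using the utility rewriting \eqref{eq:linear-mechanism:utility}, the sum of $i$'s and $j$'s utilities at report $\hat\theta$ equals $f_i+f_j$ plus $(H-L)$ times $\sigma_i A_i(\hat\theta)+\sigma_j A_j(\hat\theta)$ where $\sigma_k\in\{1-\lambda,-\lambda\}=\{\tfrac12,-\tfrac12\}$ according to $\theta_k$; so \sbribeproof ness amounts to showing $\tfrac12\,\mathrm{sgn}_i\,A_i+\tfrac12\,\mathrm{sgn}_j\,A_j$ is maximized (over the four choices of $(\hat\theta_i,\hat\theta_j)$) at $\hat\theta=\theta$. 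Since changing $i$ from $L$ to $H$ only shrinks the common factor $1/(S'+1/\theta_i)$ (with $S'$ now excluding both $i$ and $j$) and the map $\theta_i\mapsto A_i$ is decreasing while $\theta_i\mapsto A_j$ is increasing, each single-coordinate move in the ``truthful direction'' weakly increases $\mathrm{sgn}_i A_i+\mathrm{sgn}_j A_j$; doing the two moves in sequence yields the claim.

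The main obstacle I expect is the last step: the interaction between the two deviators is genuinely two-dimensional, and one must be careful that moving agent $i$ toward truth does not hurt the $\mathrm{sgn}_j A_j$ term in a way that a subsequent move of $j$ cannot repair. The safe way to handle it is to exploit the explicit closed form $A_k=(1/\theta_k)/\sum_j(1/\theta_j)$ and the identity $A_i+A_j=1-\sum_{k\neq i,j}A_k$, so that on the pair $\{i,j\}$ only the ratio $A_i:A_j=\theta_j:\theta_i$ and the total mass $m:=A_i+A_j$ matter; $m$ depends on $\hat\theta_i,\hat\theta_j$ only through $1/\hat\theta_i+1/\hat\theta_j$ and is decreasing in each, while within fixed $m$ the quantity $\tfrac12 A_i-\tfrac12 A_j$ (in the mixed-sign case) or $\tfrac12 A_i+\tfrac12 A_j$ (in the equal-sign case) is monotone in the appropriate direction. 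A short case split on whether $\theta_i,\theta_j$ are both $L$, both $H$, or mixed then finishes it, and since the divisible-item problem reduces (as noted in the introduction) to allocating a single job fractionally, the bounded-influence verification also re-proves the fractional-makespan direction.
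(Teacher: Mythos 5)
Your proposal is correct and follows the same overall route as the paper's proof: derive the (unique) min-max-fair allocation in closed form, namely $A_k(\theta)=(1/\theta_k)/\sum_{j}(1/\theta_j)$, which on a two-values domain is exactly the paper's $H/(Hn_L+Ln_H)$ and $L/(Hn_L+Ln_H)$; verify bounded influence to get bribeproofness via Corollary~\ref{cor:wokload:bribeproof:1/2}; and then check the two joint-deviation inequalities \eqref{eq:sbribeproof:sufficient:LL-HH}--\eqref{eq:sbribeproof:sufficient:LH-HL} (your ``revisit the argument for a joint deviation'' step is precisely the content of Theorem~\ref{th:sbribeproof:linear}, which the paper invokes for exactly this purpose). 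The one place where you genuinely diverge is the verification of bounded influence: the paper computes the three influences explicitly and compares differences, whereas you use the conservation identity $\sum_k A_k(\theta)=1$, which gives $\influence{i}=\sum_{k\neq i}|\influence{k}|\geq |\influence{j}|$ once one observes that all $\influence{k}$ with $k\neq i$ share the same (nonpositive) sign. This is shorter, works verbatim beyond two-values domains, and gets the sign bookkeeping right where the paper's write-up slips: the paper asserts that $\influence{\ell}$ and $\influence{h}$ are nonnegative, but they are in fact nonpositive (raising $i$'s report shrinks $i$'s share and enlarges everyone else's), so the inequalities actually needed are $\influence{i}+\influence{\ell}\geq 0$ and $\influence{i}+\influence{h}\geq 0$ rather than the displayed differences $\influence{i}-\influence{\ell}$ and $\influence{i}-\influence{h}$; your identity delivers these directly. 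Your final case split for the joint deviations is only sketched, but it completes routinely from the monotonicity of $x\mapsto x/(S'+x)$, matching the paper's verification.
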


We next consider the problem of scheduling selfish related machines \cite{ArcTar01}. In this problem, we are given several indivisible items (jobs) each of them with some size. Each item must be assigned to some agent (machine) and the goal is to minimize the maximum cost (the makespan). Note that the allocation of each machine is the sum of the size of the jobs allocated to this machine.

\begin{example}\label{exa:scheduling}
	Consider three machines and three jobs of size $10$, $6$, and $6$. For $L=1$ and $H=2+\epsilon$, for some small $\epsilon$ to be specified (below). The allocation of the jobs minimizing the \emph{makespan} for types $\theta=(L,L,H)$ and $\hat{\theta}=(L,H,H)$, for any $0 < \epsilon <2/3$, is as follows: $A(L,L,H) = (6+6,10,0)$ and $A(L,H,H)=(10,6,6)$.
	This is unique up to a permutation of
	the allocation of machines with the same type.
\end{example}
\newcommand{\valepsilon}{\frac{4}{\sqrt{3}}-2}
\newcommand{\valrho}{\frac{2}{\sqrt{3}}}

Using this example  we can show that our construction cannot lead to bribeproof mechanisms for minimizing the makespan in the 
scheduling problem above, or even to approximate the makespan within some small factor $\alpha>1$, i.e., returning an allocation whose makespan is at most $\alpha$ times the optimum makespan.

\begin{theorem}[selfish related machines]\label{th:makespan:no-pmon}
No bribeproof \linearmechanism{\lambda} for the  makespan minimization on three agents with two values-domains can approximate the makespan within a factor smaller than $\valrho \approx 1.1547$.
\end{theorem}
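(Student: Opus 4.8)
The plan is to exhibit a single instance—the scheduling instance of Example~\ref{exa:scheduling}—on which no \linearmechanism{\lambda} can be bribeproof while also $\alpha$-approximating the makespan for any $\alpha < \valrho$. The overall strategy follows the structure used for the characterization results earlier: a \linearmechanism{\lambda} is bribeproof on a two-values domain if and only if the conditions \eqref{eq:pmon:LL}--\eqref{eq:pmon:HH} of Theorem~\ref{th:wokload:bribeproof} hold, so I only need to show that \emph{every} $\lambda\in[0,1]$ forces a violation of one of these conditions on some input, as soon as the algorithm $A$ is required to be a good enough approximation. Concretely, I would set $L=1$ and $H=2+\epsilon$ and pick the perturbation $\epsilon = \valepsilon$ (so that the approximation threshold works out to exactly $\valrho$), then restrict attention to the type vectors $\theta=(L,L,H)$ and $\hat\theta=(L,H,H)$ obtained by letting machine~$2$ raise its report, with $i=2$ the manipulating agent.

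The key steps, in order: (1) Compute, for the jobs $\{10,6,6\}$, the optimal and near-optimal makespan allocations at $\theta$ and at $\hat\theta$, and argue—as asserted in Example~\ref{exa:scheduling}—that any allocation achieving makespan within a factor $<\valrho$ of the optimum must, up to permuting equal-type machines, be $A(L,L,H)=(12,10,0)$ and $A(L,H,H)=(10,6,6)$; this pins down the relevant values $A_k(L,\theta_{-2})$ and $A_k(H,\theta_{-2})$ for $k\in\{1,2,3\}$ and hence the influences $\influence{k}$. (2) Plug these influences into the four inequalities \eqref{eq:pmon:LL}--\eqref{eq:pmon:HH} with $i=2$: since $\theta_2$ is being changed from $L$ to $H$, the binding pair is the one with $\ell$ ranging over the low-type machines (here $\ell=1$, with $\theta_1=L$) and $h$ over the high-type machines (here $h=3$ at report $\hat\theta$, or whichever machine carries type $H$). (3) Observe that $2$-influence on machine~$2$'s own allocation is small (it goes from $10$ down to $6$, so $\influence{2}=4$) while $2$-influence on machine~$1$ is large ($12-10=2$) and on machine~$3$ is large and of the opposite sign ($0-6=-6$), so two of the four conditions push $\lambda$ toward $1$ and two push it toward $0$; showing these requirements are jointly infeasible for every $\lambda\in[0,1]$ completes the argument. (4) Finally, track how the constant $\epsilon=\valepsilon$ enters: it only affects $H$, and therefore the coefficients $q^{(\lambda)}$ and the makespan ratios, and I would check that the contradiction persists precisely as long as the approximation factor is below $\valrho$, and degrades smoothly to equality at $\alpha=\valrho$, yielding the stated bound.

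I expect step~(1) to be the main obstacle: it requires a careful, essentially exhaustive case analysis of bin-packing-type assignments of three jobs to three machines under the weighted-max objective, both to confirm the claimed unique optimizers and—more delicately—to rule out \emph{all} alternative allocations with makespan below the threshold, since the impossibility must hold for every approximation algorithm, not just the exact optimum. A secondary subtlety is bookkeeping around the symmetry between the two type-$L$ machines at $\theta=(L,L,H)$: because the algorithm may break the tie either way, I need the contradiction to be robust to which of machines~$1$ and~$2$ receives the $\{6,6\}$ bundle, which is why the manipulation is framed through machine~$2$ raising its type and one should verify the influence inequalities in both tie-breaking branches. Once those combinatorial facts are in hand, steps~(2)--(4) are routine linear arithmetic in $\lambda$ and $\epsilon$.
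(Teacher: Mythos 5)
Your proposal is correct and follows essentially the same route as the paper: the paper first proves exact impossibility on the instance of Example~\ref{exa:scheduling} by computing the influences $\influence{2}=4$, $\influence{1}=2$, $\influence{3}=-6$ and showing that \eqref{eq:pmon:HH} forces $\lambda=0$ while \eqref{eq:pmon:HL} then fails, and it handles your step~(1) exactly as you anticipate, by bounding the second-best makespan in each of the two instances ($12+6\epsilon$ versus $16$) and choosing $\epsilon=\valepsilon$ to equalize the two resulting ratios at $\valrho$. The tie-breaking concern you raise is dispatched by the symmetry of the two type-$L$ machines (and is noted in the paper's remark on randomized allocations).
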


We note that the same impossibility result applies to randomized mechanisms using algorithms that 
pick an optimal allocation with some probability distribution:

\begin{remark}[randomized allocations]
	It is in principle possible to consider randomized allocations in which $a_i$ is a random variable and the allocation is 
	given by a probability distribution over those minimizing the makespan. In particular, we could define an algorithm which permutes the jobs allocated to machines of the same type. For the instance used to prove Theorem~\ref{th:makespan:no-pmon}, when the types are $(L,L,H)$ there are two optimal allocations,
	\begin{align*}
		(12,10,0)& & \mbox{and}& & (10,12,0)
	\end{align*}
	and by picking one of them with uniform probability the resulting algorithm $A$
	returns 
	\begin{align*}
		A(L,L,H) &= (11,11,0) & \mbox{and}& & A(L,H,H) &= (10,6,6).
	\end{align*}
	The same argument used to prove Theorem~\ref{th:makespan:no-pmon} applies also in this case and thus no 
	\linearmechanism{\lambda} can be bribeproof in expectation.
\end{remark}

\begin{remark}
	We stress that the previous impossibility results are conditioned to our \emph{\linearmechanism{\lambda}} for every choice of $\lambda$. Whether there exist  bribeproof mechanisms for makespan minimization at all is an interesting open question. Note that strategyproof mechanisms minimizing the makespan do exist \citep{ArcTar01} as well as computationally-efficient ones which guarantee $(1+\epsilon)$-approximation in polynomial time  for every fixed $\epsilon>0$ \citep{ChrKov08}. 
\end{remark}

\section{Collusion-proofness}\label{sec:collusion-proof}

In this section we present an application of our general construction: A sufficient condition for  which the \linearmechanism{1} is \emph{collusion-proof}, while \linearmechanism{\left(\frac{1}{2}\right)} is not even bribeproof. 

\begin{definition}[collusion-proof]
	A mechanism is collusion-proof if no coalition $C$ can improve its total utility by a join misreport of the types.  That is, 
	for every $\theta$ and for every coalition $C \subseteq \{1,\ldots,n\}$ 
	\begin{equation} \sum_{i\in C} u_i(\theta;\theta_i) \geq  \sum_{i\in C} u_i(\hat\theta;\theta_i) \label{eq:collusion-proof}\end{equation}
	where $\hat \theta$ is any type vector which differ from $\theta$ only in (some of) the types of agents in $C$, and $u_i(\hat\theta;\theta_i) = P_i(\hat \theta) - A_i(\hat \theta)\cdot \theta_i$ is the utility of agent $i$ when reporting $\hat \theta$; $u_i(\theta;\theta)$ is defined analogously. 
\end{definition}

We consider \emph{non-binary} allocations and prove that the following two conditions are sufficient for collusion-proofness: 
\begin{description}
	\item[Pareto efficiency:] If there is at least one agent of type $L$, then all agents of type $H$ should get zero allocation,
	\[ A_i(\theta)>0 \Rightarrow \theta_i=L \mbox{ or } \theta = \mathbf{H}, \]
	where $\mathbf{H}:=(H,\ldots,H)$ denotes the vector in which all types are $H$. 
	\item[Minimality:] For types $\mathbf{H}$, no group of agents can decrease its total allocation by changing some of its types, 
	\[ \sum_{i\in C} A_i(\mathbf{H}) \leq \sum_{i\in C} A_i(\hat\theta_C,\mathbf{H}_{-C}), \]
	for any vector $(\hat\theta_C,\mathbf{H}_{-C})$ in which all types not in $C$ are $H$.  
\end{description}

\begin{theorem}
	The \linearmechanism{1} is collusion-proof if algorithm $A$ satisfies Pareto efficiency and minimality.
\end{theorem}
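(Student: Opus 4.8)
The plan is to reduce collusion-proofness of the \linearmechanism{1} to a purely combinatorial statement about the total work received by the high-type members of the coalition, and then settle the two cases $\theta=\mathbf H$ and $\theta\neq\mathbf H$ using minimality and Pareto efficiency, respectively. First I would compute the utilities. Since $\lambda=1$ we have $q^{(1)}=L$, so $p_i(\hat\theta)=f_i+A_i(\hat\theta)\cdot L$ and hence
\[
u_i(\hat\theta;\theta_i)=f_i+A_i(\hat\theta)\cdot(L-\theta_i)=
\begin{cases}
f_i & \mbox{if }\theta_i=L,\\
f_i-(H-L)\,A_i(\hat\theta) & \mbox{if }\theta_i=H.
\end{cases}
\]
Fix a coalition $C$ and a deviation $\hat\theta$ that changes only types of agents in $C$, and let $C_H:=\{i\in C:\theta_i=H\}$. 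When we compare the report $\theta$ with $\hat\theta$, the constants $f_i$ and all the contributions of agents with $\theta_i=L$ cancel on both sides, leaving
\[
\sum_{i\in C} u_i(\theta;\theta_i)-\sum_{i\in C} u_i(\hat\theta;\theta_i)=(H-L)\sum_{i\in C_H}\bigl(A_i(\hat\theta)-A_i(\theta)\bigr).
\]
Since $H-L>0$, inequality \eqref{eq:collusion-proof} is therefore equivalent to $\sum_{i\in C_H}A_i(\theta)\le\sum_{i\in C_H}A_i(\hat\theta)$ for every such $\hat\theta$.

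It then remains to verify this last inequality, which I would do by a two-case analysis tailored to the two hypotheses. If $\theta=\mathbf H$ then $C_H=C$ and $\hat\theta=(\hat\theta_C,\mathbf H_{-C})$, so the inequality is exactly the minimality condition applied to the group $C$. If $\theta\neq\mathbf H$, then every $i\in C_H$ has $\theta_i=H$ while $\theta\neq\mathbf H$, so Pareto efficiency forces $A_i(\theta)=0$; hence the left-hand side is $0$, and since allocations are nonnegative the right-hand side is $\ge 0$, so the inequality holds trivially. Combining the two cases gives \eqref{eq:collusion-proof} for all $\theta$ and all coalitions $C$.

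There is no genuinely hard step here: the only points requiring care are the bookkeeping in the utility computation — checking that the fixed payments $f_i$ and the type-$L$ contributions really do cancel on both sides, which relies on $\theta$ and $\hat\theta$ agreeing on every true type $\theta_i$ used to evaluate utility (true because $\hat\theta$ differs from $\theta$ only in reported types) — and the observation that the two structural assumptions on $A$ are precisely what is needed to cover, respectively, the ``all-$H$'' profile and every other profile. The reduction itself is the only place where the choice $\lambda=1$ (and not some other $\lambda$) is used, since it is exactly for $\lambda=1$ that a type-$L$ agent's utility becomes independent of the allocation, so that a coalition's only lever is the total work assigned to its type-$H$ members.
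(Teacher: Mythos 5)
Your proof is correct and follows essentially the same route as the paper's: reduce the coalition's net gain to the change in total allocation of its type-$H$ members, then invoke Pareto efficiency when $\theta\neq\mathbf H$ and minimality when $\theta=\mathbf H$. Your version is in fact slightly more careful in making the cancellation of the constants $f_i$ and of the type-$L$ contributions explicit, whereas the paper implicitly normalizes these away.
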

\begin{proof}
	Observe that in the \linearmechanism{1} the utility of each agent is zero or negative, and a negative utility occurs only when the agent under consideration has type $H$ and she gets a positive allocation. 
	We show that \eqref{eq:collusion-proof} holds  for every $\theta$, for every coalition $C \subseteq \{1,\ldots,n\}$, and for every $\hat \theta$ which differ from $\theta$ only in (some of) the types of agents in $C$.
	
	For $\theta\neq \mathbf{H}$, every truth-telling agent has \emph{zero} utility  (because of Pareto efficiency), and thus \eqref{eq:collusion-proof} follows from the observation that utilities are nonnegative.
	
	For $\theta = \mathbf{H}$, condition \eqref{eq:collusion-proof} follows from minimality and by the observation that in \linearmechanism{1} the utility of agents of type $H$ is of the form 
	\[u_i(\hat\theta;H) = -(H-L)A_i(\hat \theta). \]
	This completes the proof.
\end{proof}

We next give two examples of such allocation rules. The first one is the \emph{uniform rule} which  divides a perfectly-divisible good equally among all agents having the smallest type:
\begin{equation}
U_i(\theta)=\begin{cases}
1/n & \mbox{if $\theta=(H,\ldots,H)$},\\
1/n_L & \mbox{if $\theta \neq (H,\ldots,H)$ and $\theta_i=L$},\\
0 & \mbox{if $\theta \neq (H,\ldots,H)$ and $\theta_i=H$.}
\end{cases}\label{exa:uniform-rule}
\end{equation}
(Here $n_L$ denotes the number of agents of type $L$ in $\theta$.)
\begin{corollary}
	The \linearmechanism{1} with uniform rule \eqref{exa:uniform-rule} is collusion-proof. 
\end{corollary}

The second rule, defined for the case of three agents, is a simple modification of the previous one so that  unequal amounts are allocated when the three types are identical:
\begin{align}
	A(\theta) = \begin{cases}
		(2/3,1/6,1/6) & \mbox{if $\theta\in\{(L,L,L),(H,H,H)\}$},\\
		U(\theta) & \mbox{otherwise.}
	\end{cases}\label{exa:uniform-rule-tilted}
\end{align}
To see that also this rule satisfies minimality, observe that $\sum_{i\in C} A_i(\hat\theta) = 1$
for all $\hat\theta=(\hat\theta_C,\mathbf{H}_{-C}) \neq \mathbf{H}$. The next simple corollary shows that the uniform rule is not the only collusion-proof rule. Moreover, it shows that in some applications $\lambda=1$ gives the correct choice for \linearmechanism{\lambda}.

\begin{corollary}\label{cor:bribeproof}
	For three agents, the \linearmechanism{1} with rule in \eqref{exa:uniform-rule-tilted} is collusion-proof, while the \linearmechanism{\left(\frac{1}{2} \right) } with the same rule is not even bribeproof.
\end{corollary}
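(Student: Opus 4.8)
The plan is to verify the two claims separately, using the tools already developed. For the first claim, it suffices to check that the rule in \eqref{exa:uniform-rule-tilted} satisfies Pareto efficiency and minimality, since the preceding theorem then gives collusion-proofness of the \linearmechanism{1}. Pareto efficiency is immediate: whenever $\theta\neq\mathbf{H}$ the rule coincides with the uniform rule $U$, and $U_i(\theta)>0$ forces $\theta_i=L$; for $\theta=\mathbf{H}$ the hypothesis $\theta=\mathbf{H}$ of the implication is satisfied, so nothing is required. For minimality, I would observe (as the text already hints) that for every $\hat\theta=(\hat\theta_C,\mathbf{H}_{-C})\neq\mathbf{H}$ the rule equals $U$ and $\sum_{i\in C}U_i(\hat\theta)=1$ because $C$ must then contain every agent of type $L$, while $\sum_{i\in C}A_i(\mathbf{H})\leq 1$ always; when $\hat\theta=\mathbf{H}$ the inequality is an equality. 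Hence both conditions hold and collusion-proofness follows.

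For the second claim I would exhibit an explicit bribe violating \eqref{eq:bribeproof:breibee} for the \linearmechanism{\left(\frac12\right)} with rule \eqref{exa:uniform-rule-tilted}, whose per-unit payment is $q^{(1/2)}=\frac{L+H}{2}$. The natural candidate is to start from the all-low profile $(L,L,L)$, where $A=(2/3,1/6,1/6)$, and have one agent misreport $H$ so that the profile becomes, say, $(H,L,L)$, where the rule switches to $U$ and allocates $(0,1/2,1/2)$. I would then compute the change in the joint utility of the misreporting agent together with one of the two beneficiaries — the misreporter's allocation drops from $2/3$ to $0$ while the partner's rises from $1/6$ to $1/2$ — and show that, since the per-unit payment $\frac{L+H}{2}$ sits strictly between the true costs $L$ of both agents, the sum of their utilities strictly increases. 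Equivalently, by Corollary~\ref{cor:wokload:bribeproof:1/2} it is enough to note that this rule fails bounded influence: the $i$-influence on agent $i$'s own allocation is $2/3$ in absolute value while the $i$-influence on a partner's allocation is $1/2-1/6=1/3$ in the wrong direction relative to the sign conventions of \eqref{eq:bounded-influence}, and more pointedly one can pick indices where $|\influence{j}|>\influence{i}$ by reversing roles (considering the agent whose allocation was $1/6$ as the potential briber), which directly contradicts \eqref{eq:bounded-influence}.

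The only slightly delicate point is bookkeeping the sign conventions in \eqref{eq:bounded-influence} and matching the right pair $(i,j)$ of agents to the right direction of the inequality \eqref{eq:bribeproof:breibee}; the cleanest route is to bypass bounded influence and simply write out \eqref{eq:bribeproof:breibee} numerically for the concrete profiles $(L,L,L)$ and $(H,L,L)$ with the two chosen agents, plug in the payments $p_k=A_k\cdot\frac{L+H}{2}$, and observe the strict violation. No step requires more than elementary arithmetic once the profiles are fixed, so I do not anticipate a real obstacle beyond choosing the witnessing profile and agent pair correctly.
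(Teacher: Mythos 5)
Your first half is fine: checking Pareto efficiency and minimality for the rule \eqref{exa:uniform-rule-tilted} and invoking the preceding theorem is exactly what is needed (one small slip: for $\hat\theta=(L,L,L)$ the rule does \emph{not} coincide with $U$, but $\sum_{i\in C}A_i(\hat\theta)=1$ still holds there since $C=N$, so minimality survives).

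The second half, however, contains a genuine error: your explicit witness does not violate bribeproofness. For the deviation $(L,L,L)\to(H,L,L)$ the misreporting agent~$1$ loses $2/3$ units of allocation while a partner gains only $1/2-1/6=1/3$; since every unit is worth $\frac{L+H}{2}-L=\frac{H-L}{2}>0$ to a type-$L$ agent, the joint utility of agent~$1$ and either partner \emph{decreases} by $\frac13\cdot\frac{H-L}{2}$, and indeed bounded influence holds for $i=1$ because $\influence{1}=2/3\geq 1/3=|\influence{j}|$. Your parenthetical about ``reversing roles'' points in the right direction but is never carried out, and your proposed ``cleanest route'' at the end reverts to the same non-witnessing profile pair. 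The correct witness is to let one of the $1/6$-agents be the misreporter: with true types $(L,L,L)$ and $\hat\theta=(L,H,L)$ the allocation moves from $(2/3,1/6,1/6)$ to $U(L,H,L)=(1/2,0,1/2)$, so for $i=2$ one has $\influence{2}=1/6$ while $|\influence{3}|=|1/6-1/2|=1/3>1/6$, violating \eqref{eq:bounded-influence}; equivalently, agent~$3$ can bribe agent~$2$ to report $H$, since their joint utility changes by $\left(0-\tfrac16+\tfrac12-\tfrac16\right)\cdot\frac{H-L}{2}=\tfrac16\cdot\frac{H-L}{2}>0$. With this substitution the argument closes; as written, it does not.
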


\section{Conclusion and open questions}
This work provides a sufficient condition for obtaining  \emph{non-trivial} bribeproof mechanisms on certain \emph{restricted domains}. Specifically, the proposed  \emph{two-values domains} can be regarded as the simplest domains which still capture some ``combinatorial'' structure of the problem. For instance, it may be the case that two particular agents cannot be selected simultaneously, and therefore have a ``conflict of interests''. 
In our opinion, the simple construction proposed here is interesting for the following reasons:
\begin{itemize}
	\item The impossibility results known in the literature \citep{Sch00,Miz03} do not apply as the mechanisms are far from trivial (``fixed solution'') since they maximize the social welfare in binary allocation problems.
	\item In some natural welfare-maximization problems, the proposed mechanisms are essentially \emph{the only} bribeproof ones. The results are also the best possible in the sense that there is no mechanism on more general domains, nor a mechanism dealing with coalitions of size more than two. 
\end{itemize}
On the one hand, this suggests that the results are tight for welfare maximization problems. On the other hand,  it would be interesting to understand, or even characterize, bribeproofness in finite domains (without appealing to welfare-maximization). The impossibility results for scheduling problems presented in Section~\ref{sec:non-binary}  apply only to our construction.  Whether other bribeproof mechanisms exist for this problem is an interesting open question. More generally, the results suggest to investigate the trade-off between the social welfare and the ``richness'' of the domain. In that sense, our results can be seen as ``complementary'' of the impossibility results in \cite{Sch00,Miz03}: On general domains nothing is possible, while optimal social welfare for certain problems can be achieved in two-values domains but not on three-values domains. Therefore, one might consider \emph{approximate} social welfare in discrete domains with ``few'' possible values (e.g., approximate path-auction with three or more values). More generally, the trade-off may also involve the type of incentive compatibility condition that we want. For example, \cite{DutGkaRou14} showed that a certain class of weakly group strategyproof mechanisms (studied in \cite{MilSeg14}) can only achieve an approximate social welfare in binary one-parameter domains.
	
Concerning the construction, we note that \linearmechanism{\lambda}s guarantee bribeproofness (or even stronger conditions) for different classes of problems. The choice of the parameter $\lambda$  plays a central role (this specifies the fixed payment per unit of work provided to the agents) and, in particular,  different problems require a different $\lambda$ (cf., Corollary~\ref{cor:sp:characterization}, Theorem~\ref{th:parallel:three-vals:char}, and Corollary~\ref{cor:bribeproof}).

\paragraph{Acknowledgments.} We are grateful to the anonymous reviewers for a through reading of the paper, and for suggesting a possible connection with the work \cite{ohseto2000strategy}.

\bibliographystyle{plain}
\bibliography{bribeproof_short}

\newpage
\appendix

 \section{Restricted combinatorial auction}\label{sec:restricted CA}
 Consider the (restricted) combinatorial auction with \emph{known single-minded bidders} \citep{MuaNis08}: each agent $i$ is bidding only for one subset of items $S_i$ and this subset is \emph{publicly known}, while her valuation $v_i$ for this subset is private (the type of the agent). If we further restrict the valuations to two values, 
 \[
 v_i \in \{v_{low}, v_{high}\} 
 \]
 then we have a problem with binary allocations over a two-values domain: It is enough to consider $a_i$ being $1$ if $i$ gets all items in her desired subset and $0$ otherwise, and the two-values domain with ``possible costs'' 
 \begin{align}
 	L &=-v_{high} & \mbox{and}& & H&=-v_{low}.
 \end{align}
 We can thus apply Theorem~\ref{th:binary-utilitarian} and obtain the following bribeproof mechanism maximizing the social welfare:
 \begin{example}[restricted combinatorial auction]
 	The following  \linearmechanism{\left(\frac{1}{2}\right)} is a is a bribeproof mechanism maximizing the social welfare in 
 	combinatorial auctions with known single-minded bidders with two-values domains (valuations of the subset of preferred items is either $v_{low}$  or $v_{high}$). The
 	algorithm sets a strict ordering of all binary allocations (resulting from the assignment of items to the agents). 
 	Given the agents bids, the algorithm computes the allocation $a^*$ maximizing the social welfare, breaking ties between solutions using the fixed strict
 	ordering. Each agent getting her preferred bundle (i.e., $a^*_i=1$) pays an amount equal to  $\frac{v_{low}+v_{high}}{2}$.
 \end{example}
 
 We next show that, in some instances, this is \emph{the only} bribeproof mechanism over two-values domains maximizing the social welfare. 
 In particular, consider the case of four agents whose subsets of preferred items intersect as follows:

 \begin{center}
 	\begin{tikzpicture}
 	\node[ellipse,draw, fill=gray, opacity=.5,minimum height=0.5cm,minimum width=2cm] (s1) at (1.75,1.75) {$S_1$};
 	\node[ellipse,draw, fill=gray, opacity=.5,minimum height=0.5cm,minimum width=2cm] (s2) at (1.75,0.25) {$S_2$};
 	\node[ellipse,draw, fill=gray, opacity=.5,minimum height=2cm,minimum width=0.5cm] (s3) at (1,1) {$S_3$};
 	\node[ellipse,draw, fill=gray, opacity=.5,minimum height=2cm,minimum width=0.5cm] (s3) at (2.5,1) {$S_4$};
 	\end{tikzpicture}
 \end{center}
 Suppose further that each item is available in one copy, thus implying that 
 if two subsets intersect only one of the two agents can get her preferred subset.  Then the following is a simple observation:
 \begin{obs}
 	On instances as above, 
 	if a mechanism maximizes the social welfare, then it must allocate the items so that either both agents 
 	$1$ and $2$ get their preferred items or agents $3$ and $4$ get their preferred items.
 \end{obs}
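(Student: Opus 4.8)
The plan is to read the conflict structure off the figure and then use positivity of the valuations to force a welfare optimum to satisfy a \emph{maximal} set of mutually compatible agents. From the picture, $S_1\cap S_2=\emptyset$ and $S_3\cap S_4=\emptyset$, whereas $S_i\cap S_j\neq\emptyset$ for every $i\in\{1,2\}$ and every $j\in\{3,4\}$. Since each item is available in a single copy, whenever two of these subsets intersect at most one of the two corresponding agents can receive her whole preferred bundle. Hence, for any feasible allocation of items, the set $T$ of agents who get their preferred bundle (i.e.\ $a_i=1$ in the binary encoding) must be an independent set of the \emph{complete bipartite} conflict graph $K_{2,2}$ with parts $\{1,2\}$ and $\{3,4\}$.

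Next I would enumerate those independent sets: the empty set, the four singletons $\{1\},\{2\},\{3\},\{4\}$, and the two ``large'' sets $\{1,2\}$ and $\{3,4\}$ --- and note that the last two are the only \emph{maximal} ones. In the binary formulation the social welfare of an allocation whose satisfied set is $T$ equals $\sum_{i\in T} v_i$ (an agent who receives only part of her desired bundle has $a_i=0$ and contributes nothing, so we may assume agents outside $T$ receive nothing). Because every valuation is positive ($v_i\in\{v_{low},v_{high}\}$ with $v_{low}>0$), adding any further served agent strictly increases the welfare; therefore any $T$ that is a proper subset of $\{1,2\}$ or of $\{3,4\}$, as well as the empty set, is strictly dominated. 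Consequently a mechanism that maximizes the social welfare must return an allocation with $T=\{1,2\}$ or $T=\{3,4\}$, which is exactly the claim.

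There is essentially no hard step here; the only points needing a line of care are (i) confirming that the intersection pattern drawn in the figure is precisely $K_{2,2}$, so that $\{1,2\}$ and $\{3,4\}$ really are the only inclusion-maximal compatible sets, and (ii) observing that in the binary encoding a partially delivered bundle contributes zero welfare, so the comparison genuinely reduces to the two sets $\{1,2\}$ and $\{3,4\}$. If one wished to allow $v_{low}=0$, the statement should be read together with a tie-breaking convention favouring inclusion-maximal satisfied sets; under the standing assumption $v_{low}>0$ no such convention is needed.
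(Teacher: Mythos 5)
Your argument is correct and spells out exactly the reasoning the paper treats as immediate: it states this as a ``simple observation'' with no proof, relying implicitly on the $K_{2,2}$ conflict pattern and the positivity of the valuations, which is precisely what you make explicit. Your side remark about $v_{low}>0$ is a fair point of care, but in the auction setting of the paper the valuations are understood to be positive, so no tie-breaking convention is needed.
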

 This observation allows us to show that these instances are equivalent to the path auction instances used in Theorem~\ref{th:sp:generalized-domain} (see Figure~\ref{fig:sp-diamond}). Indeed, the previous observation says that the mechanism must return one of the following two allocations: 
 \begin{align}
 	&(1,1,0,0) & \mbox{or} & & (0,0,1,1)
 \end{align}
 The second key observation is that the social welfare of these two allocations are related to the cost in the path auction:
 \begin{obs}
 	Allocation $(1,1,0,0)$ for the restricted combinatorial auction has social welfare $v_1 + v_2 = -(\theta_1 + \theta_2)$, while 
 	allocation $(0,0,1,1)$ has social welfare $v_3 + v_4 = -(\theta_3 + \theta_4)$
 \end{obs}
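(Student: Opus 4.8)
The plan is to unwind the definitions that were just fixed. By definition, in a combinatorial auction with known single-minded bidders the social welfare of an assignment of items to the agents equals the sum of the valuations of those agents who receive their \emph{entire} preferred bundle; an agent who does not get all of $S_i$ contributes nothing, and so do any items outside $\bigcup_i S_i$. In the binary-allocation encoding set up earlier, $a_i=1$ exactly when agent $i$ is given all of $S_i$ and $a_i=0$ otherwise, so the social welfare of a feasible allocation $a$ is simply $\sum_{i} v_i a_i$.

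First I would evaluate this for the two allocations in the statement. For $a=(1,1,0,0)$ only agents $1$ and $2$ are satisfied, so the welfare equals $v_1+v_2$; for $a=(0,0,1,1)$ only agents $3$ and $4$ are satisfied, so the welfare equals $v_3+v_4$.

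Next I would translate into the cost language. Recall that the two-values cost domain was obtained by setting $L=-v_{high}$ and $H=-v_{low}$, which is precisely the identification $\theta_i=-v_i$ for every agent $i$ (type $L$ encodes valuation $v_{high}$, type $H$ encodes valuation $v_{low}$). Substituting $v_i=-\theta_i$ into the two welfare expressions yields $v_1+v_2=-(\theta_1+\theta_2)$ and $v_3+v_4=-(\theta_3+\theta_4)$, which is the claim.

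The only point that needs a moment of care is the sign bookkeeping: one should note that $L<H$ is consistent with $v_{low}<v_{high}$ (indeed $-v_{high}<-v_{low}$), so the identification $\theta_i=-v_i$ is the right one and maximizing $\sum_i v_i a_i$ coincides with minimizing $\sum_i \theta_i a_i = \SC(a,\theta)$. Beyond that there is no genuine obstacle — the observation is an immediate consequence of the encoding introduced just before it, and its role is merely to make explicit that the path-auction cost of a path equals (minus) the social welfare of the corresponding combinatorial-auction allocation, so that Theorem~\ref{th:sp:generalized-domain} transfers.
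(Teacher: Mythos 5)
Your proof is correct and follows exactly the reasoning the paper intends: the paper states this as an immediate observation with no separate argument, relying on the encoding $L=-v_{high}$, $H=-v_{low}$ (i.e., $\theta_i=-v_i$) introduced just before it, which is precisely what you unwind. The sign check ($L<H$ matching $v_{low}<v_{high}$) is a sensible addition but does not change the substance.
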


 We thus have that the minimal-cost path in the graph in Figure~\ref{fig:sp-diamond} corresponds to the maximal-welfare
 allocation in the combinatorial auction above and vice versa.
 The characterization for path auction mechanisms translate into the following result:

 \begin{corollary}\label{cor:RCA:char}
 	The \linearmechanism{\left(\frac{1}{2}\right)} is the only bribeproof mechanism maximizing the 
 	social welfare in some instances of combinatorial auctions with known single-minded bidders over two-values domains. 
 \end{corollary}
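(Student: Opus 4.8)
The plan is to reduce this corollary to the path-auction characterization of Theorem~\ref{th:sp:generalized-domain}, using the two observations stated above. I would fix the four-agent instance with the intersection pattern drawn above, each item available in a single copy, and translate valuations into costs exactly as in the appendix: set $\theta_i := -v_i$, so that $L=-v_{high}$, $H=-v_{low}$, and the valuation domain $\{v_{low},v_{high}\}$ becomes the two-values domain $\{L,H\}$. By the first observation above, any welfare-maximizing mechanism on this instance outputs either $(1,1,0,0)$ or $(0,0,1,1)$; by the second observation, the social welfare of these two allocations is $-(\theta_1+\theta_2)$ and $-(\theta_3+\theta_4)$ respectively. Consequently a welfare-maximizing mechanism here is precisely one whose algorithm selects the cheaper of the two $s$--$t$ paths in the diamond network of Figure~\ref{fig:sp-diamond}, with ties between the two paths broken arbitrarily.

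Next I would verify that this translation is faithful to bribeproofness. The utility of agent $i$ in the combinatorial auction, her valuation minus her payment, equals $A_i(\theta)\cdot(-\theta_i)-p^{CA}_i(\theta)$; defining the path-auction payment $p_i(\theta):=-p^{CA}_i(\theta)$ rewrites this as $p_i(\theta)-A_i(\theta)\cdot\theta_i$, which is exactly the utility appearing in \eqref{eq:bribeproof:breibee}. Since the allocation algorithm $A$ is the same in both views, a bribe between two agents raises both their combinatorial-auction utilities if and only if the same manipulation raises both their path-auction utilities. Hence, if $(A,p^{CA})$ is a bribeproof, welfare-maximizing mechanism for this combinatorial-auction instance, then $(A,p)$ is a bribeproof mechanism for the path auction on the diamond network (it is moreover a shortest-path mechanism, but this is just a special case covered by Theorem~\ref{th:sp:generalized-domain}).

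Now I would apply Theorem~\ref{th:sp:generalized-domain} with $\epsilon=0$: every bribeproof mechanism for this path auction is a \linearmechanism{\left(\frac{1}{2}\right)}, so each selected agent receives a payment $\frac{L+H}{2}$ up to an additive constant $f_i$. Translating back through $p^{CA}_i=-p_i$ and $L+H=-(v_{low}+v_{high})$, each agent who gets her preferred bundle is charged exactly $\frac{v_{low}+v_{high}}{2}$, again up to the constants $f_i$, which merely rescale payments and do not affect bribeproofness (see the remark after Definition~\ref{def:linear-mechanism}). Thus $(A,p^{CA})$ is a \linearmechanism{\left(\frac{1}{2}\right)}. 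Since such a mechanism is in fact bribeproof and welfare-maximizing on this instance (by Theorem~\ref{th:binary-utilitarian}, as in the example above), it is indeed the only one, which is the claim.

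I do not expect any genuinely new difficulty here: all the hard work already sits inside Theorem~\ref{th:sp:generalized-domain}. The only thing that needs care is the bookkeeping of the reduction --- checking that the two problem instances are isomorphic as mechanism-design problems (same domains, the same two outcomes relevant to welfare maximization, matching utilities, matching freedom in tie-breaking), and that the sign flip $p_i=-p^{CA}_i$ is the sole discrepancy and is harmless. A minor point worth stating explicitly is that restricting from ``all bribeproof path-auction mechanisms on the diamond'' to ``shortest-path bribeproof mechanisms'' loses nothing, because the former are already forced to be \linearmechanism{\left(\frac{1}{2}\right)}s.
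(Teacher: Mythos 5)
Your proposal is correct and follows essentially the same route as the paper: identify the four-agent single-copy instance with the diamond path-auction instance via the cost/valuation sign flip, invoke the two observations to see that welfare maximization there coincides with shortest-path selection, and then apply Theorem~\ref{th:sp:generalized-domain} (with $\epsilon=0$) to force the $\frac{v_{low}+v_{high}}{2}$ payment. Your explicit bookkeeping of the payment sign flip $p_i=-p^{CA}_i$ is slightly more careful than the paper's, which simply asserts the equivalence, but the argument is the same.
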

 
 From this we derive immediately certain inherent limitations of bribeproof mechanisms: Either the mechanism charges 
 some agent a price higher than her valuation or, by rescaling the payments, the mechanism is actually paying some agents and the overall revenue is \emph{negative}.
 
 Finally we observe that the equivalence between certain instances of restricted combinatorial auctions (with known single-minded bidders) and instances of path auctions established above allows us to obtain an impossibility result directly from Theorem~\ref{th:sp:no-three-vals}.
 
 \begin{corollary}
 	There is no bribeproof mechanism maximizing the 
 	social welfare in some instances of combinatorial auctions with known single-minded bidders over three-values domains. 
 \end{corollary}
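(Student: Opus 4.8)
The plan is to transfer the impossibility of Theorem~\ref{th:sp:no-three-vals} through the equivalence between these combinatorial-auction instances and the diamond path-auction instances of Figure~\ref{fig:sp-diamond}, in exactly the same way that Corollary~\ref{cor:RCA:char} was derived for two values. First I would recall the two observations established just above: in the four-agent instance with $S_1,S_2$ and $S_3,S_4$ intersecting as in the figure and each item available in a single copy, any welfare-maximizing mechanism must return one of the two allocations $(1,1,0,0)$ or $(0,0,1,1)$, and under the substitution $\theta_i := -v_i$ the social welfare of $(1,1,0,0)$ equals $-(\theta_1+\theta_2)$ while that of $(0,0,1,1)$ equals $-(\theta_3+\theta_4)$. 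Hence a welfare-maximizing allocation corresponds precisely to a minimum-cost $s$--$t$ path in the diamond graph, and vice versa; a three-values valuation domain $v_i\in\{v_{low},v_{med},v_{high}\}$ maps via $\theta_i=-v_i$ to the uniform three-values cost domain $\Theta_i=\{-v_{high},-v_{med},-v_{low}\}$, which is of the form $\Theta^{(L,M,H)}$.

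Next I would argue that a bribeproof, welfare-maximizing mechanism for such an auction instance induces a bribeproof, cost-minimizing mechanism for the diamond path auction on this three-values domain: one keeps the same outcomes and the same payments, and since the agents' utilities coincide under the identification $\theta=-v$ (up to the usual additive renormalization of payments, which does not affect bribeproofness), the bribe inequality \eqref{eq:bribeproof:breibee} for the auction translates verbatim into the one for the path auction. But the proof of Theorem~\ref{th:sp:no-three-vals} already shows that no bribeproof mechanism exists for the diamond path auction on a suitable three-values domain $\Theta^{(L,M,H)}$. Choosing the auction instance with $v_{low}:=-H$, $v_{med}:=-M$, $v_{high}:=-L$ for such a witnessing triple then yields a contradiction, which proves the corollary.

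The only points that need a little care are purely bookkeeping: (i) verifying that the sign reversal $\theta=-v$ really does turn welfare maximization into cost minimization and preserves the bribeproofness inequality — this is already implicit in Corollary~\ref{cor:RCA:char}; and (ii) checking that the three-values cost domain arising from a three-values valuation domain can be made to coincide with one for which the proof of Theorem~\ref{th:sp:no-three-vals} produces the impossibility. Since all four agents share the same valuation domain, the induced cost domains are uniform, which is exactly the regime of that theorem, so no genuinely new difficulty arises and there is no hard technical step left — all the real work has been done in Theorem~\ref{th:sp:no-three-vals} and the two observations.
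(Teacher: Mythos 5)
Your argument is correct and follows exactly the paper's route: the paper derives this corollary directly from the equivalence between the four-agent single-minded instance and the diamond path-auction instance (via $\theta_i=-v_i$) together with Theorem~\ref{th:sp:no-three-vals}, which is precisely your reduction. Your two ``points needing care'' are indeed non-issues, since the proof of Theorem~\ref{th:sp:no-three-vals} applies to any uniform three-values domain $\Theta^{(L,M,H)}$ with $L<M<H$, so any three-values valuation domain yields a witnessing triple.
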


\section{Bribeproofness versus strategyproofness}
We show that, even in our two-values domains, bribeproofness is provably more restricting that strategyproofness. In particular, there exist algorithms which admit strategyproof mechanisms 
but no bribeproof ones.

\begin{lemma}\label{le:bossy:paychange}
	If a mechanism $(A,p)$ is bribeproof for binary allocation problems, then the following implication must hold for all $i$ and for all $\theta$:
	\begin{align*}
		\influence{i} &= 0 & \mbox{$\Rightarrow$}& & \partialpay{j} &= \theta_j \cdot \influence{j}.
	\end{align*}
\end{lemma}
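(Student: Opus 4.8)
The plan is to deduce the identity from two applications of bribeproofness for the single pair $(i,j)$, preceded by a use of strategyproofness to control agent $i$'s own payment. Fix an agent $i$ and a type vector $\theta$, and abbreviate $\theta^L:=(L,\theta_{-i})$ and $\theta^H:=(H,\theta_{-i})$. The hypothesis $\influence{i}=0$ is, by definition, $A_i(\theta^L)=A_i(\theta^H)$; denote this common value by $a^*$. The point that makes everything cancel is that for \emph{any} report $\hat\theta_i$ of agent $i$, her work cost $A_i(\hat\theta_i,\theta_{-i})\cdot\theta_i$ equals $a^*\cdot\theta_i$, so it is the same under the two reports $\theta^L$ and $\theta^H$; hence, whenever we write the bribeproofness inequality \eqref{eq:bribeproof:breibee} for these two reports, every term containing $A_i$ drops out.

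First I would invoke the $i=j$ special case of \eqref{eq:bribeproof:breibee}, i.e.\ strategyproofness of agent $i$. With $i$'s true type $L$ and misreport $H$ it gives $p_i(\theta^L)-a^*L\ge p_i(\theta^H)-a^*L$, hence $p_i(\theta^L)\ge p_i(\theta^H)$; with true type $H$ and misreport $L$ it gives the reverse inequality. Therefore $p_i(\theta^L)=p_i(\theta^H)$, i.e.\ $\partialpay{i}=0$.

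Next I would apply \eqref{eq:bribeproof:breibee} for the pair $(i,j)$ with $j\ne i$, keeping $\theta_{-i}$ (and in particular $\theta_j$) fixed. Taking $i$'s true type to be $L$ and her misreport to be $H$, and cancelling the two $a^*L$ terms, the inequality becomes
\[
\bigl(p_i(\theta^L)-p_i(\theta^H)\bigr)+\bigl(p_j(\theta^L)-p_j(\theta^H)\bigr)\;\ge\;\theta_j\bigl(A_j(\theta^L)-A_j(\theta^H)\bigr),
\]
that is, $\partialpay{i}+\partialpay{j}\ge\theta_j\cdot\influence{j}$. The symmetric argument with true type $H$ and misreport $L$ yields the opposite inequality, so $\partialpay{i}+\partialpay{j}=\theta_j\cdot\influence{j}$. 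Combining this with $\partialpay{i}=0$ from the previous step gives $\partialpay{j}=\theta_j\cdot\influence{j}$, as claimed.

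I do not expect a genuine obstacle here; the only subtlety to get right is that one cannot shortcut the argument through Fact~\ref{fac:adj:pay}: the vectors $\theta^L$ and $\theta^H$ are in general not $A$-equivalent, since $\influence{i}=0$ does \emph{not} imply $A(\theta^L)=A(\theta^H)$ (a bribeproof algorithm need not be non-bossy, cf.\ Example~\ref{exe:non-boossy-not-necessary}). This is precisely why agent $i$'s payment must be pinned down by strategyproofness in its own step before the pairwise bribeproofness inequalities deliver the rest; note also that the binary-allocation assumption in the statement is not actually used, only $A_i(\theta^L)=A_i(\theta^H)$.
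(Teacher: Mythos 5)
Your proof is correct and follows essentially the same route as the paper's: first pin down $\partialpay{i}=0$ via strategyproofness, then use the two directions of the pairwise bribeproofness inequality (where the $A_i$ terms cancel because $A_i(L,\theta_{-i})=A_i(H,\theta_{-i})$) to force $\partialpay{j}=\theta_j\cdot\influence{j}$. Your closing remarks --- that Fact~\ref{fac:adj:pay} cannot be used as a shortcut and that the binary-allocation hypothesis is not actually needed --- are both accurate.
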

\begin{proof}
	The following implication is due to strategyproofness for agent $i$: 
	\[
	\influence{i} = 0 \ \ \ \mbox{ $\Rightarrow$ } \ \ \ \partialpay{i}=0. 
	\]
	Thus the utility of $i$ is constant (as a function of $i$'s reported type) and bribeproofness is equivalent to the condition that also the utility of $j$ is constant (as a function of $i$'s reported type), 
	\[
	p_j(L,\theta_{-i}) - \theta_jA_j(L,\theta_{-i}) = p_j(H,\theta_{-i}) - \theta_jA_j(H,\theta_{-i})
	\]
	since otherwise $j$ can bribe $i$ to report the type that gives the highest utility to $j$. This equality is 
	the same as $\partialpay{j} = \theta_j \cdot \influence{j}$.
\end{proof}

\begin{example}[non-bribeproof rule]
	The following algorithm $A$ is clearly monotone \eqref{eq:mon} and thus it admits a strategyproof mechanism: 
	\[
	A(L,L)=(1,0), \ \  A(H,L)=A(H,H)=(0,1), \ \ A(L,H)=(0,0).
	\]
	We next show that no mechanism $(A,p)$ can be bribeproof.  
	To see this observe that we can rescale the payments so that 
	\[
	p(L,H)=(0,0).
	\]
	By Fact~\ref{fac:adj:pay} we have 
	\[
	p(H,L)=p(H,H).
	\]
	Now by applying Lemma~\ref{le:bossy:paychange} we get
	\begin{align*}
		p(L,L) &= (L,0) & \mbox{and}& & p(H,L)&=p(H,H) = (0,H).
	\end{align*}
	This violates bribeproofness for $\theta=(L,L)$ and $\hat \theta=(H,L)$. 
\end{example}

\section{Postponed proofs}
In some of the proofs below we shall use the following notation for the agents' utilities. 
The utility of agent $i$ for a mechanism $(A,p)$ under consideration is denoted as
\[
u_i(\hat \theta;\theta_i) := p_i(\hat \theta) - A_i(\hat \theta)\cdot \theta_i.
\]

\subsection{Strong bribeproofness (and Proof of Theorem~\ref{th:sbribeproof:equivalence})}
To obtain \sbribeproof\ mechanisms we need and additional condition on the algorithm:

\begin{lemma}\label{le:sbribeproof:necessary}
	If a mechanism is \sbribeproof\ for two-values domains, then its algorithm $A$ must satisfy, for all $\theta\in \Theta$ and for all $i,j\in N$, the following conditions:
	\begin{align}
		A_i(L,L,\theta_{-ij})+A_j(L,L,\theta_{-ij})&\geq A_i(H,H,\theta_{-ij})+A_j(H,H,\theta_{-ij}), & & \label{eq:sbribeproof:sufficient:LL-HH}\\ 
		A_i(L,H,\theta_{-ij})-A_j(L,H,\theta_{-ij})&\geq A_i(H,L,\theta_{-ij})-A_j(H,L,\theta_{-ij}). \label{eq:sbribeproof:sufficient:LH-HL}
	\end{align}
\end{lemma}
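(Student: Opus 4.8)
The plan is to use strong bribeproofness of the coalition $\{i,j\}$ twice, in "opposite directions", so that every payment term cancels and only allocations remain. Throughout, fix $\theta_{-ij}$ and abbreviate the four relevant profiles as $\theta^{LL}=(L,L,\theta_{-ij})$, $\theta^{HH}=(H,H,\theta_{-ij})$, $\theta^{LH}=(L,H,\theta_{-ij})$ and $\theta^{HL}=(H,L,\theta_{-ij})$.

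To get \eqref{eq:sbribeproof:sufficient:LL-HH}, I would instantiate the strong-bribeproofness inequality \eqref{eq:bribeproof:breibee} once with true profile $\theta^{LL}$ and joint misreport $\theta^{HH}$, and once with true profile $\theta^{HH}$ and joint misreport $\theta^{LL}$ (here both $i$ and $j$ change their reports, so this genuinely uses strong, not plain, bribeproofness). Summing the two inequalities makes the payment terms $p_i(\theta^{LL}),p_j(\theta^{LL}),p_i(\theta^{HH}),p_j(\theta^{HH})$ appear with equal total coefficient on both sides, so they cancel, leaving $(H-L)\big(A_i(\theta^{LL})+A_j(\theta^{LL})-A_i(\theta^{HH})-A_j(\theta^{HH})\big)\ge 0$. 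Dividing by $H-L>0$ gives exactly \eqref{eq:sbribeproof:sufficient:LL-HH}.

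For \eqref{eq:sbribeproof:sufficient:LH-HL}, the same device is applied to the pair $\theta^{LH},\theta^{HL}$: true $\theta^{LH}$ with joint misreport $\theta^{HL}$, and true $\theta^{HL}$ with joint misreport $\theta^{LH}$. Again summing cancels all payments and yields $(H-L)\big(A_i(\theta^{LH})-A_j(\theta^{LH})-A_i(\theta^{HL})+A_j(\theta^{HL})\big)\ge 0$, whence \eqref{eq:sbribeproof:sufficient:LH-HL} after dividing by $H-L>0$. The sign pattern here — the $i$-terms keeping their sign and the $j$-terms flipping — is precisely because along the deviation agent $i$ moves from $L$ to $H$ while agent $j$ moves from $H$ to $L$, so the weight $(H-L)$ enters the $i$-part and the $j$-part with opposite signs.

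I do not expect a real obstacle: the argument is the standard "add two incentive inequalities to kill the payments" trick. The only care needed is the bookkeeping of the coefficient of each $A_k(\cdot)$ after the summation, to check that it comes out as $\pm(H-L)$ with the right sign. One small point worth stating is that, in contrast with several other arguments in the paper, this proof does not invoke Fact~\ref{fac:adj:pay}: the payments drop out simply because each occurs once on each side of the summed inequality.
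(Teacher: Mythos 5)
Your proposal is correct and is essentially identical to the paper's own proof: the paper also instantiates the strong-bribeproofness inequality for the coalition $\{i,j\}$ twice in opposite directions (true $\theta^{LL}$ versus $\theta^{HH}$ and vice versa, then true $\theta^{LH}$ versus $\theta^{HL}$ and vice versa) and sums the two inequalities so that the payments cancel and a factor $(H-L)>0$ can be divided out. Your sign bookkeeping and your remark that Fact~\ref{fac:adj:pay} is not needed are both accurate.
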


\begin{proof}
	The proof of this lemma consists in considering the two agents of a possible coalition  as a single agent with a ``multi-dimensional'' type, and then imposing strategyproofness to this agent.
	  
Strong bribeproofness requires \eqref{eq:bribeproof:breibee} to hold for $\theta$ and $\hat \theta=(\hat \theta_i,\hat \theta_j,\theta_{-ij})$.
For ease of notation, we let 
\begin{align*}
 \bar{p}(\theta_i,\theta_j):=&p(\theta_i,\theta_j,\theta_{-ij}) & \mbox{and}& & \bar A(\theta_i,\theta_j):=&A(\theta_i,\theta_j,\theta_{-ij}).
\end{align*}
By writing down condition \eqref{eq:bribeproof:breibee} explicitly we get the following inequalities:
 \begin{align*}
 \intertext{For $\theta=(L,L,\theta_{-ij})$ and $\hat \theta=(H,H,\theta_{-ij})$}
 & \bar{p}_i(L,L) - \bar{A}_i(L,L)L & &+ & \bar{p}_j(L,L)- \bar{A}_j(L,L)L & &\geq \\
 & \bar{p}_i(H,H) - \bar{A}_i(H,H)L & &+ & \bar{p}_j(H,H)- \bar{A}_j(H,H)L.& \\
 \intertext{For $\theta=(H,H,\theta_{-ij})$ and $\hat \theta=(L,L,\theta_{-ij})$}
 & \bar{p}_i(H,H) - \bar{A}_i(H,H)H & &+ & \bar{p}_j(H,H)- \bar{A}_j(H,H)H & &\geq \\
 & \bar{p}_i(L,L) - \bar{A}_i(L,L)H & &+ & \bar{p}_j(L,L)- \bar{A}_j(L,L)H.& 
 \intertext{By summing these two inequalities we obtain \eqref{eq:sbribeproof:sufficient:LL-HH}. The proof of \eqref{eq:sbribeproof:sufficient:LH-HL} is similar: }
  \intertext{For $\theta=(L,H,\theta_{-ij})$ and $\hat \theta=(H,L,\theta_{-ij})$}
 & \bar{p}_i(L,H) - \bar{A}_i(L,H)L & &+ & \bar{p}_j(L,H)- \bar{A}_j(L,H)H & &\geq \\
 & \bar{p}_i(H,L) - \bar{A}_i(H,L)L & &+ & \bar{p}_j(H,L)- \bar{A}_j(H,L)H.&
 \intertext{For $\theta=(H,L,\theta_{-ij})$ and $\hat \theta=(L,H,\theta_{-ij})$}
 & \bar{p}_i(H,L) - \bar{A}_i(H,L)H & &+ & \bar{p}_j(H,L)- \bar{A}_j(H,L)L & &\geq \\
 & \bar{p}_i(L,H) - \bar{A}_i(L,H)H & &+ & \bar{p}_j(L,H)- \bar{A}_j(L,H)L.&
 \end{align*}
and by summing these two inequalities we obtain \eqref{eq:sbribeproof:sufficient:LH-HL}.
\end{proof}

\begin{theorem}\label{th:sbribeproof:linear}
	The \linearmechanism{\left(\frac{1}{2}\right)} is \sbribeproof\ for two-values domains if and only if its algorithm $A$ satisfies bounded influence and the two conditions in Lemma~\ref{le:sbribeproof:necessary}.
\end{theorem}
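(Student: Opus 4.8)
The plan is to prove the two directions of the equivalence in Theorem~\ref{th:sbribeproof:linear} separately. For the \textbf{necessity} direction, suppose the \linearmechanism{\left(\frac{1}{2}\right)} is \sbribeproof. Then in particular it is bribeproof, so by Corollary~\ref{cor:wokload:bribeproof:1/2} the algorithm $A$ satisfies bounded influence; and since \sbribeproof ness is by definition stronger than bribeproofness, Lemma~\ref{le:sbribeproof:necessary} applies directly and yields conditions \eqref{eq:sbribeproof:sufficient:LL-HH} and \eqref{eq:sbribeproof:sufficient:LH-HL}. So this direction is essentially free, given the earlier results.

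For the \textbf{sufficiency} direction, assume $A$ satisfies bounded influence together with \eqref{eq:sbribeproof:sufficient:LL-HH} and \eqref{eq:sbribeproof:sufficient:LH-HL}; I must show \eqref{eq:bribeproof:breibee} holds for every joint misreport $\hat\theta=(\hat\theta_i,\hat\theta_j,\theta_{-ij})$. The key reduction is the observation already used in the proof of Lemma~\ref{le:sbribeproof:necessary}: treat the pair $\{i,j\}$ as a single agent facing a ``two-dimensional'' type over the four-point set $\{L,H\}^2$. I would first handle the case where the misreport changes only one coordinate, say $\hat\theta_j=\theta_j$: this is exactly ordinary bribeproofness for the single deviator $i$, which holds by Corollary~\ref{cor:wokload:bribeproof:1/2} (bounded influence). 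It then remains to cover the genuinely two-coordinate deviations $(L,L)\leftrightarrow(H,H)$ and $(L,H)\leftrightarrow(H,L)$. Using the utility rewriting \eqref{eq:linear-mechanism:utility} with $\lambda=1/2$, the constant payments $f_k$ and the constant-per-unit term both cancel, and after multiplying through by $2/(H-L)$ the inequality \eqref{eq:bribeproof:breibee} for these deviations reduces precisely to \eqref{eq:sbribeproof:sufficient:LL-HH} and \eqref{eq:sbribeproof:sufficient:LH-HL} respectively (the signs $\pm\tfrac12$ attached to each agent's allocation according to its type match the left/right grouping of those two displayed inequalities). Thus all cases are covered.

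The only subtlety — and the step I expect to require the most care — is verifying that these four cases together with their ``reverse'' versions (deviating from $\hat\theta$ back to $\theta$, etc.) are genuinely all that \sbribeproof ness demands, and that bounded influence alone really does suffice for \emph{every} single-coordinate sub-case regardless of which of $\theta_i,\theta_j$ equals $L$ or $H$. Concretely one should check that when $i$ deviates and $j$'s type is held fixed, the relevant instance of \eqref{eq:bribeproof:breibee} is one of \eqref{eq:pmon:LL}--\eqref{eq:pmon:HH} at $\lambda=1/2$, each of which is implied by bounded influence \eqref{eq:bounded-influence}; this is routine but must be stated. A clean way to organize the write-up is: (i) record that bribeproofness for single deviators follows from Corollary~\ref{cor:wokload:bribeproof:1/2}; (ii) for two-coordinate deviations, substitute \eqref{eq:linear-mechanism:utility} with $\lambda=1/2$ into \eqref{eq:bribeproof:breibee} for each of the two symmetric pairs of type profiles and observe the resulting inequalities are exactly \eqref{eq:sbribeproof:sufficient:LL-HH} and \eqref{eq:sbribeproof:sufficient:LH-HL}; (iii) conclude. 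No new inequality manipulation beyond what appears in the proofs of Theorem~\ref{th:wokload:bribeproof} and Lemma~\ref{le:sbribeproof:necessary} is needed — the content is in matching up the algebra, which is why the theorem can be stated as a clean ``if and only if''.
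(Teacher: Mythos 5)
Your proposal is correct and follows essentially the same route as the paper: necessity from Corollary~\ref{cor:wokload:bribeproof:1/2} plus Lemma~\ref{le:sbribeproof:necessary}, and sufficiency by disposing of single-coordinate deviations via ordinary bribeproofness and then reducing the four genuinely two-coordinate deviations, via the utility form \eqref{eq:linear-mechanism:utility} at $\lambda=\tfrac12$, to \eqref{eq:sbribeproof:sufficient:LL-HH} and \eqref{eq:sbribeproof:sufficient:LH-HL}. The ``subtlety'' you flag is handled in the paper exactly as you suggest, so no gap remains.
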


\begin{proof}
	The necessary part follows from  Corollary~\ref{cor:wokload:bribeproof:1/2} and by Lemma~\ref{le:sbribeproof:necessary}. We thus only show that these conditions are sufficient for strong bribeproofness. By Corollary~\ref{cor:wokload:bribeproof:1/2}, we already have that the mechanism is bribeproof. Therefore it only remains to prove that  \eqref{eq:bribeproof:breibee} holds for any two 
	$\theta$ and $\hat \theta=(\hat \theta_i,\hat \theta_j,\theta_{-ij})$ which differ in both the types of $i$ and $j$, that is,  $\hat \theta_i\neq \theta_i$ and $\hat \theta_j \neq \theta_j$. The proof is analogous to the proof of Theorem~\ref{th:wokload:bribeproof}. In particular, we use the fact that utilities are of the form \eqref{eq:linear-mechanism:utility} and consider the following four possible cases:
	\begin{description}
		\item[($\theta=(L,L,\theta_{-ij})$ and $\hat \theta=(H,H,\theta_{-ij})$)] This case   corresponds to
		\[
		\frac{1}{2}(A_i(L,L,\theta_{-ij})+A_j(L,L,\theta_{-ij})) \geq \frac{1}{2}(A_i(H,H,\theta_{-ij})+A_j(H,H,\theta_{-ij}))
		\]
		which is equivalent to \eqref{eq:sbribeproof:sufficient:LL-HH}.
		
		\item[($\theta=(H,H,\theta_{-ij})$ and $\hat \theta=(L,L,\theta_{-ij})$)] This case   corresponds to
		\[
		-\frac{1}{2}(A_i(H,H,\theta_{-ij})+A_j(H,H,\theta_{-ij})) \geq -\frac{1}{2}(A_i(L,L,\theta_{-ij})+A_j(L,L,\theta_{-ij}))
		\]
		which is equivalent to \eqref{eq:sbribeproof:sufficient:LL-HH}.
		
		\item[($\theta=(L,H)$ and $\hat \theta=(H,L)$)] This case   corresponds to
		\[
		\frac{1}{2}(A_i(L,H,\theta_{-ij})-A_j(L,H,\theta_{-ij})) \geq \frac{1}{2}(A_i(H,L,\theta_{-ij})-A_j(H,L,\theta_{-ij}))
		\]
		which is equivalent to \eqref{eq:sbribeproof:sufficient:LH-HL}.
		\item[($\theta=(H,L)$ and $\hat \theta=(L,H)$)] This case   corresponds to
		\[
		\frac{1}{2}(-A_i(H,L,\theta_{-ij})+A_j(H,L,\theta_{-ij})) \geq \frac{1}{2}(-A_i(L,H,\theta_{-ij})+A_j(L,H,\theta_{-ij}))
		\]
		which is equivalent to \eqref{eq:sbribeproof:sufficient:LH-HL}.
	\end{description}
	This completes the proof.
\end{proof}

\subsubsection{Proof of Theorem~\ref{th:sbribeproof:equivalence}}
\begin{proof}
	Since for binary allocations  the work allocated to each agent is either $0$ or $1$,
	non-bossiness and monotonicity \eqref{eq:mon} are equivalent to bounded influence \eqref{eq:bounded-influence}. 
	This gives the equivalence between the statements in Item~\ref{th:sbribeproof:equivalence:bribeproof} and Item~\ref{th:sbribeproof:equivalence:non-bossiness}.  
	
	Since \sbribeproof ness implies bribeproofness, in order to complete the proof it suffices to show that monotonicity and non-bossiness imply strong bribeproofness. To get to this conclusion, we show that $A$ satisfies also conditions \eqref{eq:sbribeproof:sufficient:LL-HH}-\eqref{eq:sbribeproof:sufficient:LH-HL} in Lemma~\ref{le:sbribeproof:necessary}, as required by Theorem~\ref{th:sbribeproof:linear}. 
	For the sake of readability, for every $\theta_{-ij}$, we consider the resulting algorithm $A(\cdot,\cdot,\theta_{-ij})$ as a two-agents algorithm $B(\cdot,\cdot)$ where the first and the second agents correspond to $i$ and $j$, respectively:
	\[
	B(x,y):= \left(A_i(x,y,\theta_{-ij}), A_j(x,y,\theta_{-ij}) \right).
	\]
	We have the following implications:
	\begin{align*}
		B(L,L)&=(0,0) & \Rightarrow& &  B(H,L)&=(0,0) & \Rightarrow& &  B(H,H)&=(0,0) \\
		B(L,L)&=(0,1) & \Rightarrow& &  B(H,L)&=(0,1) & \Rightarrow& &  B(H,H)&\neq(1,1)
		\intertext{where each implication uses both monotonicity and non-bossiness. These implications yield \eqref{eq:sbribeproof:sufficient:LL-HH}.
			In a similar way we obtain the following implication:}
		B(L,H)&=(0,1) & \Rightarrow& &  B(H,H)&=(0,1) & \Rightarrow& &  B(H,H)&=(0,1) 
	\end{align*}
	which yields \eqref{eq:sbribeproof:sufficient:LH-HL}. 
	We have thus shown that the statement of Item~\ref{th:sbribeproof:equivalence:non-bossiness}  implies that the conditions of Lemma~\ref{le:sbribeproof:necessary} hold. By Theorem~\ref{th:sbribeproof:linear}, we have that the statement of Item~\ref{th:sbribeproof:equivalence:non-bossiness} implies that of Item~\ref{th:sbribeproof:equivalence:sbribeproof}. Since the latter implies the statement of Item~\ref{th:sbribeproof:equivalence:bribeproof}, the theorem follows.
\end{proof}

\subsection{Proof of Theorem~\ref{th:sp:generalized-domain}}\label{sec:th:sp:generalized-domain:proof}
We have to prove that, for any $\epsilon\geq 0$, if $(A,p)$ is a bribeproof mechanism for 
the path auction with $\epsilon$-perturbed domain, then $(A,p)$ is a \linearmechanism{\left(\frac{1}{2}\right)}. That is, there exist constants $\{f_i\}_{i\in N}$ such that, for any $\theta$ in such domain, the payments satisfy
\begin{equation}
\label{eq:th:sp:generalized-domain:linear-payments}
p_i(\theta) = f_i + 
\begin{cases}
\frac{L+H}{2} & \mbox{if $i$ is selected for types $\theta$,} \\
\phantom{-}0  & \mbox{otherwise.}
\end{cases}
\end{equation}

We first prove that the payments of the mechanism depend uniquely on the selected path.  

\begin{nclaim}\label{cl:sp:proof:payments-solution-based}
	There exist $p^{(upper)}$ and $p^{(lower)}$ such that, for any $\theta$, it holds that
	\[
	p(\theta) =
	\begin{cases}
	p^{(upper)} & \mbox{if $1$ and $2$ are selected for types $\theta$,} \\
	p^{(lower)} & \mbox{if $3$ and $4$ are selected for types $\theta$.}
	\end{cases}
	\]   
\end{nclaim}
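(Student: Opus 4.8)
The plan is to use Fact~\ref{fac:adj:pay} repeatedly to propagate equality of payments across all type vectors that induce the same selected path. Fix the network in Figure~\ref{fig:sp-diamond}: there are two feasible allocations, the ``upper path'' $(1,1,0,0)$ using agents $1,2$ and the ``lower path'' $(0,0,1,1)$ using agents $3,4$. Since $A(\theta)$ is always one of these two, I want to show that any two type vectors $\theta',\theta''$ both selecting the upper path have $p(\theta')=p(\theta'')$, and likewise for the lower path; then $p^{(upper)}$ and $p^{(lower)}$ are well defined.

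First I would recall that Fact~\ref{fac:adj:pay} gives equality of payments whenever two type vectors differ in exactly one coordinate and $A$ returns the same allocation. So it suffices to connect any two ``upper-selecting'' vectors by a chain of single-coordinate changes, each step staying inside the upper-selecting set (and inside the domain). Since each domain $\Theta_i$ has only two elements, the set of all type vectors in the $\epsilon$-perturbed domain is the vertex set of a hypercube-like graph where edges join vectors differing in one coordinate; I need to argue the upper-selecting vertices stay connected in this graph. The key structural observation is monotonicity \eqref{eq:mon}: lowering agent $i$'s reported cost can only (weakly) increase $A_i$. For the two-path instance this means: if the upper path is selected and we lower the cost of an upper-path agent ($1$ or $2$) or raise the cost of a lower-path agent ($3$ or $4$), the upper path is still selected. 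Using this I can route from any upper-selecting vector to the canonical vector $(L-\epsilon, L-\epsilon, H, H)$ (upper agents cheapest, lower agents most expensive) by a sequence of such ``safe'' single-coordinate moves, each preserving upper-selection, and invoke Fact~\ref{fac:adj:pay} along the way. Symmetrically, every lower-selecting vector connects to $(H+\epsilon,H+\epsilon,L,L)$.

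The one subtlety is the tie-breaking: at a type vector where $\SC$ of the two paths is equal, $A$ picks one according to some fixed rule, and a single-coordinate change might cross a tie and flip the selected path, so not every single-coordinate move is ``safe.'' The point is that I only ever need moves in the monotone-safe direction described above, which never flip away from the current selection — if the upper path is (weakly) cheaper, making an upper agent cheaper keeps it (weakly) cheaper, hence still selected regardless of how ties break. So the chain argument goes through. I expect this monotonicity-driven connectivity argument — showing the safe moves suffice to reach the canonical vector without ever being forced off the upper-selecting set — to be the main thing to get right; the rest is bookkeeping with Fact~\ref{fac:adj:pay}.

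Once $p^{(upper)}$ and $p^{(lower)}$ exist, the claim is proved. (The remainder of Theorem~\ref{th:sp:generalized-domain}, not needed here, then extracts $q_i=\frac{L+H}{2}$ from bribeproofness and strategyproofness applied to these path-based payments.)
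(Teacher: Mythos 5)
Your argument is correct and matches the paper's own proof: the paper routes any upper-selecting $\theta$ to the canonical vector $(L-\epsilon,L-\epsilon,H,H)$ by exactly the chain of single-coordinate, selection-preserving moves you describe, applying Fact~\ref{fac:adj:pay} at each step (and symmetrically routes lower-selecting vectors to $(H+\epsilon,H+\epsilon,L,L)$). One small caution: your tie-breaking aside speaks of the ``cheaper'' path remaining selected, but the mechanism is arbitrary and need not minimize cost --- the correct (and sufficient) justification for the safety of each move is the one you give first, namely monotonicity of $A$, which follows from strategyproofness (hence from bribeproofness) and holds regardless of how ties are broken.
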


\begin{proof}
	Consider the following two instances having a unique shortest path, 
	\begin{align*}
		\stpathup{L-\epsilon}{L-\epsilon}{H}{H}& &
		\stpathdown{H+\epsilon}{H+\epsilon}{L}{L}&
	\end{align*}
	and let 
	$p^{(upper)} := p(L-\epsilon,L-\epsilon,H,H)$ and $p^{(lower)} := p(H+\epsilon,H+\epsilon,L,L)$ be the corresponding payments. 
	
	If $\theta$ is a type vector for which the mechanism selects the upper path, then the same must also happen if we increase the cost of edges in the lower path and decrease that of edges in the upper path: 
	\begin{align*}
		A(\theta)&=A(L-\epsilon,\theta_2,\theta_3,\theta_4)=A(L-\epsilon,L-\epsilon,\theta_3,\theta_4)  \\
		&= A(L-\epsilon,L-\epsilon,H,\theta_4)=A(L-\epsilon,L-\epsilon,H,H).
	\end{align*}
	Fact~\ref{fac:adj:pay} implies that  $p(\theta)=p(L-\epsilon,L-\epsilon,H,H)=p^{(upper)}$.
	
	Similarly, if $\theta$ is a type vector for which the mechanism selects the lower path, then 
	\begin{align*}
		A(\theta)&=A(H+\epsilon,\theta_2,\theta_3,\theta_4)=A(H+\epsilon,H+\epsilon,\theta_3,\theta_4)  \\
		&= A(H+\epsilon,H+\epsilon,L,\theta_4)=A(H+\epsilon,H+\epsilon,L,L).
	\end{align*}
	Fact~\ref{fac:adj:pay} implies that  $p(\theta)=p(H+\epsilon,H+\epsilon,L,L)=p^{(lower)}$.
\end{proof}

From the previous claim we obtain that every agent receives a fixed compensation $f_i$ plus and an additional compensation $q_i$ whenever selected.

\begin{nclaim}\label{cl:sp:proof:payments-fixed-plus-served}
	There exist constants $f_i$ and $q_i$ such that, for any $\theta$, it holds that 
	\[
	p_i(\theta) = f_i + 
	\begin{cases}
	q_i & \mbox{if $i$ is selected for types $\theta$,} \\
	0  & \mbox{otherwise.}
	\end{cases}
	\]                                                                                                                                                                    
\end{nclaim}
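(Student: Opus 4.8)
The plan is to derive the claim directly from Claim~\ref{cl:sp:proof:payments-solution-based}, which already establishes that the payment vector takes only the two values $p^{(upper)}$ and $p^{(lower)}$. First I would recall that the network in Figure~\ref{fig:sp-diamond} has exactly two feasible allocations: the upper path, with allocation vector $(1,1,0,0)$, and the lower path, with allocation vector $(0,0,1,1)$. Consequently, for every $\theta$ in the $\epsilon$-perturbed domain, agents $1$ and $2$ are selected precisely when the mechanism picks the upper path, and agents $3$ and $4$ are selected precisely when it picks the lower path; and every $\theta$ falls into one of these two cases.

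Given this, the asserted decomposition $p_i(\theta)=f_i$ when $i$ is not selected and $p_i(\theta)=f_i+q_i$ when $i$ is selected is essentially a relabelling of the two constant vectors from Claim~\ref{cl:sp:proof:payments-solution-based}. For $i\in\{1,2\}$ I would set $f_i:=p^{(lower)}_i$, the payment agent $i$ receives when it is \emph{not} selected, and $q_i:=p^{(upper)}_i-p^{(lower)}_i$; for $i\in\{3,4\}$, symmetrically, $f_i:=p^{(upper)}_i$ and $q_i:=p^{(lower)}_i-p^{(upper)}_i$. It then remains to check the two cases: if the upper path is selected then $p(\theta)=p^{(upper)}$ by Claim~\ref{cl:sp:proof:payments-solution-based}, so $p_i(\theta)=f_i+q_i$ for the selected agents $i\in\{1,2\}$ and $p_i(\theta)=f_i$ for the unselected agents $i\in\{3,4\}$; the lower-path case is identical with the roles of the two pairs exchanged. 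This proves the claim.

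There is no genuine obstacle in this step — its entire content is already contained in Claim~\ref{cl:sp:proof:payments-solution-based}. The only point deserving a line of care is that the roles of $p^{(upper)}$ and $p^{(lower)}$ as ``the payment-when-selected'' are swapped between the pair $\{1,2\}$ and the pair $\{3,4\}$, so the defining formulas for $f_i$ and $q_i$ are not uniform over all four agents. (The genuinely delicate part of Theorem~\ref{th:sp:generalized-domain}, handled separately, is to show afterwards that $q_i=\tfrac{L+H}{2}$ for every $i$, via the tie-breaking analysis together with the bribeproofness inequalities; that is not needed here.)
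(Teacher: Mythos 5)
Your proof is correct and matches the paper's argument exactly: the paper likewise derives this claim directly from Claim~\ref{cl:sp:proof:payments-solution-based} by setting $f_i:=p_i^{(lower)}$, $q_i:=p_i^{(upper)}-p_i^{(lower)}$ for $i\in\{1,2\}$ and $f_i:=p_i^{(upper)}$, $q_i:=p_i^{(lower)}-p_i^{(upper)}$ for $i\in\{3,4\}$. Your added case-check and the remark about the swapped roles of the two payment vectors are fine elaborations of the same one-line argument.
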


\begin{proof}
	Claim~\ref{cl:sp:proof:payments-fixed-plus-served} follows directly from Claim~\ref{cl:sp:proof:payments-solution-based}  by setting $f_i$ and $q_i$ as follows:
	\begin{align*}
		\mbox{ for }i&\in \{1,2\}, & f_i&:= p_i^{(lower)} & \mbox{and}& & q_i&:= p_i^{(upper)} - p_i^{(lower)}; \\ 
		\mbox{ for }i&\in \{3,4\}, & f_i&:= p_i^{(upper)} & \mbox{and}& & q_i&:= p_i^{(lower)} - p_i^{(upper)}.  
	\end{align*}
	
\end{proof}

In order to conclude that the mechanism must be a \linearmechanism{\left(\frac{1}{2}\right)} it is enough to prove that  $q_i=\frac{L+H}{2}$ for all $i$. To this end, we first show the following intermediate result.

\begin{nclaim}\label{cla:sp:characterization:sum-payments:heterogeneous}The constants $q_i$ in Claim~\ref{cl:sp:proof:payments-fixed-plus-served} must satisfy
	\begin{equation}\label{eq:sp:characterization:sum-payments:heterogeneous}
	q_1 + q_2 = L+H = q_3 + q_4.
	\end{equation}
\end{nclaim}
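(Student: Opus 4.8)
The plan is to start from the payment form already pinned down in Claim~\ref{cl:sp:proof:payments-fixed-plus-served}. After rescaling so that $f_i=0$ for every $i$ (harmless for bribeproofness, cf.\ the Remark after Definition~\ref{def:linear-mechanism}), the utility of agent $k$ at a reported profile $\hat\theta$ is
\[
u_k(\hat\theta;\theta_k)=(q_k-\theta_k)\cdot s_k(\hat\theta),\qquad
s_k(\hat\theta):=\begin{cases}1&\text{if $k$ is selected at $\hat\theta$},\\0&\text{otherwise.}\end{cases}
\]
On the network of Figure~\ref{fig:sp-diamond} agents $1,2$ are always selected together (the upper path) and agents $3,4$ together (the lower path), so $s_1=s_2=1-s_3=1-s_4$. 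Hence, for a pair $\{i,j\}$ on the \emph{same} path and a single deviation $\hat\theta=(\hat\theta_i,\theta_{-i})$ of $i$, the change of $u_i+u_j$ equals $(s_i(\hat\theta)-s_i(\theta))(q_i+q_j-\theta_i-\theta_j)$, so bribeproofness \eqref{eq:bribeproof:breibee} for this pair is exactly
\[
(s_i(\theta)-s_i(\hat\theta))\,(q_i+q_j-\theta_i-\theta_j)\ \geq\ 0 .
\]
In words: whenever a single deviation of $i$ toggles the path containing $i$ and $j$, we get $q_i+q_j\geq\theta_i+\theta_j$ if the deviation switches that path \emph{off}, and $q_i+q_j\leq\theta_i+\theta_j$ if it switches it \emph{on}.

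First I would apply this to the pair $\{1,2\}$ at profiles where agents $1,2$ report \emph{opposite} values, $\{\theta_1,\theta_2\}=\{L-\epsilon,H+\epsilon\}$, so that $\theta_1+\theta_2=L+H$ \emph{exactly}. It then suffices to produce (i) one profile and one single deviation of agent $1$ or $2$ that switches the upper path \emph{off} while $1,2$ hold opposite types (yielding $q_1+q_2\geq L+H$), and (ii) one such profile/deviation that switches the upper path \emph{on} (yielding $q_1+q_2\leq L+H$). To locate these, I would use monotonicity, the two anchor facts obtained in the proof of Claim~\ref{cl:sp:proof:payments-solution-based} (if the mechanism ever selects the upper, resp.\ lower, path, then it selects it at $(L-\epsilon,L-\epsilon,H,H)$, resp.\ at $(H+\epsilon,H+\epsilon,L,L)$), and the fact that the algorithm picks a minimum-cost path: the ``upper cost'' $\theta_1+\theta_2$ ranges over $\{2L-2\epsilon,\;L+H,\;2H+2\epsilon\}$ and the ``lower cost'' $\theta_3+\theta_4$ over $\{2L,\;L+H,\;2H\}$. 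Walking between the two anchors along a monotone chain of single-coordinate changes that passes through a profile of the mixed form $(L-\epsilon,H+\epsilon,\theta_3,\theta_4)$ forces a toggle of the upper path there (or at the adjacent step), and reading that toggle in its two directions gives both inequalities; the role of the perturbation $\epsilon$ is precisely to make sure this toggle occurs at a mixed-type profile rather than being absorbed into a tie. Running the symmetric argument with the pair $\{3,4\}$ and the profiles with $\{\theta_3,\theta_4\}=\{L,H\}$ gives $q_3+q_4=L+H$ in the same way, which finishes \eqref{eq:sp:characterization:sum-payments:heterogeneous}.

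The delicate point — the one I expect to be the main obstacle — is that the algorithm's tie-breaking rule among the two paths is not fixed in advance, so the search for the two toggling deviations above has to be carried out in each tie-breaking case and one must verify that in every case both an ``off''-toggle and an ``on''-toggle of the relevant path really occur at a mixed-type profile; this is exactly where the phrase ``otherwise there exists a coalition which violates bribeproofness'' is cashed out (the toggling profile together with the bribing agent is that coalition). One also has to dispose of the degenerate case in which the algorithm never switches paths (a constant/trivial mechanism); but then, by Fact~\ref{fac:adj:pay}, the payments are constant and the mechanism is trivially a $\left(\tfrac12\right)$-linear mechanism with a constant algorithm, so it does not affect the statement.
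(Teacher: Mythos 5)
Your reduction to the toggle inequalities $q_i+q_j\geq \theta_i+\theta_j$ (when a single deviation of $i$ switches the common path of $i,j$ off) and $q_i+q_j\leq\theta_i+\theta_j$ (when it switches it on) is exactly the engine the paper uses, and your treatment of the pair $\{1,2\}$ at profiles with $\theta_1+\theta_2=L+H$ does recover two of the four inequalities. The gap is in the sentence ``running the symmetric argument with the pair $\{3,4\}$ \ldots gives $q_3+q_4=L+H$ in the same way.'' The two paths are \emph{not} symmetric once a tie-breaking rule is fixed, and for the disfavored path one of the two required toggles simply does not exist at a mixed-type profile. Concretely, suppose ties at cost $L+H$ versus $L+H$ are broken in favor of the upper path. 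For the lower path to be selected while $\{\theta_3,\theta_4\}=\{L,H\}$, the upper path must cost $2H+2\epsilon$; but then any deviation by agent $3$ or $4$ leaves the lower cost at most $2H<2H+2\epsilon$, so the lower path stays selected and no ``off''-toggle occurs. Hence your method yields only $q_3+q_4\leq L+H$ in this case, and the inequality $q_3+q_4\geq L+H$ is unreachable by same-path coalitions at profiles whose types sum to $L+H$ (off-toggles at $\theta_3=\theta_4=L$ only give the weaker $q_3+q_4\geq 2L$). The mirror-image problem arises for $q_3+q_4\leq L+H$ when ties favor the lower path.

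This is precisely where the paper departs from your plan: to prove the missing inequality it assumes $q_3+q_4<L+H$ for contradiction, extracts an agent $k\in\{1,2\}$ with $q_k\geq\frac{L+H}{2}$ (from the already-proved $q_1+q_2\geq L+H$) and an agent $l\in\{3,4\}$ with $q_l<\frac{L+H}{2}$, and then exhibits a \emph{cross-path} coalition $\{k,l\}$ in which $l$'s deviation flips the selected path, transferring the ``selected'' bonus from the low-$q$ agent to the high-$q$ agent and violating \eqref{eq:bribeproof:breibee}. Your framework never considers coalitions straddling the two paths, so it cannot produce this step; you would need to add it (in each tie-breaking case) to close the proof. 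Two smaller points: for $\epsilon=0$ the toggles you rely on for $q_1+q_2\leq L+H$ do get absorbed into ties at $(L,L,L,L)$ (and $(H,H,H,H)$), and the paper needs separate sub-arguments there rather than the perturbation saving the day; and in the degenerate constant-allocation case the quantities $q_i$ are all zero, so that case must be excluded from the claim rather than folded into it.
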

\begin{proof}

	Let us first consider the case the mechanism breaks ties as follows in this particular instance:
	\begin{equation}
	\label{eq:sp:heterogeneous:tie-breaking-one}
	\stpathup{L-\epsilon}{H+\epsilon}{L}{H}
	\end{equation}
	From this we obtain the following four inequalities which imply \eqref{eq:sp:characterization:sum-payments:heterogeneous}:
	\begin{description}                         
		\item[($q_1+q_2 \geq L+H$).] By contradiction, if $q_1+q_2 < L+H$ then the coalition given by the two agents in the upper path can improve in this scenario:
		\begin{align*}
			\stpathup{L-\epsilon}{H+\epsilon}{L}{H}& & \leftarrow \theta& & \hat \theta&\rightarrow & \stpathdown{H+\epsilon}{H+\epsilon}{L}{H}&
		\end{align*}
		(Here and in the rest of this proof, the left picture shows the true types $\theta$ and the right one the types $\hat \theta$ for which bribeproofness \eqref{eq:bribeproof:breibee} is violated.)
		\item[($q_3+q_4 \leq L+H$).] By contradiction, if $q_3+q_4 > L+H$ then the coalition given by the two agents in the lower path can improve in this scenario:
		\begin{align*}
			\stpathup{L-\epsilon}{H+\epsilon}{L}{H}& & 
			\leftarrow \theta& & \hat \theta&\rightarrow &
			\stpathdown{L-\epsilon}{H+\epsilon}{L}{L}&
		\end{align*}
		\item[($q_1+q_2 \leq L+H$).] By contradiction, if $q_1+q_2 > L+H$ then  the coalition given by the two agents in the upper path violates bribeproofness \eqref{eq:bribeproof:breibee} for $\theta$ and $\hat \theta$ being the types on the left and on the right of this picture:
		\begin{align*}
			\stpathdown{L-\epsilon}{H+\epsilon}{L}{L}& & 
			\leftarrow \theta& & \hat \theta&\rightarrow &
			\stpathup{L-\epsilon}{L-\epsilon}{L}{L}&
		\end{align*}
		Note that for $\epsilon>0$ the upper path must be selected for the right instance. We next show that this must also happen for $\epsilon =0$ since otherwise, because  $q_1+q_2 > L+H \geq q_3+q_4$,  
		\begin{align*}
			\stpathdown{L}{L}{L}{L}& &
			\leftarrow \theta& & \hat \theta&\rightarrow &
			\stpathup{L}{L}{H}{L}&
		\end{align*}
		for the case agent $3$ satisfies $q_3<\min\{q_1,q_2\}$. The case 
		$q_4<\min\{q_1,q_2\}$ is analogous.
		\item[($q_3+q_4 \geq L+H$).] By contradiction, assume $q_3+q_4 < L+H$. Let us first observe that for every instance in which both paths cost $L+H$ the mechanism must break ties in favor of the upper path like in \eqref{eq:sp:heterogeneous:tie-breaking-one}.
		
		By the inequality $q_1 + q_2 \geq L+H$ proven above, there must be one agent $k\in \{1,2\}$ with $q_k\geq \frac{L+H}{2}$. 
		From $q_3+q_4 < L+H$ it also follows that there must be one agent $l\in \{3,4\}$ with $q_l < \frac{L+H}{2}$. Thus, there are four possible cases to consider:
		\begin{description}
			\item[($q_1\geq \frac{L+H}{2}$ and $q_3< \frac{L+H}{2}$).] The coalition given by  agents $1$ and $3$ can improve in this scenario:
			\begin{align*}
				\stpathdown{L-\epsilon}{H+\epsilon}{L}{L}& &
				\leftarrow \theta& & \hat \theta&\rightarrow &
				\stpathup{L-\epsilon}{H+\epsilon}{H}{L}& 
			\end{align*}
			where the upper path is selected on the right instance because of the initial observation.
			\item[($q_1\geq \frac{L+H}{2}$ and $q_4< \frac{L+H}{2}$).] The coalition given by  agents $1$ and $4$ can improve in this scenario:
			\begin{align*}
				\stpathdown{L-\epsilon}{H+\epsilon}{L}{L}& &
				\leftarrow \theta& & \hat \theta&\rightarrow &
				\stpathup{L-\epsilon}{H+\epsilon}{L}{H}& 
			\end{align*}
			where the upper path is selected on the right instance because of the initial observation.
			\item[($q_2\geq \frac{L+H}{2}$ and $q_3< \frac{L+H}{2}$).] The coalition given by  agents $2$ and $3$ can improve in this scenario:
			\begin{align*}
				\stpathdown{H+\epsilon}{L-\epsilon}{L}{L}& &
				\leftarrow \theta& & \hat \theta&\rightarrow &
				\stpathup{H+\epsilon}{L-\epsilon}{H}{L}& 
			\end{align*}
			where the upper path is selected on the right instance because of the initial observation.
			\item[($q_2\geq \frac{L+H}{2}$ and $q_4< \frac{L+H}{2}$).] The coalition given by  agents $2$ and $4$ can improve in this scenario:
			\begin{align*}
				\stpathdown{H+\epsilon}{L-\epsilon}{L}{L}& & 
				\leftarrow \theta& & \hat \theta&\rightarrow &
				\stpathup{H+\epsilon}{L-\epsilon}{L}{H}& 
			\end{align*}
			where the upper path is selected on the right instance because of the initial observation.
		\end{description}
	\end{description}
	We have thus shown that \eqref{eq:sp:characterization:sum-payments:heterogeneous} holds if the mechanism breaks ties as in \eqref{eq:sp:heterogeneous:tie-breaking-one}.
	
	Let us now consider the case the mechanism does not break ties as in \eqref{eq:sp:heterogeneous:tie-breaking-one}, that is, 
	the mechanism breaks ties as follows in that instance:
	\begin{equation}
	\label{eq:sp:heterogeneous:tie-breaking-two}
	\stpathdown{L-\epsilon}{H+\epsilon}{L}{H}
	\end{equation}
	From this we obtain the following four inequalities which imply \eqref{eq:sp:characterization:sum-payments:heterogeneous}:
	\begin{description}                         
		\item[($q_3+q_4 \geq L+H$).] By contradiction, if $q_3+q_4 < L+H$ then the coalition given by the two agents in the lower path can improve in this scenario:
		\begin{align*}
			\stpathdown{L-\epsilon}{H+\epsilon}{L}{H}& & 
			\leftarrow \theta& & \hat \theta&\rightarrow &
			\stpathup{L-\epsilon}{H+\epsilon}{H}{H} & 
		\end{align*}
		\item[($q_1+q_2 \leq L+H$).] By contradiction, if $q_1+q_2 > L+H$ then  the coalition given by the two agents in the upper path can improve in this scenario:
		\begin{align*}
			\stpathdown{L-\epsilon}{H+\epsilon}{L}{H}& & 
			\leftarrow \theta& & \hat \theta&\rightarrow &
			\stpathup{L-\epsilon}{L-\epsilon}{L}{H}& 
		\end{align*}
		\item[($q_1+q_2 \geq L+H$).] By contradiction, if $q_1+q_2 < L+H$ then the coalition given by the two agents in the upper path can improve in this scenario:
		\begin{align*}
			\stpathup{L-\epsilon}{H+\epsilon}{H}{H}& &
			\leftarrow \theta& & \hat \theta&\rightarrow &
			\stpathdown{H+\epsilon}{H+\epsilon}{H}{H}& 
		\end{align*}
		Note that for $\epsilon>0$ the lower path must be selected for the right instance. We next show that this must also happen for $\epsilon =0$ since otherwise, because  $q_1+q_2 < L+H \leq q_3+q_4$,  
		\begin{align*}
			\stpathup{H}{H}{H}{H}& &
			\leftarrow \theta& & \hat \theta&\rightarrow &
			\stpathdown{H}{H}{L}{H}& 
		\end{align*}
		for the case agent $3$ satisfies $q_3>\max\{q_1,q_2\}$. The case 
		$q_4>\max\{q_1,q_2\}$ is analogous.
		
		\item[($q_3+q_4 \leq L+H$).] By contradiction, assume $q_3+q_4 > L+H$. Let us first observe that for every instance in which both paths cost $L+H$ the mechanism must break ties in favor of the lower path like in \eqref{eq:sp:heterogeneous:tie-breaking-two}. 
		
		By the inequality $q_1 + q_2 \geq L+H$ proven above, there must be one agent $k\in \{1,2\}$ with $q_k\geq \frac{L+H}{2}$. 
		From $q_3+q_4 < L+H$ it also follows that there must be one agent $l\in \{3,4\}$ with $q_l< \frac{L+H}{2}$. Thus, there are four possible cases to consider:
		\begin{description}
			\item[($q_3< \frac{L+H}{2}$ and $q_1 \geq \frac{L+H}{2}$).] The coalition given by  agents $1$ and $3$ can improve in this scenario:
			\begin{align*}
				\stpathdown{L-\epsilon}{H+\epsilon}{L}{H}& &
				\leftarrow \theta& & \hat \theta&\rightarrow &
				\stpathup{L-\epsilon}{H+\epsilon}{H}{H}&  
			\end{align*} 
			\item[($q_3 < \frac{L+H}{2}$ and $q_2 \geq  \frac{L+H}{2}$).] The coalition given by  agents $2$ and $3$ can improve in this scenario:
			\begin{align*}
				\stpathdown{H+\epsilon}{L-\epsilon}{L}{H}& & 
				\leftarrow \theta& & \hat \theta&\rightarrow &
				\stpathup{H+\epsilon}{L-\epsilon}{H}{H}& &  
			\end{align*}
			where the lower path is selected on the left instance because of the initial observation.
			\item[($q_4<\frac{L+H}{2}$ and $q_1 \geq \frac{L+H}{2}$).] The coalition given by  agents $1$ and $4$ can improve in this scenario:
			\begin{align*}
				\stpathdown{L-\epsilon}{H+\epsilon}{H}{L}& & 
				\leftarrow \theta& & \hat \theta&\rightarrow &
				\stpathup{L-\epsilon}{H+\epsilon}{H}{H}&  
			\end{align*}
			where the lower path is selected on the left instance because of the initial observation.
			\item[($q_4 < \frac{L+H}{2}$ and $q_2 \geq \frac{L+H}{2}$).] The coalition given by  agents $2$ and $4$ can improve in this scenario:
			\begin{align*}
				\stpathdown{H+\epsilon}{L-\epsilon}{H}{L}& &   
				\leftarrow \theta& & \hat \theta&\rightarrow &
				\stpathup{H+\epsilon}{L-\epsilon}{H}{H}& 
			\end{align*}
			where the lower path is selected on the left instance because of the initial observation.
		\end{description}
		This complete the proof of inequality $q_3+q_4 \leq L+H$.
	\end{description}
	We have thus shown that \eqref{eq:sp:characterization:sum-payments:heterogeneous} must also hold in the case the mechanism uses the tie breaking rule in \eqref{eq:sp:heterogeneous:tie-breaking-two}. Thus, from the previously considered tie breaking rule \eqref{eq:sp:heterogeneous:tie-breaking-one}, we conclude that \eqref{eq:sp:characterization:sum-payments:heterogeneous} must hold in any case, which proves the claim.
\end{proof}

\begin{nclaim}
	The constants $q_i$ in Claim~\ref{cl:sp:proof:payments-fixed-plus-served} must be all equal to $\frac{L+H}{2}$. 
\end{nclaim}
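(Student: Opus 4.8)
The plan is to combine the two sum identities already obtained in Claim~\ref{cla:sp:characterization:sum-payments:heterogeneous}, namely $q_1+q_2=L+H=q_3+q_4$, with a family of bribeproofness constraints that compare the extra payment $q_i$ of an agent on one path with that of an agent on the other. Concretely, I will show that $q_i\le q_j$ and $q_j\le q_i$ for every upper-path agent $i\in\{1,2\}$ and every lower-path agent $j\in\{3,4\}$. Once all four numbers are known to be equal, the identity $q_1+q_2=L+H$ (equivalently $q_3+q_4=L+H$) forces the common value to be $\frac{L+H}{2}$, which is exactly the claim.

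To obtain, say, $q_1\le q_3$, I would exhibit two type vectors $\theta$ and $\hat\theta=(H,\theta_{-3})$ that differ only in agent $3$'s reported cost and across which the mechanism switches the selected path: at $\theta$ the lower path is the uniquely cheaper one (so $3$ and $4$ are selected), while raising agent $3$'s cost from $L$ to $H$ makes the upper path cheaper (so $1$ and $2$ become selected); moreover I would arrange that agent $1$ has cost $L$ in both instances. Writing \eqref{eq:bribeproof:breibee} for the coalition $\{1,3\}$ with the bribe being agent $3$'s misreport: at the truthful profile agent $1$ is unselected (utility $f_1$) and agent $3$ is selected at cost $L$ (utility $f_3+q_3-L$), whereas at the bribed profile agent $1$ is selected at cost $L$ (utility $f_1+q_1-L$) and agent $3$ is unselected (utility $f_3$); bribeproofness then reads $q_3-L\ge q_1-L$, i.e.\ $q_1\le q_3$. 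Interchanging the roles of the two paths (an instance where the upper path is uniquely cheaper and agent $1$'s misreport flips it to the lower path) yields $q_3\le q_1$, and symmetric instances, using that $1,2$ play symmetric roles on the upper path and $3,4$ on the lower path, handle the remaining pairs $\{1,4\},\{2,3\},\{2,4\}$.

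The main obstacle — precisely the point that the proof of Claim~\ref{cla:sp:characterization:sum-payments:heterogeneous} had to wrestle with — is the construction of these switching instances. Two ingredients are needed: first, that the mechanism must select the strictly cheaper of the two paths (a fact established in the course of proving Claim~\ref{cla:sp:characterization:sum-payments:heterogeneous}, ultimately by propagating monotonicity \eqref{eq:mon} from the unique-shortest-path instances of Claim~\ref{cl:sp:proof:payments-solution-based}); and second, a choice of the two remaining edge costs so that raising one agent's reported cost by exactly $H-L$ tips the comparison between the two path lengths across the value $L+H$. Since the path actually chosen at the resulting $L+H$-versus-$L+H$ tie depends on the tie-breaking rule, I would carry the argument out separately in the two tie-breaking sub-cases already isolated in the proof of Claim~\ref{cla:sp:characterization:sum-payments:heterogeneous} (all ties resolved toward the upper path, respectively toward the lower path); in each sub-case one can exhibit a switching instance in which the agent who ends up being paid has cost $L$, which is what makes the final inequality tight rather than off by $H-L$.

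Finally, I would note that for $\epsilon>0$ the very same scheme delivers the sharper relations $q_i=q_j-\epsilon$ for all cross pairs; combined with the sum identities these force $\epsilon=0$. Hence for $\epsilon>0$ no bribeproof mechanism exists (so the claim holds vacuously there), while for $\epsilon=0$ we get exactly $q_i=\frac{L+H}{2}$ for every $i$, completing the proof of Theorem~\ref{th:sp:generalized-domain}.
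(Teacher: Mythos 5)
Your proposal is correct and follows essentially the same route as the paper: both derive the cross-path equalities from the bribeproofness inequality \eqref{eq:bribeproof:breibee} applied to a coalition with one agent per path at instances where a single misreport flips the selected path, split the argument according to the tie-breaking rule, and then invoke $q_1+q_2=L+H=q_3+q_4$ to pin the common value at $\frac{L+H}{2}$. The only cosmetic difference is that you compare all four cross pairs directly, whereas the paper proves only $q_1=q_3$ explicitly and, when the tie-breaking blocks the $\{1,3\}$ coalition, passes to the complementary coalition $\{2,4\}$ via the sum identity; your closing observation that for $\epsilon>0$ the inequalities become $q_j\geq q_i+\epsilon$ in both directions and hence rule out any mechanism also matches what the paper's instances yield.
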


\begin{proof}
	By Claim~\ref{cla:sp:characterization:sum-payments:heterogeneous} is is enough to show that $q_1=q_2=q_3=q_4$.
	We first prove $q_1=q_3$. 
	
	By contradiction, if $q_1>q_3$ then we have two possible tie breaking for types $\theta=(L-\epsilon,H+\epsilon,L,H)$:
	\begin{description}
		\item[Case~1.] The mechanism selects the lower path. Since $q_1>q_3$ the coalition given by agents $1$ and $3$ can improve in this scenario:
		\begin{align*}
			\stpathdown{L-\epsilon}{H+\epsilon}{L}{H}& &  
			\leftarrow \theta& & \hat \theta&\rightarrow &
			\stpathup{L-\epsilon}{H+\epsilon}{H}{H}&  
		\end{align*}
		\item[Case~2.] The mechanism selects the upper path.  By Claim~\ref{cla:sp:characterization:sum-payments:heterogeneous}, $q_1>q_3$ implies  $q_2  <  q_4$, and thus  the coalition given by agents $2$ and $4$ can improve in this scenario:
		\begin{align*}
			\stpathup{L-\epsilon}{H+\epsilon}{L}{H}& &  
			\leftarrow \theta& & \hat \theta&\rightarrow &
			\stpathdown{L-\epsilon}{H+\epsilon}{L}{L}&  
		\end{align*}
	\end{description}
	Let us now assume by contradiction that $q_1<q_3$. We consider again the two possible tie breaking for  types $\theta=(H+\epsilon,L-\epsilon,L,H)$:
	\begin{description}
		\item[Case~1.] The mechanism selects the upper path. Since $q_1<q_3$ the coalition given by agents $1$ and $3$ can improve in this scenario:
		\begin{align*}
			\stpathup{H+\epsilon}{L-\epsilon}{H}{L}& &  
			\leftarrow \theta& & \hat \theta&\rightarrow &
			\stpathdown{H+\epsilon}{L-\epsilon}{L}{L}&
		\end{align*}
		\item[Case~2.] The mechanism selects the lower path.  By Claim~\ref{cla:sp:characterization:sum-payments:heterogeneous}, $q_1<q_3$ implies  $q_2  >  q_4$, and thus  the coalition given by agents $2$ and $4$ can improve in this scenario:
		\begin{align*}
			\stpathdown{H+\epsilon}{L-\epsilon}{H}{L}& & 
			\leftarrow \theta& & \hat \theta&\rightarrow &
			\stpathup{H+\epsilon}{L-\epsilon}{H}{H}& 
		\end{align*}
	\end{description}  
\end{proof}
By combining the above claims we have that the payments of a bribeproof mechanism must satisfy \eqref{eq:th:sp:generalized-domain:linear-payments} and thus the theorem follows.

\subsection{Proof of Corollary~\ref{cor:path-auction:no-collusion-proof}}
\begin{proof}
	We show that on the instance in Figure~\ref{fig:sp-diamond} the \linearmechanism{\left(\frac{1}{2}\right)} is not collusion-proof. In particular, we show that even a coalition of three agents can globally obtain a better utility when one of them misreport her type. 
	
	Suppose the (true) types are $\theta = (L,L,L,L)$ and that the mechanism selects the upper path in this case (i.e., agents $1$ and $2$).
	The coalition of agents $2$, $3$ and $4$ can improve their overall utility by reporting $\hat \theta=(L,H,L,L)$:
	\[
	u_2(\theta;L) + u_3(\theta;L) + u_4(\theta;L) = \frac{L+H}{2} + f_2+f_3+f_4 - L 
	\]
	while 
	\[
	u_2(\hat \theta;L) + u_3(\hat \theta;L) + u_4(\hat \theta;L) = 2\frac{L+H}{2} + f_2+f_3+f_4 - L
	\]
	and thus the mechanism cannot be collusion-proof.
\end{proof}

\subsection{Proof of Theorem~\ref{th:sp:no-three-vals}}
\begin{proof}
	Suppose there exists a bribeproof mechanism $(A,p)$ for the path auction on the network in Figure~\ref{fig:sp-diamond} over a three-values domain $\Theta^{(L,M,H)}$, with $L<M<H$. Clearly  $(A,p)$ is bribeproof if we restrict to the two-values domain $\Theta^{(L,M)}$.  Corollary~\ref{cor:sp:characterization} implies that the mechanism must be a \linearmechanism{\left(\frac{1}{2}\right)} for this two-values domain. That is, there exist constants $\{f_i\}_{i\in N}$ such that, for every $\theta \in \Theta^{(L,M)}$ the payments of each agent $i$ are 
	\begin{equation}
	\label{eq:sp-no-three-vals:LM-subdomain}
	p_i(\theta) = f_i + 
	\begin{cases}
	\frac{L+M}{2} & \mbox{if $i$ is selected for types $\theta$,} \\
	\phantom{-}0  & \mbox{otherwise.}
	\end{cases}
	\end{equation}
	To complete the proof we observe that this mechanism is not bribeproof on the original three-values domain $\Theta^{(L,M,H)}$.  Indeed, for types $\theta=(L,H,L,H)$ the two selected agents $i$ and $j$ have total utility
	\[
	f_i+f_j + L + M - (L+H)
	\]
	while if agent $j$ reports $\hat \theta_j=H$ their total utility improves to 
	\[
	f_i+f_j,
	\]
	since in the latter case the path formed by agents $i$ and $j$ does no longer have a minimal cost.
\end{proof}

\subsection{Proof of Theorem~\ref{th:one-job:bribeproof}}

\begin{proof}
	We first prove that the particular tie-breaking rule of the mechanism satisfies the following condition:\footnote{This condition is similar to \emph{replacement monotonicity} \citep[see e.g.][]{Wak13} which dictates that, if an agent changes her own allocation, by changing her reported type, then the allocation of every other agent should change in the opposite direction by a non-larger amount.} 
	For any $i$ and $j\neq i$ the following implication holds
	\begin{align}
		A_i(\theta')&>A_i(\theta'') & \Rightarrow& & A_j(\theta')&\leq A_j(\theta''), \label{eq:replacement-monotonicity}
		\intertext{where $\theta'$ and $\theta''$ are any two type vectors that differ only in $i$'s type,}
		\theta'=&(\theta_i',\theta_{-i}) & \mbox{and}& & \theta''=&(\theta_i'',\theta_{-i}). \nonumber
	\end{align}
	Note that the mechanism breaks ties according to the following strict order:
	\begin{align*}
		\theta_i <_{prec}  \theta_j& & \equiv& & ((\theta_i < \theta_j) \mbox{ or }& (\theta_i=\theta_j \mbox{ and } i < j)).
	\end{align*}
	By contradiction, suppose \eqref{eq:replacement-monotonicity} does not hold, that is\footnote{Note that \eqref{eq:replacement-monotonicity} must hold for $k=1$ item, and thus we are considering $k\geq 2$ for what concerns the proof of \eqref{eq:replacement-monotonicity}.}
	\begin{align*}
		A_i(\theta')=&1=A_j(\theta') & \mbox{and}& & A_i(\theta'')=&0=A_j(\theta'').
		\intertext{Since the number of selected agents is constant, 
			there must be a third agent $h$ such that}
		A_h(\theta')=&0 & \mbox{and}& & A_h(\theta'')=&1.
		\intertext{The allocations for $h$ and $j$ and the tie-breaking rule imply}
		\theta'_j&<_{prec}\theta'_h & \mbox{and}& & \theta'_h=\theta''_h&<_{prec}\theta''_j=\theta'_j
	\end{align*}
	which contradicts the fact that ``$<_{prec}$'' is a strict order.

	We  first show that the strategyproofness holds, that is,
	\[
	u_i(\theta;\theta_i) = (M - \theta_i)A_i(\theta) \geq (M - \theta_i)A_i(\hat \theta_i,\theta_{-i})= u_i(\hat \theta_i,\theta_{-i};\theta_i), 
	\]
	which will follow from the monotonicity of the algorithm. Indeed, since  for $\theta_i =M$ the utilities are equal zero, 
	we only have to consider $\theta_i \in \{L,H\}$. Note that the algorithm satisfies
	\[
	A_i(L,\theta_{-i}) \geq A_i(H,\theta_{-i})
	\]
	because, if $i$ is selected for $\theta_i=H$, then she is also selected for $\theta_i=L$. This implies strategyproofness.
	
	We next prove bribeproofness, that is, that \eqref{eq:bribeproof:breibee} holds for all 
	$\theta$ and $\hat \theta=(\hat \theta_i,\theta_{-i})$, and for all  
	$i$ and $j$. Note that, since we have already proven strategyproofness (i.e., the case $i=j$), we have to consider only the case the utility of $j$ for $\hat \theta$ is be better than her utility for $\theta$, with $j\neq i$. By definition of the payments, this can happen  only in one of these two cases:
	\begin{align*}
		\theta_j=&L, &  A_j(\theta)&=0 & \mbox{and}& & A_j(\hat \theta)&=1; \\
		\theta_j=&H, &  A_j(\theta)&=1 & \mbox{and}& & A_j(\hat \theta)&=0. 
	\end{align*}
	In the first case, condition \eqref{eq:replacement-monotonicity} implies
	\begin{align*}
		A_i(\theta)&=1 & \mbox{and}& & A_i(\hat \theta)&=0,
	\end{align*}
	and thus $\theta_i \leq \theta_j$ follows from $A_i(\theta)=1$ and $A_j(\theta)=0$. Therefore
	\[
	u_i(\theta;\theta_i)+u_j(\theta;\theta_j)= M - \theta_i \geq M - \theta_j = u_i(\hat \theta;\theta_i) + u_j(\hat \theta;\theta_j).
	\]
	In the second case, condition \eqref{eq:replacement-monotonicity} implies
	\begin{align*}
		A_i(\theta)&=0 & \mbox{and}& & A_i(\hat \theta)&=1
	\end{align*}
	and thus $\theta_i \geq \theta_j$ follows from $A_i(\theta)=0$ and $A_j(\theta)=1$. Therefore
	\[
	u_i(\theta;\theta_i)+u_j(\theta;\theta_j)= M - \theta_j \geq M - \theta_i = u_i(\hat \theta;\theta_i) + u_j(\hat \theta;\theta_j).
	\]
	We have thus shown that \eqref{eq:bribeproof:breibee} holds. 
\end{proof}

\subsubsection{Necessity of a particular tie-breaking rule.}
We   note that, because the domain contains three possible values, one must use a particular tie-breaking rule (as the mechanism is Figure~\ref{fig:M-compensation} does). We illustrate this aspect with the following example.

\begin{example}[tie breaking rules]
	Consider the $2$-item auction over three-values domains and with four agents. Suppose the mechanism breaks ties according to the following ordering: 
	\begin{align*}
		(0,0,1,1)\prec (1,1,0,0) \prec \cdots
	\end{align*}
	Then the mechanism which pays each selected agent an amount $M$ (the variant of the mechanism in Figure~\ref{fig:M-compensation} using the above tie-breaking rule) is not bribeproof:
	\begin{align*}
		A(H,L,L,L)&=(0,0,1,1) & \mbox{and}& & A(L,L,L,L)&=(1,1,0,0) 
		\intertext{and, for the case the left type vector corresponds to the true types, the sum of the utilities of the agents $1$ and $2$ in the two cases is}
		0 +& 0 & \mbox{and}& & M -H +& M - L,
	\end{align*}
	meaning that, for any domain with $2M>H+L$, the mechanism violates \eqref{eq:bribeproof:breibee} for $\theta=(H,L,L,L)$ and
	$\hat \theta=(L,L,L,L)$.
\end{example}

\subsection{Proof of Theorem~\ref{th:parallel:three-vals:char}}%
\label{sec:proof:th:parallel:three-vals:char}
We shall prove that the payments of any bribeproof mechanism must satisfy
\begin{align}
\label{eq:fixed-compensation-equivalent:full-proof}
 p_1^{(sel~1)} &= M + p_1^{(sel~2)} & \mbox{ and }& &   
  p_2^{(sel~2)} &= M + p_2^{(sel~1)}\
\end{align}
where $p^{(sel~i)}$ is the payment that the mechanism must choose whenever agent $i$ is selected (we shall see below that 
the payments depend only on the selected agent). This means the mechanism is a \linearmechanism{\lambda_M} with $f_1 = p_1^{(sel~2)}$ and $f_2 = p_2^{(sel~1)}$.

In the following, we depict type vectors  and the corresponding solution together by means of
a gray box indicating the selected agent as follows:
\begin{align}
  \label{eq:opt:allocation}
  (\take{L},M)&, & (\take{L},H)&, & (\take{M},H)& & \mbox{ and }& &
  (M,\take{L})&, & (H,\take{L})&, & (H,\take{M})&.
\end{align}
These instances have a unique optimum solution, while for the case in which the
cost of the two agents is the same, the mechanism can break ties arbitrarily. 
Let us define 
\begin{align}
\label{eq:opt:allocation:payments}
  p^{(sel~1)}:=&p(\take{L},M) & \mbox{ and }& &
  p^{(sel~1)}:=&p(M,\take{L}).
\end{align}
and observe that by Fact~\ref{fac:adj:pay} for the type vectors in \eqref{eq:opt:allocation} 
the payments depend only on the selected agent:
\begin{align}
  \label{eq:opt:allocation:payments-solution-based}
  p(\take{L},M)&=p(\take{L},H)=p(\take{M},H)& & \mbox{ and }& &
  p(M,\take{L})&=p(H,\take{L})=p(H,\take{M})&.
\end{align}
The mechanism, in the remaining three type vectors $(L,L)$, $(M,M)$, $(H,H)$, can break ties in different ways. We show that in either possible case, \eqref{eq:fixed-compensation-equivalent:full-proof} must hold. The proof is broken in the following three lemmas. 

\begin{lemma}\label{le:ties:same}
  If the mechanism breaks ties always in favor of the first agent, i.e.,
  \begin{align}
    (\take{L},L),& & (\take{M},M),& & (\take{H},H)&,
  \end{align}
  then the payments satisfy \eqref{eq:fixed-compensation-equivalent:full-proof}.
\end{lemma}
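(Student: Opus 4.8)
The plan is to first pin the payments down completely --- they will turn out to depend only on which of the two agents is selected --- and then extract the two equalities in \eqref{eq:fixed-compensation-equivalent:full-proof} from plain strategyproofness (the $i=j$ case of \eqref{eq:bribeproof:breibee}), using in an essential way that the domain has \emph{three} values.

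First I would extend \eqref{eq:opt:allocation:payments-solution-based} to the three tie vectors $(L,L),(M,M),(H,H)$. Under the hypothesis of the lemma, agent $1$ is selected for $(\take{L},L)$, $(\take{M},M)$ and $(\take{H},H)$. Now $(\take{L},L)$ is $A$-equivalent to $(\take{L},M)$ (they differ only in agent $2$'s type and both select agent $1$), so by Fact~\ref{fac:adj:pay} they have the same payment; likewise $(\take{M},M)$ is $A$-equivalent to $(\take{M},H)$, and $(\take{H},H)$ is $A$-equivalent to $(\take{M},H)$ (this time differing in agent $1$'s type, again both selecting agent $1$). Combining with \eqref{eq:opt:allocation:payments-solution-based} we get $p(\theta)=p^{(sel~1)}$ for every $\theta$ in which agent $1$ is selected and $p(\theta)=p^{(sel~2)}$ for the remaining three vectors $(M,\take{L}),(H,\take{L}),(H,\take{M})$. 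It then makes sense to set $a:=p_1^{(sel~1)}-p_1^{(sel~2)}$ and $b:=p_2^{(sel~2)}-p_2^{(sel~1)}$, and the goal of the lemma becomes $a=b=M$.

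Next I would apply strategyproofness to agent $1$ along two ``slices''. Fixing agent $2$'s report to $M$, agent $1$ is selected exactly when she reports $L$ or $M$: agent $1$ of true type $M$ not gaining by reporting $H$ gives $p_1^{(sel~1)}-M\ge p_1^{(sel~2)}$, i.e.\ $a\ge M$, and agent $1$ of true type $H$ not gaining by reporting $M$ gives $p_1^{(sel~2)}\ge p_1^{(sel~1)}-H$, i.e.\ $a\le H$. Fixing instead agent $2$'s report to $L$, agent $1$ is selected exactly when she reports $L$: agent $1$ of true type $M$ not gaining by reporting $L$ gives $p_1^{(sel~2)}\ge p_1^{(sel~1)}-M$, i.e.\ $a\le M$. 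Intersecting $a\in[M,H]$ with $a\in[L,M]$ forces $a=M$. A mirror-image computation handles agent $2$: fixing agent $1$'s report to $H$ makes agent $2$ selected for reports $L,M$ and yields $b\in[M,H]$, while fixing agent $1$'s report to $M$ makes agent $2$ selected only for report $L$ and yields $b\in[L,M]$; hence $b=M$. Together with the definitions of $a$ and $b$ this is exactly \eqref{eq:fixed-compensation-equivalent:full-proof}.

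The step I expect to require the most care is not deep but bookkeeping-heavy: in each slice one must read off correctly which agent is selected for each of the three possible reports (this is precisely where the tie-breaking hypothesis of the lemma is used) and keep the signs of the inequalities straight. Conceptually the crux is that over a two-values domain the same argument would only give $a,b\in[L,H]$, whereas the third value $M$ supplies a second, oppositely oriented one-sided constraint whose intersection with the first is the single point $M$ --- which is why the characterization is a $\lambda_M$-linear mechanism.
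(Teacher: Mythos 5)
Your proof is correct and follows essentially the same route as the paper's: the four binding constraints you isolate (agent $1$ of true type $M$ deviating $M\to H$ at second report $M$ and $M\to L$ at second report $L$; agent $2$ of true type $M$ deviating $M\to L$ at first report $M$ and $M\to H$ at first report $H$) are exactly the four strategyproofness inequalities the paper uses. The only differences are cosmetic: you make explicit the extension of the payment identification to the tie vectors via Fact~\ref{fac:adj:pay} (which the paper leaves implicit), and you record a couple of redundant one-sided bounds such as $a\le H$.
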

\begin{proof}
Strategyproofness for agent $1$ yields
\begin{align*}
 \theta&=(\take{M},M) & \hat \theta&=(H,\take{M}) & \Rightarrow& & p_1^{(sel~1)} - M&\geq p_1^{(sel~2)} \\
 \theta&=(M,\take{L}) & \hat \theta&=(\take{L},L) & \Rightarrow& & p_1^{(sel~2)} &\geq
p_1^{(sel~1)} - M
\end{align*}
thus implying 
\[p_1^{(sel~1)}=M + p_1^{(sel~2)}.\]
Similarly, strategyproofness for agent $2$ yields
\begin{align*}
 \theta&=(\take{M},M) & \hat \theta &=(M,\take{L}) & \Rightarrow& & p_2^{(sel~1)}  &\geq p_2^{(sel~2)} - M \\
 \theta&=(H,\take{M}) & \hat \theta &=(\take{H},H) & \Rightarrow& & p_2^{(sel~2)} - M &\geq p_2^{(sel~1)}
  \end{align*}
thus implying \[p_2^{(sel~2)} = M + p_2^{(sel~1)}.\]
\end{proof}

\begin{lemma}\label{le:ties:L-diff-M}
  If the mechanism breaks ties as follows 
  \begin{align}
    (\take{L},L)& &   (M,\take{M})&,
  \end{align}
  then the payments satisfy \eqref{eq:fixed-compensation-equivalent:full-proof}.
\end{lemma}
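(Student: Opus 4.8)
The plan is to follow the template of Lemma~\ref{le:ties:same}: first show that the payments collapse to two vectors $p^{(sel~1)}$ and $p^{(sel~2)}$ (depending only on which agent is served), and then pin the two ``service premia'' $p_1^{(sel~1)}-p_1^{(sel~2)}$ and $p_2^{(sel~2)}-p_2^{(sel~1)}$ down to $M$. For the collapse, I would invoke Fact~\ref{fac:adj:pay}: under the present tie-breaking the $(L,L)$ instance serves agent~$1$ and is $A$-equivalent to $(L,M)$ (they differ only in agent~$2$'s type and both serve agent~$1$), so their payments agree; likewise the $(M,M)$ instance serves agent~$2$ and is $A$-equivalent to $(M,L)$. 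Together with \eqref{eq:opt:allocation:payments-solution-based} this shows that for every type vector in the domain the payment is $p^{(sel~1)}$ when agent~$1$ is served and $p^{(sel~2)}$ when agent~$2$ is served. It then remains to prove $p_1^{(sel~1)}-p_1^{(sel~2)}=M=p_2^{(sel~2)}-p_2^{(sel~1)}$, i.e.\ \eqref{eq:fixed-compensation-equivalent:full-proof}.

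Next I would read off two one-sided bounds from strategyproofness (the case $i=j$ in \eqref{eq:bribeproof:breibee}). Agent~$1$ of true type $M$, when agent~$2$ reports $M$: truth-telling lands on the $(M,M)$ tie, where agent~$2$ is served, so agent~$1$ gets $p_1^{(sel~2)}$; deviating to $L$ makes agent~$1$ served at cost $M$, worth $p_1^{(sel~1)}-M$; hence $p_1^{(sel~1)}-p_1^{(sel~2)}\le M$. Symmetrically, agent~$2$ of true type $M$, when agent~$1$ reports $M$: truth-telling serves agent~$2$ (the $(M,M)$ tie), worth $p_2^{(sel~2)}-M$; deviating to $H$ makes agent~$1$ served, leaving agent~$2$ with $p_2^{(sel~1)}$; hence $p_2^{(sel~2)}-p_2^{(sel~1)}\ge M$.

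Finally I would obtain the matching equality from two coalitional deviations of the coalition $\{1,2\}$ in which only agent~$1$ changes her report. From true types $(L,L)$ (tie serves agent~$1$), agent~$1$ deviating to $M$ (so agent~$2$ becomes served) gives, via \eqref{eq:bribeproof:breibee}, $p_1^{(sel~1)}+p_2^{(sel~1)}\ge p_1^{(sel~2)}+p_2^{(sel~2)}$. From true types $(M,M)$ (tie serves agent~$2$), agent~$1$ deviating to $L$ (so agent~$1$ becomes served) gives the reverse inequality. Hence $p_1^{(sel~1)}+p_2^{(sel~1)}=p_1^{(sel~2)}+p_2^{(sel~2)}$, equivalently $p_1^{(sel~1)}-p_1^{(sel~2)}=p_2^{(sel~2)}-p_2^{(sel~1)}$; chaining this with the two bounds forces both to equal $M$, which is \eqref{eq:fixed-compensation-equivalent:full-proof}.

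I expect the only real obstacle to be the following point, which is also why bribeproofness (not just strategyproofness) is needed here. Under this tie-breaking rule an agent of true type $M$ is served \emph{only} when the other agent has type $H$, and from such an instance she cannot unilaterally reach a ``not served'' outcome without landing in the $(H,H)$ tie, whose resolution is not fixed by this lemma; so strategyproofness gives only the one-sided bounds $p_1^{(sel~1)}-p_1^{(sel~2)}\le M$ and $p_2^{(sel~2)}-p_2^{(sel~1)}\ge M$. The coalitional deviations through the $(L,L)$ and $(M,M)$ ties are precisely what couples the two premia and turns these two one-sided bounds into equalities. The remaining care is routine: checking, in each of the four deviations, which agent is served before and after, and that each instance of \eqref{eq:bribeproof:breibee} used is of the required ``one agent changes her report'' form.
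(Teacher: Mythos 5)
Your proposal is correct and follows essentially the same route as the paper's proof: the same applications of Fact~\ref{fac:adj:pay} to collapse the payments, the same two coalitional deviations through the $(L,L)$ and $(M,M)$ ties to get the equal-sum identity \eqref{eq:one-job:three:ties:equal-sum}, and the same two strategyproofness deviations (agent~$1$ to $L$ and agent~$2$ to $H$ from $(M,\take{M})$) to pin the premia to $M$. The only difference is the order of assembly (the paper derives the equal sum first and then the one-sided bounds), which is immaterial.
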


\begin{proof}
We first prove that the sum of the two payments is constant, i.e., that
\begin{equation}
  \label{eq:one-job:three:ties:equal-sum} 
  p_1^{(sel~1)} + p_2^{(sel~1)} =  p_1^{(sel~2)} + p_2^{(sel~2)}.
\end{equation}
First observe that by Fact~\ref{fac:adj:pay}
\begin{eqnarray*}
 p(\take{L},L)=p(\take{L},M) = p^{(sel~1)}, \\
 p(M,\take{M})=p(H,\take{M}) = p^{(sel~2)}.
\end{eqnarray*}
This and bribeproofness yields: 
\begin{align*}
 \theta&=(\take{L},L) & \hat \theta &=(M,\take{L}) & \Rightarrow& & p_1^{(sel~1)} +
p_2^{(sel~1)} &\geq p_1^{(sel~2)} + p_2^{(sel~2)} \\
 \theta&=(M,\take{M}) & \hat \theta &=(\take{L},M) & \Rightarrow& & p_1^{(sel~2)} +
p_2^{(sel~2)} &\geq p_1^{(sel~1)} + p_2^{(sel~1)}
\intertext{which proves \eqref{eq:one-job:three:ties:equal-sum}. Moreover, strategyproofness for agent $1$ yields}
\theta&=(M,\take{M}) & \hat \theta &=(\take{L},M) & \Rightarrow& & p_1^{(sel~2)} &\geq p_1^{(sel~1)} - M.
  \end{align*}
This and
\eqref{eq:one-job:three:ties:equal-sum} yields $p_2^{(sel~2)} \leq
p_2^{(sel~1)} + M$, while the reversed inequality can be obtained as follows:
\begin{align*}
 \theta&=(M,\take{M}) & \hat \theta &=(\take{M},H) & \Rightarrow& & p_2^{(sel~2)} -M &\geq
p_2^{(sel~1)}  
\end{align*}
Thus $p_2^{(sel~2)} = p_2^{(sel~1)} + M$ and by
\eqref{eq:one-job:three:ties:equal-sum} we obtain  $p_1^{(sel~2)} = p_1^{(sel~1)} -
M$. 
\end{proof}

\begin{lemma}\label{le:ties:M-diff-H}
  If the mechanism breaks ties as follows 
  \begin{align*}
    (\take{L},L)& & (H,\take{H})&,
  \end{align*}
  then the payments satisfy \eqref{eq:fixed-compensation-equivalent:full-proof}.
\end{lemma}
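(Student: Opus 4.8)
The plan is to reproduce the structure of the proof of Lemma~\ref{le:ties:L-diff-M}: first fix agent~$1$'s ``selection bonus'' by strategyproofness, then show by bribeproofness that the sum of the two payment vectors does not depend on which agent is selected, and finally combine the two facts. A point worth stressing at the outset is that the hypothesis of this lemma pins down the selected agent only at $(L,L)$ (agent~$1$) and at $(H,H)$ (agent~$2$); the argument I propose never refers to the instance $(M,M)$, so it is valid no matter how the mechanism breaks the remaining tie there. As in the preceding lemmas, Fact~\ref{fac:adj:pay} applied to the tie instances extends \eqref{eq:opt:allocation:payments-solution-based}: $(\take{L},L)$ is $A$-equivalent to $(\take{L},M)$, so $p(\take{L},L)=p^{(sel~1)}$, and $(H,\take{H})$ is $A$-equivalent to $(H,\take{M})$, so $p(H,\take{H})=p^{(sel~2)}$; thus in every instance used below the payment depends only on the selected agent.

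First I would apply strategyproofness of agent~$1$ in two columns of the instance table. Suppose agent~$2$ reports $L$ and agent~$1$ has true type~$M$: truthful reporting leads to $(M,\take{L})$ and gives agent~$1$ utility $p_1^{(sel~2)}$, while misreporting $L$ leads to $(\take{L},L)$ and gives utility $p_1^{(sel~1)}-M$, so $p_1^{(sel~2)}\ge p_1^{(sel~1)}-M$. Now suppose agent~$2$ reports $H$ and agent~$1$ has true type~$M$: truthful reporting leads to $(\take{M},H)$ with utility $p_1^{(sel~1)}-M$, while misreporting $H$ leads to $(H,\take{H})$ with utility $p_1^{(sel~2)}$, so $p_1^{(sel~1)}-M\ge p_1^{(sel~2)}$. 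These two inequalities give $p_1^{(sel~1)}=M+p_1^{(sel~2)}$, the first equality in \eqref{eq:fixed-compensation-equivalent:full-proof}.

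Next I would establish $p_1^{(sel~1)}+p_2^{(sel~1)}=p_1^{(sel~2)}+p_2^{(sel~2)}$ via bribeproofness of the coalition $\{1,2\}$. Starting from the true type vector $(L,L)$, where agent~$1$ is selected by hypothesis, let agent~$1$ misreport $M$; the reported vector becomes $(M,L)$ and agent~$2$ is selected. Since both agents have type~$L$, the total cost term $\sum_{k\in\{1,2\}}A_k(\cdot)\,\theta_k$ equals $L$ in both the before- and after-states, so it cancels in \eqref{eq:bribeproof:breibee}, which then reduces to $p_1^{(sel~1)}+p_2^{(sel~1)}\ge p_1^{(sel~2)}+p_2^{(sel~2)}$. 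Analogously, starting from $(H,H)$, where agent~$2$ is selected by hypothesis, let agent~$1$ misreport $L$; now $(\take{L},H)$ is the resulting instance, the cost term equals $H$ in both states, and \eqref{eq:bribeproof:breibee} gives the reverse inequality. Hence the two sums are equal, and combining with $p_1^{(sel~1)}-p_1^{(sel~2)}=M$ forces $p_2^{(sel~2)}-p_2^{(sel~1)}=M$, the second equality in \eqref{eq:fixed-compensation-equivalent:full-proof}; this exhibits the mechanism as a \linearmechanism{\lambda_M}, exactly as in the other lemmas.

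The part requiring the most care — and the reason strategyproofness alone does not suffice here, unlike in Lemma~\ref{le:ties:L-diff-M} — is the treatment of agent~$2$'s bonus: strategyproofness of agent~$2$ only sandwiches $p_2^{(sel~2)}-p_2^{(sel~1)}$ between $L$ and $M$, or between $M$ and $H$, depending on the unspecified tie at $(M,M)$, so it cannot by itself determine the bonus, and the equal-sum identity obtained from bribeproofness is precisely what closes this gap. Everything else is a direct substitution of the handful of relevant instances of \eqref{eq:opt:allocation} (together with the tie instances $(\take{L},L)$ and $(H,\take{H})$) into the definitions of $A$, $p$ and $u_i$, plus the preliminary check, in each bribeproofness scenario, that the two coalition members share a common type so that the cost terms cancel exactly.
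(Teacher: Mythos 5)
Your proof is correct and follows essentially the same route as the paper's: the same two strategyproofness deviations for agent~$1$ (between $(M,\take{L})\leftrightarrow(\take{L},L)$ and $(\take{M},H)\leftrightarrow(H,\take{H})$) pin down $p_1^{(sel~1)}-p_1^{(sel~2)}=M$, and the same equal-sum identity from coalition deviations at $(L,L)$ and $(H,H)$ transfers the bonus to agent~$2$. The only cosmetic differences are the order of the two steps and your use of $\hat\theta=(\take{L},H)$ instead of the paper's $\hat\theta=(\take{M},H)$ for the second coalition inequality, which works equally well.
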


\begin{proof}
We first prove that the sum of the two payments is constant, i.e., that
\begin{equation}
  \label{eq:one-job:three:ties:equal-sum:second-case} 
  p_1^{(sel~1)} + p_2^{(sel~1)} = p_1^{(sel~2)} + p_2^{(sel~2)}.
\end{equation} 
First observe that by Fact~\ref{fac:adj:pay}
\begin{eqnarray*}
 p(\take{L},L)=p(\take{L},M) = p^{(sel~1)}, \\
 p(H,\take{H})=p(H,\take{M}) = p^{(sel~2)}.
\end{eqnarray*}
This and bribeproofness yields: 
\begin{align*}
 \theta&=(\take{L},L) & \hat \theta &=(M,\take{L}) & \Rightarrow& & p_1^{(sel~1)} +
p_2^{(sel~1)} &\geq p_1^{(sel~2)} + p_2^{(sel~2)} \\
 \theta&=(H,\take{H}) & \hat \theta &=(\take{M},H) & \Rightarrow& & p_1^{(sel~1)} +
p_2^{(sel~1)} &\leq p_1^{(sel~2)} + p_2^{(sel~2)}
\intertext{which proves \eqref{eq:one-job:three:ties:equal-sum:second-case}. Moreover, strategyproofness for agent $1$ yields}
\theta &=(M,\take{L}) & \hat\theta&=(\take{L},L)  & \Rightarrow& & p_1^{(sel~2)} &\geq p_1^{(sel~1)} - M \\
\theta &=(\take{M},H) &  \hat  \theta&=(H,\take{H})& \Rightarrow& & p_1^{(sel~1)} - M &\geq p_1^{(sel~2)}
\end{align*}
Thus $p_1^{(sel~1)} - M = p_1^{(sel~2)}$ and by
\eqref{eq:one-job:three:ties:equal-sum:second-case} we  obtain $p_2^{(sel~2)} -
M = p_2^{(sel~1)}$.
\end{proof}

\begin{proof}[Proof of Theorem~\ref{th:parallel:three-vals:char}]
Consider the cases in which both agents have the same type:
\begin{align*}
    (L,L),& & (M,M),& & (H,H)&.
  \end{align*}
  The proof distinguishes among the following cases:
\begin{enumerate}
 \item \emph{The mechanism breaks ties consistently.} Without loss of generality, we can 
 assume ties are broken in favor of agent $1$ and the theorem thus follows from Lemma~\ref{le:ties:same}.
 \item \emph{The mechanism breaks ties between $(L,L)$ and $(M,M)$ in different ways.} We can always rename the agents so that ties are broken as in Lemma~\ref{le:ties:L-diff-M} which then implies the theorem. 
  \item \emph{The mechanism breaks ties between $(L,L)$ and $(H,H)$ in different ways.} We can always rename the agents so that ties are broken as in Lemma~\ref{le:ties:M-diff-H} which then implies the theorem. 
\end{enumerate}
Note that if ties are broken in different ways between $(M,M)$ and $(H,H)$, then on of the last two cases above arises. The theorem  thus follows. 
\end{proof}

\subsection{Proof of Corollary~\ref{cor:parallel:four-vals:char}}
\begin{proof}
	Suppose there exists a bribeproof mechanism $(A,p)$ over a four-values domain $\Theta^{(L,M,M',H)}$. Clearly this mechanism is also bribeproof over the three-values domain
	$\Theta^{(L,M,H)}$. Thus,
	Theorem~\ref{th:parallel:three-vals:char} implies that, for any $\theta\in \Theta^{(L,M,H)}$, the payments must satisfy 
	\begin{equation*}
		p_i(\theta) = f_i + 
		\begin{cases}
			M & \mbox{if $i$ is selected for types $\theta$,} \\
			0  & \mbox{otherwise.}
		\end{cases}
	\end{equation*}
	We shall prove that this mechanism cannot be strategyproof over the original four-values domain since, for $\theta=(M',M',\ldots,M')$, the selected agent $i$ 
	can improve her utility by reporting $\hat \theta_i=H>M'$. Without loss of generality, suppose the selected agent is agent $1$ and observe that (by Fact~\ref{fac:adj:pay}) the payments must satisfy  
	\begin{align*}
		p_1(M',M',M',\ldots,M')&=p_1(L,M',M',\ldots,M') = p_1(L,H,M',\ldots,M')= \cdots  \\
		&= p_1(L,H,\ldots,H) = f_1 + M,
		\intertext{and}
		p_1(H,M',M',\ldots,M')&=p_1(H,M,M',\ldots,M') = \cdots \\
		&= p_1(H,M,M,\ldots,M) = f_1. 
	\end{align*}
	Thus the utility of agent $1$ is $u_1(\theta;\theta_1)=f_1 + M - M'< f_1 = u_1(\hat \theta;\theta_1)$, meaning that the mechanism is not strategyproof.
\end{proof}

\subsection{Proof of Theorem~\ref{th:fractional-makespan}}
\begin{proof}
	Observe that the algorithm minimizing the min-max fairness must produce the following allocation:
	\begin{equation}
	A_k(\theta)= 
	\begin{cases}
	\frac{H}{H n_L + L n_H} & \mbox{if } \theta_k=L, \\
	\frac{L}{H n_L + L n_H} & \mbox{if } \theta_k=H, 
	\end{cases}
	\label{eq:opt-fractional-schedule} 
	\end{equation}
	where $n_L$ and $n_H$ denote the number of agents of type $L$ and $H$ in $\theta$, respectively.
	We next prove  that such an algorithm $A$ satisfying \eqref{eq:opt-fractional-schedule} must satisfy the conditions  of Theorem~\ref{th:sbribeproof:linear}.

	We first prove bounded influence \eqref{eq:bounded-influence}. By rewriting $\influence{k}$ according to \eqref{eq:opt-fractional-schedule} we have
	\begin{align*}
		\influence{i} =&\frac{H}{H n_L + L n_H} - \frac{L}{H n_L - H + L n_H + L}, \\ 
		\influence{\ell} =&\frac{H}{H n_L + L n_H} - \frac{H}{H n_L - H + L n_H + L}, \\
		\influence{h} =&\frac{L}{H n_L + L n_H} - \frac{L}{H n_L - H + L n_H + L}.
	\end{align*}
	Note that all these quantities are nonnegative and thus bounded influence follows from 
	\begin{align*}
		\influence{i} - \influence{\ell}=&\frac{H-L}{H n_L - H + L n_H + L} >0, \\ 
		\influence{i} - \influence{h} =&\frac{H-L}{H n_L + L n_H} >0.
	\end{align*}

	We next show that \eqref{eq:sbribeproof:sufficient:LL-HH} and \eqref{eq:sbribeproof:sufficient:LH-HL} hold. 
	According to \eqref{eq:opt-fractional-schedule} all agents of the same type get the same amount and, in particular, 
	\begin{align*}
		A_i(L,H,\theta_{-ij})&= \frac{H}{L} A_i(H,L,\theta_{-ij}) & \mbox{and}& & A_j(H,L,\theta_{-ij})&= \frac{H}{L} A_j(L,H,\theta_{-ij}), 
	\end{align*}
	which proves \eqref{eq:sbribeproof:sufficient:LH-HL}.
	As for \eqref{eq:sbribeproof:sufficient:LL-HH}, from \eqref{eq:opt-fractional-schedule} 
	\begin{align*}
		A_i(L,L,\theta_{-ij})-A_i(H,H,\theta_{-ij})&= A_j(L,L,\theta_{-ij})-A_j(H,H,\theta_{-ij})\\ 
		& =  \frac{H}{Hn_L + L n_H} - \frac{L}{Hn_L-2H + Ln_L + 2L} >0. 
	\end{align*}
	This completes the proof.
\end{proof}

\subsection{Proof of Theorem~\ref{th:makespan:no-pmon}}
We first show that \emph{exact} solutions cannot be achieved.

\begin{theorem}\label{th-makespan:no-pmon:exact}
	There is no exact bribeproof \linearmechanism{\lambda} for the  makespan minimization on three agents with two values-domains.
\end{theorem}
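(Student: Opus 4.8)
The plan is to reduce everything to the characterization in Theorem~\ref{th:wokload:bribeproof}, evaluated on the single three-machine instance of Example~\ref{exa:scheduling} (jobs $10,6,6$, with $L=1$ and any fixed $H=2+\epsilon$, $0<\epsilon<2/3$), and to show that the $p$-monotonicity conditions \eqref{eq:pmon:LL}--\eqref{eq:pmon:HH} already clash on that instance for every value of $\lambda$. First I would pin down what an \emph{exact} algorithm is forced to do on the two profiles $(L,L,H)$ and $(L,H,H)$: a short enumeration of the ways of splitting $\{10,6,6\}$ among the machines shows that at $(L,L,H)$ the makespan-optimal allocation is unique up to swapping the two type-$L$ machines and gives the two $6$-jobs to one of them and the $10$-job to the other, while at $(L,H,H)$ the optimum is unique and gives the $10$-job to the type-$L$ machine and one $6$-job to each type-$H$ machine. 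After relabelling, I may therefore assume $A(L,L,H)=(12,10,0)$ and $A(L,H,H)=(10,6,6)$, where coordinate $1$ is the machine kept at $L$, coordinate $2$ the machine whose type we will vary, and coordinate $3$ the machine kept at $H$.

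Next I would compute the influences of the varying agent $i=2$ (so $\theta_{-i}=(L,H)$): its own allocation changes by $10-6=4$, agent $1$'s by $12-10=2$, and agent $3$'s by $0-6=-6$. Feeding these into Theorem~\ref{th:wokload:bribeproof} at the profile $\theta=(L,H,H)$, where $\theta_2=\theta_3=H$ and $\theta_1=L$: condition \eqref{eq:pmon:HH} taken with $h=3$ reads $4\lambda\ge 6\lambda$, forcing $\lambda\le 0$, while condition \eqref{eq:pmon:HL} taken with $\ell=1$ reads $4\lambda\ge 2(1-\lambda)$, forcing $\lambda\ge 1/3$. These cannot both hold, so no $\lambda$-linear mechanism whose algorithm is exact satisfies all conditions of Theorem~\ref{th:wokload:bribeproof}, hence none is bribeproof; since every exact algorithm coincides (after relabelling) with the one above, the theorem follows. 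If one prefers the direct reading, the failure of \eqref{eq:pmon:HL} is the coalition $\{1,2\}$ at true profile $(L,H,H)$ in which agent $1$ bribes agent $2$ to report $L$ (joint gain $(H-L)(2-6\lambda)>0$ for $\lambda<1/3$), and the failure of \eqref{eq:pmon:HH} is the coalition $\{2,3\}$ in which agent $3$ bribes agent $2$ to report $L$ (joint gain $2\lambda(H-L)>0$ for $\lambda>0$); together these cover all $\lambda\in[0,1]$.

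The only step that is not pure arithmetic is the combinatorial claim that the two listed allocations are the \emph{unique} makespan optima, and this is exactly where the bound $\epsilon<2/3$ enters, to discard the competitor that places the $10$-job and a $6$-job on a single type-$L$ machine (load $16$). I expect this bookkeeping, rather than any conceptual difficulty, to be the bulk of the work, so I would isolate it as a small claim before invoking Theorem~\ref{th:wokload:bribeproof}. The quantitative strengthening in Theorem~\ref{th:makespan:no-pmon} will then be obtained by rerunning this argument on near-optimal (rather than optimal) allocations of the same instance, estimating how much an $\alpha$-approximate algorithm can deviate from $(12,10,0)$ and $(10,6,6)$ and checking that the resulting relaxed influences still violate \eqref{eq:pmon:HH}--\eqref{eq:pmon:HL} whenever $\alpha<2/\sqrt{3}$; that is a separate, more delicate estimate.
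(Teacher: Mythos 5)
Your proposal is correct and follows essentially the same route as the paper: it instantiates Theorem~\ref{th:wokload:bribeproof} on the instance of Example~\ref{exa:scheduling} with $i=2$, $\ell=1$, $h=3$, computes the influences $4$, $2$, $-6$, and observes that condition \eqref{eq:pmon:HH} forces $\lambda\le 0$ while \eqref{eq:pmon:HL} forces $\lambda\ge 1/3$ (the paper phrases this as ``$\lambda$ must be $0$, but then \eqref{eq:pmon:HL} fails''). The extra material you include --- the verification that the two optima are unique up to swapping equal-type machines, and the explicit bribing coalitions realizing the violated inequalities --- is consistent with what the paper delegates to Example~\ref{exa:scheduling} and to the proof of Theorem~\ref{th:wokload:bribeproof}.
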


\begin{proof} 
	We show that the conditions of Theorem~\ref{th:wokload:bribeproof} cannot be satisfied by any $\lambda$ for the instance in Example~\ref{exa:scheduling}. 
	For $i=2$, $\ell =1$ and $h=3$, we have
	\begin{align*}
		\influence{i} &= 10 - 6 & \mbox{and}& & \influence{h} &= - 6 
	\end{align*}
	and thus condition \eqref{eq:pmon:HH} of Theorem~\ref{th:wokload:bribeproof} is violated unless $\lambda=0$. In this case, however, condition \eqref{eq:pmon:HL} is violated for $\ell=1$ since
	\[
	\influence{\ell} = 12 - 10 = 2.
	\] 
	Therefore no exact \linearmechanism{\lambda} can be  bribeproof. 
\end{proof}

We conclude this section by observing that Theorem~\ref{th-makespan:no-pmon:exact} can be strengthen so to include mechanisms that \emph{approximate} the optimal makespan. We say that a mechanism approximates the makespan within a factor $\alpha$ if 
its algorithm returns always an allocation whose makespan is at most $\alpha$ times the minimal makespan.

\begin{theorem}[Theorem~\ref{th:makespan:no-pmon} in main text]\label{cor-makespan:no-pmon:inapx}
	No bribeproof \linearmechanism{\lambda} for the  makespan minimization on three agents with two values-domains can approximate the makespan within a factor smaller than $\valrho \approx 1.1547$.
\end{theorem}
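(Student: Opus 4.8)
The plan is to upgrade the argument of Theorem~\ref{th-makespan:no-pmon:exact}: there one used that, on the instance of Example~\ref{exa:scheduling}, the makespan‑optimal allocation is essentially unique for types $(L,L,H)$ and $(L,H,H)$; here I want to show that for every approximation factor $\alpha$ strictly below $\valrho$, an $\alpha$‑approximate algorithm is \emph{forced} onto those same allocations, so that the identical $i$-influences, and hence the same impossibility of satisfying the conditions of Theorem~\ref{th:wokload:bribeproof}, arise. Concretely, I would keep three machines and three jobs of sizes $10,6,6$, and fix the two‑values domain with $L=1$ and $H=2+\epsilon$ for $\epsilon=\valepsilon$, i.e.\ $H=\tfrac{4}{\sqrt3}$ (one checks $0<\epsilon<2/3$, so Example~\ref{exa:scheduling} applies). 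For this instance the optimal makespan is $12$ for types $(L,L,H)$, and $6H=8\sqrt3$ for types $(L,H,H)$ and, symmetrically, for $(H,L,H)$.

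The heart of the proof is an ``approximate uniqueness'' claim: if $\alpha<\valrho$ and $A$ has approximation factor $\alpha$, then $A(L,L,H)\in\{(12,10,0),(10,12,0)\}$, $A(L,H,H)=(10,6,6)$, and $A(H,L,H)=(6,10,6)$. This is a finite check carried out by enumerating the ways to assign the three jobs to the three machines and computing the makespan $\max_k a_k\theta_k$ for each type vector: for types $(L,L,H)$ every allocation other than $(12,10,0),(10,12,0)$ has makespan at least $6H=\valrho\cdot 12$ (the closest competitors being $(10,6,6)$ with makespan $6H$ and $(16,6,0)$ with makespan $16>6H$); for types $(L,H,H)$ every allocation other than $(10,6,6)$ has makespan at least $16=\valrho\cdot 6H$; and symmetrically for $(H,L,H)$. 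Since $\alpha<\valrho$, all these competitors are ruled out.

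Granting the claim, the rest mirrors Theorem~\ref{th-makespan:no-pmon:exact}. If $A(L,L,H)=(12,10,0)$, take $\theta=(L,L,H)$, $i=2$, $\ell=1$, $h=3$; then $\influence{i}=10-6=4$, $\influence{h}=0-6=-6$, and $\influence{\ell}=12-10=2$, so condition \eqref{eq:pmon:HH} forces $\lambda=0$ and then condition \eqref{eq:pmon:HL} reads $0\ge 2$, a contradiction; by Theorem~\ref{th:wokload:bribeproof} no \linearmechanism{\lambda} built on $A$ is bribeproof. If instead $A(L,L,H)=(10,12,0)$, the identical computation goes through with $i=1$, $\ell=2$, $h=3$, using $A(H,L,H)=(6,10,6)$ where the previous case used $A(L,H,H)$ (agents $1$ and $2$, which share the same type, merely swap roles). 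Either way a bribeproof \linearmechanism{\lambda} is impossible, which is the theorem.

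The main obstacle is the calibration hidden in the ``approximate uniqueness'' claim. The ratio of the second‑best to the optimal makespan is $\tfrac{H}{2}$ for types $(L,L,H)$ and $\tfrac{8}{3H}$ for types $(L,H,H)$, and one needs both to exceed $\alpha$ while also keeping $H<8/3$ — beyond that threshold the optimal allocation for $(L,H,H)$ ceases to be unique, since $(16,6,0)$ would also become optimal. The value $H=\tfrac{4}{\sqrt3}$ is exactly where $\tfrac{H}{2}=\tfrac{8}{3H}=\valrho$, which is why $\valrho$ is the best bound this construction yields; everything else is a routine enumeration.
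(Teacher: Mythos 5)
Your proof is correct and is essentially the paper's argument run in the contrapositive direction: the paper assumes bribeproofness, uses the exact-case analysis (Theorem~\ref{th-makespan:no-pmon:exact}) to deduce sub-optimality on at least one of the instances $(L,L,H)$, $(L,H,H)$, and reads off the approximation loss $\min\bigl\{\tfrac{12+6\epsilon}{12},\tfrac{16}{12+6\epsilon}\bigr\}$, optimizing $\epsilon$ at the end, whereas you fix $\epsilon=\valepsilon$ up front, show any ratio below $\valrho$ pins the algorithm to the optimal allocations, and then rerun the exact-case violation of conditions \eqref{eq:pmon:HL}--\eqref{eq:pmon:HH} --- the instance, the second-best-makespan computations, and the calibration $H/2=8/(3H)$ are identical. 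A minor plus of your write-up is that it explicitly disposes of the alternative optimal allocation $(10,12,0)$ for types $(L,L,H)$ via the symmetric instance $(H,L,H)$, a tie-breaking case the paper's proof leaves implicit.
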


\begin{proof}
	We prove the result by using the same instances used to prove Theorem~\ref{th-makespan:no-pmon:exact}, in which we set the 
	parameter $\epsilon$ (recall that $L=1$ and $H=2+\epsilon$) so to obtain the best lower bound. Since no bribeproof \linearmechanism{\lambda} can return the optimum in both type vectors $\theta=(L,L,H)$ and $\hat \theta=(L,H,H)$, it must return a suboptimal solution in at least one case:
	\begin{enumerate}
		\item If the mechanism does not return the optimum in the first instance, then agent of type $H$ must get at least one job of size $6$, and the resulting makespan will be $6H=12 + 6\epsilon$ (or larger). That is, the  mechanism does not approximate the makespan within a factor better than  $\alpha_1:=\frac{12+6\epsilon}{12}$.
		\item If the mechanism does not return the optimum in the second instance, then  the best alternative solution is to allocate the job of weight $10$ together with a job of weight $6$ to the agent of type $L$, and the resulting type will be $16L=16$. That is, the  mechanism does not approximate the makespan within a factor better than  $\alpha_2:=\frac{16}{12+6\epsilon}$.
	\end{enumerate}
	Thus the mechanism cannot approximate the makespan within a factor better than  $\alpha=\min\{\alpha_1,\alpha_2\}$. For obtaining the best lower bound we choose $\epsilon$ so to maximize this quantity: For $\epsilon=\valepsilon$ we have $\alpha_1=\alpha_2=1 + \epsilon/2 = \valrho \approx 1.1547.$
\end{proof}

\subsection{Proof of Theorem~\ref{th:sp:no-three-vals:triangle}}
\begin{proof}
Consider the network in Figure~\ref{fig:sp:no-three-vals-sp:triangle} along with 
and a tree-values domain $\Theta^{(L,M,H)}$ satisfying 
\begin{align*}
 2L &< M, &  L+H &< 2M, & \mbox{and}& & L+M &<
H.
\end{align*}
(For instance, we can take $L=1$, $M= 4$ and $H=6$.) Consider the following
two type vectors along with the (unique) path of minimal cost
\begin{align*}
	&\stpathtriangleup{L}{M}{H} & &\stpathtriangledown{L}{M}{M}
\end{align*}
and let $p^{(top)}$ and $p^{(bot)}$ denote the corresponding payments,
\begin{align*}
 p^{(top)} &:= p(L,M,H) & \mbox{and}& & p^{(bot)}&:= p(L,M,M).
\end{align*}
(These payments correspond to the case the mechanism selects the top path and the bottom
path for these two type vectors). We shall prove below that bribeproofness forces the following two conditions:
\begin{eqnarray}
  p_3^{(bot)} - p_3^{(top)} & = &  M, \label{eq:no-three-vals-sp:payments-down}\\
  p_2^{(top)}  -  p_2^{(bot)} & = & L.
 \label{eq:no-three-vals-sp:payments-up}
\end{eqnarray}
%
From this we shall conclude that bribeproofness is violated by considering $\theta = (L,M,H)$ and $\hat \theta = (L,M,M)$. 
In this case, 
\begin{eqnarray*}
  u_2(\theta;\theta_2) - u_2(\hat \theta;\theta_2)  & = & p_2^{(top)}  - M - p_2^{(bot)} = L - M\\ 
  u_3(\theta;\theta_2) - u_3(\hat \theta; \theta_2)  & = & p_3^{(top)} - p_3^{(bot)} + H = H - M
\end{eqnarray*}
By the assumption $L+H < 2M$ the above two identities imply
\begin{align*}
 u_2(\theta;\theta_2) + u_3(\theta;\theta_3) < u_2(\hat \theta;\theta_2) + u_3(\hat \theta;\theta_3)
\end{align*}
which means that the mechanism is
not bribeproof.

To complete the proof, we make use of the following two additional type vectors along with the (unique) path of minimal cost:
\begin{align*}
	&\stpathtriangleup{L}{L}{M} & &\stpathtriangledown{L}{L}{L}
\end{align*}
We first show that also in this case the payments are $p^{(top)}$ and $p^{(bot)}$ defined above, that is
\begin{align*}
 p(L,L,M) &=p^{(top)}  & \mbox{and}& & p(L,L,L)&=p^{(bot)}.
\end{align*}
To this end, we apply 
Fact~\ref{fac:adj:pay} in each case:
\begin{enumerate}
  \item Type vector $(L,M,H)$ is $A$-equivalent to $(L,L,H)$ and the latter is $A$-equivalent to
     $(L,L,M)$. Fact~\ref{fac:adj:pay} implies $p(L,M,H) = p(L,L,M)$, that is $p(L,L,M) =p^{(top)}$.
  \item Type vector $(L,M,M)$ is $A$-equivalent to  $(L,M,L)$  and the latter is $A$-equivalent to
    $(L,L,L)$. Fact~\ref{fac:adj:pay} implies $p(L,M,M) = p(L,L,L)$, that is $p(L,L,L) =p^{(bot)}$. 
\end{enumerate}

\paragraph{Proof of \eqref{eq:no-three-vals-sp:payments-down}.}
Strategyproofness for agent $3$ requires the following two conditions:
\begin{align*}
 p_3(L,M,M) - M &\geq p_3(L,M,H) & \mbox{and}& &
 p_3(L,L,M)  &\geq p_3(L,L,L) - M, \\
 \intertext{that is}
 p_3^{(bot)} - M &\geq p_3^{(top)} & \mbox{and}& &
 p_3^{(top)}  &\geq p_3^{(bot)} - M,
\end{align*}
which clearly imply \eqref{eq:no-three-vals-sp:payments-down}.

\paragraph{Proof of \eqref{eq:no-three-vals-sp:payments-up}.} 
We first observe that
\[
  p(M,L,M) = p(M,M,M) = p(L,M,M) = p^{(bot)}
\]
because these are three $A$-equivalent type vectors and thus Fact~\ref{fac:adj:pay} applies.

Suppose first that $p_2^{(top)} - p_2^{(bot)} > L$. We show that in this case bribeproofness is violated for 
$\theta=(M,L,M)$ and $\hat \theta=(M,L,H)$ since
\begin{align}
 u_2(M,L,M;L) + u_3(M,L,M;M) &= p_2(M,L,M) + p_3(M,L,M) - M  \nonumber \\
    &= p_2^{(bot)} + p_3^{(bot)} - M, \label{eq:no-three-vals-sp:payments-up:true} 
 \intertext{and}
 u_2(M,L,H;L) + u_3(M,L,H;M) &= p_2(M,L,H) - L + p_3(M,L,H)  \nonumber \\
    &= p_2^{(top)} - L + p_3^{(top)}. \label{eq:no-three-vals-sp:payments-up:false}
\end{align}
By \eqref{eq:no-three-vals-sp:payments-down} and by the hypothesis $p_2^{(top)} - p_2^{(bot)} > L$, the quantity in \eqref{eq:no-three-vals-sp:payments-up:false} is larger than the quantity in \eqref{eq:no-three-vals-sp:payments-up:true}, which clearly violates bribeproofness.

Similarly, if $p_2^{(top)} - p_2^{(bot)} < L$ then  bribeproofness is violated for 
$\theta=(L,L,M)$ and $\hat \theta=(L,L,L)$ since
\begin{align}
 u_2(L,L,M;L) + u_3(L,L,M;M) &= p_2(L,L,M) - L + p_3(L,L,M)  \nonumber \\
    &= p_2^{(top)} -L + p_3^{(top)}, 
 \intertext{and}
 u_2(L,L,L;L) + u_3(L,L,L;M) &= p_2(L,L,L) + p_3(L,L,L) - M  \nonumber \\
    &= p_2^{(bot)} + p_3^{(bot)} - M. 
\end{align}
Again, by \eqref{eq:no-three-vals-sp:payments-down} and by the hypothesis $p_2^{(top)} - p_2^{(bot)} < L$, the latter quantity is larger and thus bribeproofness is violated.
\end{proof}

\end{document}